\newtheorem{lemma}{Lemma}[section]
\newtheorem{theorem}{Theorem}[section]
\begin{document}
%
\title{Joint Cache Resource Allocation and Request Routing for In-network Caching Services}



%

\author{\IEEEauthorblockN{Weibo Chu\IEEEauthorrefmark{1}, Mostafa Dehghan\IEEEauthorrefmark{2}, John C.S. Lui\IEEEauthorrefmark{3}, Don Towsley\IEEEauthorrefmark{4}, Zhi-Li Zhang\IEEEauthorrefmark{5}}

\IEEEauthorblockA{\IEEEauthorrefmark{1}Northwestern Polytechnical University, Xi'an, China}

\IEEEauthorblockA{\IEEEauthorrefmark{2}Google Inc., Cambridge, USA}

\IEEEauthorblockA{\IEEEauthorrefmark{3}The Chinese University of Hong Kong, Hong Kong}

\IEEEauthorblockA{\IEEEauthorrefmark{4}University of Massachusetts, Amherst, USA}

\IEEEauthorblockA{\IEEEauthorrefmark{5}University of Minnesota, Minneapolis, USA}

Email: wbchu@nwpu.edu.cn, mdehghan@google.com, cslui@cse.cuhk.edu.hk, towsley@cs.umass.edu, zhzhang@cs.umn.edu

}



\maketitle

\begin{abstract}
In-network caching is recognized as an effective solution to offload content servers and the network. A cache service provider (SP) always has incentives to better utilize its cache resources by taking into account diverse roles that content providers (CPs) play, e.g., their business models, traffic characteristics, preferences. In this paper, we study the cache resource allocation problem in a Multi-Cache Multi-CP environment. We propose a cache partitioning approach, where each cache can be partitioned into slices with each slice dedicated to a content provider. We propose a content-oblivious request routing algorithm, to be used by individual caches, that optimizes the routing strategy for each CP. We associate with each content provider a utility that is a function of its content delivery performance, and formulate an optimization problem with the objective to maximize the sum of utilities over all content providers. We establish the biconvexity of the problem, and develop decentralized (online) algorithms based on convexity of the subproblem. The proposed model is further extended to bandwidth-constrained and minimum-delay scenarios, for which we prove fundamental properties, and develop efficient algorithms. Finally, we present numerical results to show the efficacy of our mechanism and the convergence of our algorithms.

\end{abstract}




%
\IEEEpeerreviewmaketitle

\section{Introduction}

In recent years, we have witnessed a dramatic increase in traffic over the Internet. It was reported that global IP traffic has grown 10 times from 2007 to 2015, and it will continue to increase threefold by 2020~\cite{index2016cisco}. Among the various types of traffic generated by different applications, traffic from wireless and mobile devices accounts for a significant portion, i.e., according to~\cite{cisco2017global} global mobile and wireless data traffic in 2016 amounted to 47 exabytes per month, that is 49\% of the total IP traffic.

Current Internet faces significant challenges in serving this ``Big Data'' traffic. The host-to-host communication paradigm makes it rather inefficient to deliver content to geographically distributed users due to repeated transmissions of content, which results in unnecessary bandwidth wastage and prolonged user-perceived delays. The connection-oriented communication model also provides little or poor support for user mobility -- an important feature of future networks.

The overwhelming data traffic and limitations of the current Internet has led to a call for content-oriented networking solutions. Examples include CDNs (Content Delivery Networks) and ICNs~\cite{jacobson2009networking} (Information-Centric Networks). Both advocate caching (either at network edge or network-wide) as part of network infrastructure, where content can be opportunistically cached so as to bring significant benefits such as bandwidth saving, short delays, server offloading. Due to its fundamental role in global content delivery, and the fact that cache storages are always scarce as compared to the amount of content transmitted over the Internet, how to efficiently utilize cache resources becomes a significant research topic. A furry of recent studies focus on this area, such as modeling and characterizing caching dynamics~\cite{muscariello2011bandwidth, zhang2013caching}, design and performance evaluation of caching mechanisms~\cite{garetto2016unified, carofiglio2011experimental}, to name a few.

In this paper, we envision that besides maximizing cache performance (measured in hit rate or miss probability) as most previous work concentrated, we also study how cache resources in network should be utilized in a way that better supports general management purposes (e.g., QoS, fairness). Particularly, since content providers (CPs) have business relations with cache providers, a cache provider always has incentives to utilize its cache resources fully by taking into account diverse roles that CPs play in the market, e.g., their heterogeneous traffic characteristics, business models, QoS requirements. Baring this in mind, in this paper we study the problem of allocating cache resources among multiple content providers.
We consider the problem in a ``Multi-CP Multi-Cache'' environment, where there are multiple cache resources distributed at different network locations serving user requests from multiple content providers. This is exactly the same setting for a variety of networking applications, such as CDNs, wireless/femtocell networks, web-cache design, and most recently, ICNs. 
Since there are multiple paths between each content provider and its end-users through caches, it naturally leads to a problem of jointly optimizing cache resource allocation and request routing. However, achieving system optimum by this joint optimization with the objectives of, e.g., maximizing network utility, poses a significant challenge since the problem is inherently combinatorial and NP-hard~\cite{dehghan2017complexity, ioannidis2017jointly, yeh2014vip}, and thus some optimization algorithms are needed to solve these problems efficiently (with low complexity) and practically (in a decentralized manner).

In this work, we propose a joint cache partitioning and cache-level content-oblivious request routing scheme, where we allow a cache provider to partition its caches into slices with each slice dedicated to a content provider, and each content provider routes its requests to caches it connects so to maximize its own utility. Note that there are two advantages of the proposed scheme: 1) cache partitioning restricts content contention for cache space into partitions for each CP, and hence it decouples the interactions among them and also provides a natural means for the cache manager to tune the performance for each CP; 2) besides its simplicity due to content-obliviousness (less state), cache-level request routing provides a unified request pattern seen by caches, which leads to nice properties, i.e., the hit probability of each content is solely affected by allocated cache amount, and the hit rate of each CP is linear in traffic volume directed to caches. Overall, our scheme is easy-to-implement and is suitable for cache resource management.

To abstract business relations between content providers and a cache provider, we associate with each CP a utility that is a function of its content delivery performance. We formulate an optimization problem in which the objective is to maximize the weighted sum of utilities over all content providers through proper cache partitioning and request routing. We prove that the formulated problem has a biconvexity structure, and hence can be effectively solved by existing algorithms~\cite{gorski2007biconvex}. We further prove that, with our proposed routing scheme, the optimal solution to the formulated problem has a special request routing configuration, i.e., all requests of each CP are directed to one cache it connects. This property together with the convexity of the resource-allocation subproblem makes it possible to design decentralized (online) algorithms to achieve optimum.

To illustrate that our model actually provides a general framework for cache resource allocation, we extend it to bandwidth-constrained and delay optimization scenarios, where there are bandwidth limitations between caches and content providers, and where the goal is to optimize content delivery latency. We formulate optimization problems for the two scenarios, and establish the same biconvexity property. In addition, we discover interesting phenomena, i.e., under bandwidth limitation the optimal solution is the one such that each CP directs its requests to at most one cache at the volume less than the maximum volume, and it either does not direct or directs requests at the maximum volume to the other caches. Based on these fundamental properties, efficient algorithms can be devised.

In summary, we make the following contributions:

\begin{itemize}

\item We propose a joint cache partitioning and cache-level content-oblivious request routing scheme in the context of multiple content providers and multiple caches, and formulate a utility-based optimization framework for cache resource management.

\item We prove fundamental properties of the formulated problem, obtain its optimal routing structure, and then develop decentralized algorithms.

\item Using utility-based framework, we further consider bandwidth-constrained and delay optimization scenarios. We formulate optimization problems for the two extensions, show that they also have nice properties which lead to efficient algorithms design.

\item We perform numerical studies to validate the efficacy of our mechanism, and demonstrate convergence of the proposed decentralized algorithms to optimal solution.
\end{itemize}

The remainder of this paper is organized as follows. We review related work in Section~\ref{sec:related_work}.  Section~\ref{sec:problem_setting} describes problem setting and basic model. In Section~\ref{sec: problem_formulation} we formulate the joint cache resource allocation and request routing problem, prove its fundamental properties by analyzing its problem structure. In Section~\ref{sec:decentralized_mechansims} we develop decentralized (online) algorithm for implementing utility-maximizing cache allocation. Section~\ref{sec:evaluation} presents numerical results and Section~\ref{sec:discussion} discusses future research directions. We conclude the paper in Section~\ref{sec:conclusion}.


\section{Related Work}
\label{sec:related_work}

The issue of cache resource allocation and management has been extensively studied in the context of CPU and memory caches (i.e., see~\cite{qureshi2006utility, kim2004fair} and the references therein). Clearly, the characteristics of the cache workload and problem settings are quite different from the networking environment, so that the techniques and design choices developed therein cannot be readily applied to our problem.

In the context of web caching, Kelly et al.~\cite{kelly1999biased} proposed a biased replacement policy for web caches to implement differentiated quality-of-service (QoS) by prioritizing cache space to servers. Ko et al.~\cite{ko2003scalable} presented a scalable QoS architecture for a shared cache storage which guarantees hitrates to multiple competing classes. Lu et al.~\cite{lu2004design} implemented an architecture for supporting differentiated caching services and adopted a control-theoretical approach to manage cache resources. Feldman and Chuang~\cite{feldman2002service} proposed a QoS caching scheme that achieves service differentiation through preferential storage allocation and objects transitions across priority queues. A general cache partitioning model that integrates both QoS classes, content priority and popularity is also presented in~\cite{feng2005general}.

In recent years, a significant research effort has been dedicated to the cache resource management issue in information-centric networks. Rossi and Rossini~\cite{rossi2012sizing} proposed to allocate content storages heterogeneously across the network by considering graph-related centrality metrics. Psaras et al.~\cite{psaras2012probabilistic} proposed probabilistic caching scheme and their studies suggested to put more cache resources at the network edge. Similarly, Fayazbakhsh et al.~\cite{fayazbakhsh2013less} demonstrated through simulations that most of performance benefits can be achieved by edge caching. Wang et al.~\cite{wang2013optimal} studied the problem of optimal cache resource allocation to network nodes by formulating it as a content placement problem.

Cache resource allocation among content providers in network for management purposes (e.g., QoS, fairness) is a new research topic. Araldo et al. in~\cite{araldo2016stochastic} adopted content-oblivious cache partitioning approach to maximize the bandwidth savings provided by the ISP cache for handling content encryption. While they focused on single-cache allocation, the problem we study here is in a Multi-Cache (and possibly with multiple service providers) environment. Hoteit et al. in~\cite{hoteit2016fair} proposed a game-theoretic cache allocation approach to implement fairness among CPs. Exact traffic information of each CP is required to solve the problem. In our previous work~\cite{chu2016allocating}, we proposed a cache partitioning approach to maximize aggregate network utilities over content providers, and demonstrated that cache partitioning actually provides performance gain as compared to sharing the cache with traditional LRU policy. In this work, we extend the problem setting to a broader Multi-Cache environment with more general cache management policies (under some mild conditions), and which unavoidably involves content request routing and cache selection issues. Therefore, the problem studied in~\cite{chu2016allocating} can be regarded as a special case.

To the best of our knowledge, this is the first work that addresses the joint cache resource allocation and request routing problem in the context of Multi-CP Multi-Cache network environment.


\section{Problem Description and Basic Model}
\label{sec:problem_setting}

\subsection{Design Space of In-network Caching Systems}

When we consider in-network caching problems, two issues arise naturally: {\it{caching policy}} and {\it{routing strategy}}. 
The former refers to the rules how content objects are placed in cache storage, and the latter is how user requests are directed to caches.
Different choices of the two parameters form a design space for in-network caching systems, and finding an appropriate design choice for the best system performance is always highly desired. Yet the problem is extremely challenging due to complex dynamics of the system.
Take for example a simple caching system where there is only one cache serving user requests for content objects from one content provider (CP).
It is well known that when the request stream is stationary, the best caching policy is to hold the top $C$ most popular content objects in cache, where $C$ is the cache size; when the request stream is non-stationary, dynamic caching polices such as LRU, FIFO, etc, are preferred.

Things get complicated when there are two or more content providers.
Similar results can be obtained, i.e., static caching leads to the best system performance if the {\it{aggregate}} request stream is stationary.
However, with multiple content providers, there is a new design choice with respect to how to utilize the cache resource: {\it{partitioning}} vs {\it{non-partitioning (sharing)}}.
Unlike the way that cache resource as a whole is contended by different providers, now the cache can be divided into multiple slices and each slice can be dedicated to a content provider.
In this case, a new question arises: should we partition the cache or should it be used as a whole piece of storage?
If the answer is ``we should partition it'', then another question: how much cache resource should be allocated to each CP for the optimal system performance?
In our previous work~\cite{chu2016allocating} we have partially addressed these two questions.
Surprisingly, it has been shown that for stationary request streams, sharing the cache with traditional policies such as LRU is statistically equivalent to partitioning it into a specific way, which makes cache sharing sub-optimal as compared to cache partitioning.

An even more complicated scenario is when there are multiple caches and multiple content providers, as shown in Figure~\ref{Multi_Cache_Topology}.
In addition to caching policy, routing strategy becomes an important design parameter for such networks.
The routing strategy can be {\it{content-aware}} and {\it{content-oblivious}}, depending on whether each CP needs to maintain the information of where content objects are served.
More specifically, under content-aware routing each CP routes requests for a content to the cache that serves it, while under content-oblivious routing it simply forwards user requests to caches that are allocated to it.
Obviously, routing strategy should match caching policy so as to maximize system performance, i.e., static caching with content-aware routing leads to the best system performance for stationary request streams.
Furthermore, caches can be {\it{cooperative}} or {\it{non-cooperative}}, where cooperative caching requires state exchange (and traffic if necessary) between caches. 
``Big Cache''~\cite{Eman2017icdcs}, for example, abstracts the multiple caches into one single big storage and caches work collaboratively to improve system performance.
Finally, it is worthy to note that other factors from real world such as delay, bandwidth constraints also play important roles in designing a practical caching system, which makes the problem even more complicated.

In this paper, instead of fully exploring the design space of in-network caching systems, we mainly focus on non-cooperative Multi-Cache Multi-CP system with cache partitioning and content-oblivious routing schemes, and address the corresponding critical design problems within a general framework.

\subsection{Problem Description}

We now formally introduce our problem. Consider a network (e.g., a single Autonomous System) where there are $M$ (edge) caches that serve user requests for the content from $K$ content providers (CPs). These caches are managed by a third-party network provider, referred to as the cache manager (or cache provider) hereafter. Content providers have business relations with the cache manager and pay for cache resources. To efficiently utilize cache resources and maximize revenue, we allow the cache manager to partition its cache into multiple slices and allocate them to content providers.


Meanwhile, given cache slices, each content provider can determine the route of its user requests to these storages for its own interest.
In this work, we consider cache-level content-oblivious request routing scheme, where each CP probabilistically distributes its user requests to the allocated caches.
The highlights of this routing scheme is twofold: 1) it is simple and has low complexity (less state) and 2) the content popularity patterns seen by caches allocated to the same CP are identical, which further decreases the computational complexity of our proposed model.
In the following section, we will show with formal proof that this routing scheme also leads to nice structural properties of the optimal solution.

Given the above setting, we further associate with each content provider a utility that is a concave, increasing function of its content delivery performance. We seek answers to the following two questions: 1) how should we partition and allocate cache resources to content providers? and 2) how should user requests be distributed to maximize the overall network cache utilization efficiency, or the weighted sum of utilities over all content providers (i.e., to implement different notion of fairness among CPs)? The problem is thus a joint cache resource allocation and request routing problem.

\begin{figure}
  \centering
  \includegraphics[scale=0.42]{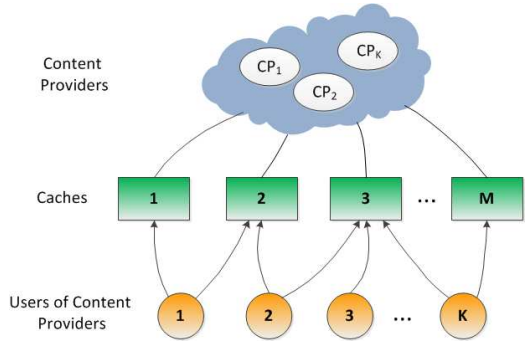}\\
  \caption{A network of multiple caches and content providers.}\label{Multi_Cache_Topology}
\end{figure}

\subsection{Cache Characteristic Time}
\label{characteristic_time}
Che et al. \cite{che2001analysis} introduced the notion of cache characteristic time. Based on this concept, the hit probability of a file\footnote{We use content and files interchangeably throughout this paper.} denoted as $o$ in an LRU cache with Poisson arrivals can be approximated by

\begin{equation}
  \label{eq:hitprob_lru}
  o(\lambda_i,T)=1-e^{-\lambda_{i}T},  
\end{equation}

\noindent where $\lambda_i$ is the request rate for file $i$, and $T$ is a constant denoting the characteristic time of the LRU cache with size $C$. $T$ can be computed as the unique solution to the equation

\begin{equation}
  \label{eq:cache_size}
  \sum_{i=1}^{N}o(\lambda_i,T)=C,
\end{equation}

\noindent where $N$ is the number of files in system. The cache hit rate is expressed as

\[h=\sum_{i=1}^{N}\lambda_io(\lambda_i,T).\]

The characteristic time approximation has proven to be an effective tool for cache performance evaluation~\cite{fricker2012versatile}\cite{leonardi2015least}. Besides LRU, it also applies to other caching policies such as FIFO, RANDOM, etc~\cite{garetto2016unified}. For example, FIFO and RANDOM have the same hit probability expressed as follows:

\begin{equation}
  \label{eq:hitprob_fifo_random}
  o(\lambda_i,T)=1-\frac{1}{1+\lambda_{i}T}.
\end{equation}

In this paper, we consider caching policies that can be well modeled using this characteristic time approximation. We rewrite $o(\lambda_i,T)$ as $o_i(\lambda_i,C)$ to explicitly denote that the hit probability is for file $i$ and it is a function of the cache size $C$. We assume that the hit rate $h$ is concave and increasing in $C$. It can be proved that this assumption holds for policies such as LRU, FIFO and Random.

\begin{theorem}
  \label{thrm:policy_concavity}
  Hit rate is a concave function of cache size for LRU, FIFO and Random policies.
\end{theorem}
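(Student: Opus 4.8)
The plan is to treat the three policies uniformly by working with the characteristic time $T$ as an implicit function of the cache size $C$, and then to express the hit rate $h$ as a function of $C$ through $T$. First I would fix notation: for a given policy the per-file hit probability has the form $o_i(\lambda_i,T) = g(\lambda_i T)$, where $g(x) = 1 - e^{-x}$ for LRU and $g(x) = x/(1+x) = 1 - 1/(1+x)$ for FIFO/RANDOM. The defining relation \eqref{eq:cache_size} reads $\sum_{i=1}^N g(\lambda_i T) = C$, and the hit rate is $h(C) = \sum_{i=1}^N \lambda_i\, g(\lambda_i T)$. Since $g$ is strictly increasing on $[0,\infty)$, the left side of \eqref{eq:cache_size} is a strictly increasing function of $T$, so $T = T(C)$ is well-defined, differentiable (by the implicit function theorem, as $g$ is smooth with nonvanishing derivative), and strictly increasing in $C$. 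This immediately gives monotonicity: $h'(C) = \bigl(\sum_i \lambda_i^2 g'(\lambda_i T)\bigr)\, T'(C) > 0$.

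The core of the argument is concavity, i.e. $h''(C) \le 0$. Differentiating \eqref{eq:cache_size} once gives $T'(C) = 1 / \sum_i \lambda_i g'(\lambda_i T)$; write $S_1(T) = \sum_i \lambda_i g'(\lambda_i T)$ and $S_2(T) = \sum_i \lambda_i^2 g'(\lambda_i T)$, so $h'(C) = S_2(T)\,T'(C) = S_2(T)/S_1(T)$. Then
\begin{equation}
  h''(C) = \frac{d}{dC}\!\left(\frac{S_2(T)}{S_1(T)}\right) = \frac{S_2'(T)S_1(T) - S_2(T)S_1'(T)}{S_1(T)^2}\, T'(C) = \frac{S_2'(T)S_1(T) - S_2(T)S_1'(T)}{S_1(T)^3},
\end{equation}
where $S_1'(T) = \sum_i \lambda_i^2 g''(\lambda_i T)$ and $S_2'(T) = \sum_i \lambda_i^3 g''(\lambda_i T)$. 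Since $S_1(T) > 0$, it remains to show $S_2'(T)S_1(T) - S_2(T)S_1'(T) \le 0$. The key structural fact is that for both $g(x) = 1 - e^{-x}$ and $g(x) = 1 - 1/(1+x)$ we have $g'(x) > 0$ and $g''(x) < 0$, and moreover the ratio $-g''(x)/g'(x)$ is a \emph{nonincreasing} function of $x$ (for LRU it is the constant $1$; for FIFO/RANDOM it is $2/(1+x)$, which is decreasing). I would prove the inequality by rewriting it with the weights $w_i := \lambda_i^2\bigl(-g''(\lambda_i T)\bigr) \ge 0$ and the quantities $a_i := \lambda_i$, $b_i := -g'(\lambda_i T)/g''(\lambda_i T) = 1/\rho(\lambda_i T) > 0$ where $\rho(x) := -g''(x)/g'(x)$; after factoring one checks that $S_2'S_1 - S_2 S_1'$ reduces to a Chebyshev-type sum $\bigl(\sum_i w_i a_i\bigr)\bigl(\sum_i w_i a_i b_i\bigr) - \bigl(\sum_i w_i a_i^2\bigr)\bigl(\sum_i w_i b_i\bigr)$, which is $\le 0$ by the Chebyshev sum inequality whenever the sequences $a_i$ (i.e.\ $\lambda_i$) and $b_i = 1/\rho(\lambda_i T)$ are similarly ordered — and they are, because $\rho$ is nonincreasing, so larger $\lambda_i$ gives larger $b_i$.

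The main obstacle I anticipate is the bookkeeping in that last step: matching the algebra so that the combination $S_2'S_1 - S_2 S_1'$ genuinely collapses to a clean Chebyshev/covariance form, and being careful about signs (two factors of $g'' < 0$). An alternative, possibly cleaner, route for the two specific policies is a direct differentiation: for FIFO/RANDOM one has the closed form $h(C) = \sum_i \lambda_i \bigl(1 - \tfrac{1}{1+\lambda_i T}\bigr)$ and $C = \sum_i\bigl(1 - \tfrac{1}{1+\lambda_i T}\bigr)$, and one can substitute and grind; for LRU the exponential form is similarly tractable. I would present the unified $\rho$-monotonicity/Chebyshev argument as the main proof and, if space permits, remark that it extends to any policy whose per-file hit probability is $g(\lambda_i T)$ with $g' > 0$, $g'' < 0$, and $-g''/g'$ nonincreasing.
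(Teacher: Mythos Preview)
Your overall strategy is sound and does lead to a proof, but it is organized differently from the paper's. The paper also works through the characteristic time, but instead of forming the ratio $h'(C)=S_2/S_1$ it studies $\partial^2 o(\lambda_i,T)/\partial C^2$ directly: differentiating $C=\sum_i o(\lambda_i,T)$ twice gives $\sum_i \partial^2 o_i/\partial C^2=0$, and two short ordering lemmas show that the sign of $\partial^2 o_i/\partial C^2$ is monotone in $\lambda_i$ (larger rates force the negative sign, smaller rates the positive). Weighting a zero-sum, sign-ordered sequence by the similarly ordered $\lambda_i$ then yields $\sum_i \lambda_i\,\partial^2 o_i/\partial C^2\le 0$, which is $h''(C)\le 0$. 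Your covariance/Chebyshev reduction is the same rearrangement idea packaged more abstractly; its advantage is that it isolates a single structural hypothesis on $g$ that can be checked once, whereas the paper verifies its ordering lemmas policy by policy.

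One concrete correction to the bookkeeping you flagged: with $w_i=\lambda_i^2(-g''_i)$, $a_i=\lambda_i$, $b_i=g'_i/(-g''_i)$, the combination $(\sum w_ia_i)(\sum w_ia_ib_i)-(\sum w_ia_i^2)(\sum w_ib_i)$ does \emph{not} equal $S_2'S_1-S_2S_1'$. A clean reduction is instead to take weights $u_i=\lambda_i g'(\lambda_iT)>0$ and observe
\[
S_2'S_1 - S_2S_1' \;=\; \Bigl(\sum_i u_i\Bigr)\Bigl(\sum_i u_i\,\lambda_i\,c_i\Bigr) - \Bigl(\sum_i u_i\,\lambda_i\Bigr)\Bigl(\sum_i u_i\,c_i\Bigr),
\qquad c_i := \frac{\lambda_i g''(\lambda_iT)}{g'(\lambda_iT)} = -\lambda_i\rho(\lambda_iT),
\]
i.e.\ $(\sum_i u_i)^2\,\mathrm{Cov}_u(\lambda_i,c_i)$. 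This is $\le 0$ exactly when $\lambda_i$ and $c_i$ are oppositely ordered, which requires $x\mapsto x\rho(x)$ to be nondecreasing --- a different (and in general incomparable) hypothesis from the ``$\rho$ nonincreasing'' you stated. For LRU one has $x\rho(x)=x$ and for FIFO/Random $x\rho(x)=2x/(1+x)$, both nondecreasing, so the argument goes through once the algebra is straightened out.
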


\begin{proof}
  See Appendix~\ref{sec:proof_concavity}.
\end{proof}


\section{Joint Cache Resource Allocation and Request Routing}
\label{sec: problem_formulation}
In this section, we present problem formulation for the joint cache resource allocation and request routing problem.
We analyze the structural property of the problem, establish its biconvexity, and give the optimal request routing strategy. 

\subsection{Problem Formulation}
\textbf{Assumptions.} We make the following assumptions: 1) content providers serve equal-size disjoint files; 2) requests for each file arrive according to a Poisson process; and 3) each CP $k$ is associated with a utility $U_k(h_k)$ that is a concave, increasing function of its \emph{aggregate} request hit rate $h_k$.

Denote by $F_k=\{f_{1k},f_{2k},\ldots,f_{N_kk}\}$ the set of $N_k$ different files that CP $k$ serves, and $\lambda_{ik}$ the request rate for file $f_{ik}$. Define $\mathrm{\textbf{A}} = \left[ {a_{km}} \right]$ as a zero-one matrix denoting the connections between CPs and caches, with $a_{km}=1$ if requests of CP $k$ can be routed to cache $m$, and $a_{km}=0$ otherwise. Request routing strategy is described by a matrix $\mathrm{\textbf{P}} = \left[ {p_{km}} \right]$, with $0 \leq p_{km} \leq 1$ being the fraction of requests of CP $k$ routed to cache $m$. Let $C_m$ be the size of cache $m$, and $C_{km}$ be the size of cache slice that allocated to CP $k$ from cache $m$ (both measured as the number of files that can be cached). Also denote by $h_{km}$ the request hit rate to the content of CP $k$ that served by cache $m$. According to Section~\ref{characteristic_time}, we have

\begin{equation*}
\begin{aligned}
h_{km}(C_{km}, p_{km})=\sum_{i=1}^{N_k}\lambda_{ik}p_{km}o_{ik}(\lambda_{ik}p_{km},C_{km})
\end{aligned}
\end{equation*}

\noindent where $o_{ik}(\lambda_{ik}p_{km},C_{km})$ is the hit probability, in the cache $m$, for file $f_{ik}$ of CP $k$.

\begin{lemma}
  \label{lemma:hit_prob}
  As long as $p_{km} \neq 0$, we have $o_{ik}(\lambda_{ik}p_{km},C_{km})=o_{ik}(\lambda_{ik},C_{km})$.
\end{lemma}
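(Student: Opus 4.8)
The plan is to exploit the one structural feature of content-oblivious routing that makes this work: routing scales the arrival rates of \emph{all} of CP $k$'s files by the \emph{same} factor $p_{km}$, so that common factor can be absorbed into the cache characteristic time rather than changing the relative popularities seen by the slice.

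First I would make explicit the product-form structure of the hit probability under the characteristic-time approximation. For the policies considered in this paper, the per-file hit probability depends on the request rate $\lambda$ and the characteristic time $T$ only through the product $\lambda T$; writing this as $o(\lambda,T)=\phi(\lambda T)$, we have $\phi(x)=1-e^{-x}$ for LRU (see~\eqref{eq:hitprob_lru}) and $\phi(x)=x/(1+x)=1-1/(1+x)$ for FIFO/RANDOM (see~\eqref{eq:hitprob_fifo_random}), and in each case $\phi$ is continuous and strictly increasing on $[0,\infty)$ with $\phi(0)=0$ and $\phi(x)\to 1$. Consequently, for the slice $C_{km}$ fed the rates $\{\lambda_{ik}p_{km}\}_{i=1}^{N_k}$, the characteristic time $T$ is, by~\eqref{eq:cache_size}, the root of $\sum_{i=1}^{N_k}\phi(\lambda_{ik}p_{km}T)=C_{km}$, and this root is unique because $s\mapsto\sum_{i}\phi(\lambda_{ik}s)$ is strictly increasing (for $0<C_{km}<N_k$).

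Then I would perform the change of variable $T'=p_{km}T$, which is legitimate precisely because $p_{km}\neq 0$: substituting turns the defining equation into $\sum_{i=1}^{N_k}\phi(\lambda_{ik}T')=C_{km}$, which is exactly the equation defining the characteristic time of the slice $C_{km}$ under the \emph{unscaled} rates $\lambda_{ik}$. Hence $T'$ equals that characteristic time, and therefore $o_{ik}(\lambda_{ik}p_{km},C_{km})=\phi(\lambda_{ik}p_{km}T)=\phi(\lambda_{ik}T')=o_{ik}(\lambda_{ik},C_{km})$, which is the claim.

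The argument is essentially a scaling/change-of-variables step, so there is no deep obstacle; the points that need care are (i) establishing, or invoking for the admissible policies, the product form $o(\lambda,T)=\phi(\lambda T)$ — this is where the restriction to characteristic-time–modellable policies really enters — and (ii) the hypothesis $p_{km}\neq 0$, without which the substitution $T'=p_{km}T$ degenerates (and indeed for $p_{km}=0$ the equation $\sum_i\phi(0)=0$ has no solution when $C_{km}>0$, so the quantity is not even defined). I would also record the immediate corollary that $o_{ik}(\lambda_{ik}p_{km},C_{km})$ does not depend on $p_{km}$ over $(0,1]$, whence $h_{km}(C_{km},p_{km})=p_{km}\sum_{i}\lambda_{ik}o_{ik}(\lambda_{ik},C_{km})$ is linear in $p_{km}$ — the fact that drives the later structural results on the optimal routing configuration.
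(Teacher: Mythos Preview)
Your proposal is correct and follows essentially the same route as the paper: both arguments observe that in~\eqref{eq:hitprob_lru}--\eqref{eq:hitprob_fifo_random} the rate and characteristic time appear only through the product $\lambda T$, and absorb the common scaling $p_{km}$ into $T$ via $T'=p_{km}T$ (the paper writes it as $T\mapsto T/p_{km}$). Your version is more carefully spelled out---you make the monotonicity/uniqueness of the root explicit and flag the degeneracy at $p_{km}=0$---but the underlying idea is identical.
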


\begin{proof}
  From equation~\eqref{eq:hitprob_lru}~\eqref{eq:cache_size} and~\eqref{eq:hitprob_fifo_random}, we can see that $\lambda_i$ and $T$ appear in their product form in these equations. As a result, for the same cache size $C$, multiplying $\lambda_i$ by $p_{km}$ will not lead to any change in the hit probability, i.e., by letting $T=T/p_{km}$ we will have the same $1-e^{-\lambda_iT}$ and $1-\frac{1}{1+\lambda_iT}$.
\end{proof}

Lemma~\ref{lemma:hit_prob} indicates that given the cache size $C$, the hit probability $o_{ik}$ is fully determined by file popularity pattern rather than the exact request volume.

Based on the above lemma, we have the following expression for $h_{km}$ when $p_{km} \neq 0$:

\[h_{km}(C_{km}, p_{km})=\sum_{i=1}^{N_k}\lambda_{ik}p_{km}o_{ik}(\lambda_{ik},C_{km})\]

\noindent Observe that when $p_{km} = 0$ we have $h_{km} = 0$, and hence the above expression indeed characterizes $h_{km}$.

The \emph{aggregate} request hit rate $h_k$ of CP $k$ is thus

\[h_{k}=\sum_{m=1}^{M}h_{km}=\sum_{m=1}^{M}\sum_{i=1}^{N_k}\lambda_{ik}p_{km}o_{ik}(\lambda_{ik},C_{km}).\]

The joint cache resource allocation and request routing problem we study can be formulated as the following utility maximization problem:

\[ \text{maximize } \sum_{k=1}^K w_kU_{k}(h_{k}(\mathrm{\textbf{C}_k}, \mathrm{\textbf{P}_k})) \]

\noindent subject to

\begin{equation} \label{eq:multi_cache_model}
\begin{aligned}
  &\sum_{k=1}^K C_{km} \leq C_m,  \quad  \forall m \\
  &\sum_{m=1}^M p_{km} = 1, \quad \forall k \\
  &0 \leq p_{km} \leq a_{km}, \quad \forall k, m \\
\end{aligned}
\end{equation}

\noindent where $\mathrm{\textbf{C}_k}=(C_{k1},C_{k2},\ldots,C_{kM})$ and $\mathrm{\textbf{P}_k}=(p_{k1},p_{k2},\ldots,p_{kM})$ denote the cache allocation and request routing for CP $k$, respectively. $w_k>0$ is the weight to CP $k$, which is chosen to reflect business preferences such
as financial incentives or legal obligations. In its most simple case where $w_k = 1$ and $U_k(h_k) = h_k$, the objective becomes that of maximizing the overall cache hit rate, which provides a measure of the cache utilization efficiency.

Note that the above formulation is a ``mixed-integer'' programming problem that is typically hard to solve. However, in practice caches are large and therefore we assume $C_{km}$ can take any real value, as the rounding error
will be negligible.

\subsection{Solution}
\label{sec:solution}
\begin{theorem}
\label{thrm:no_splitting}
The optimal solution to problem~\eqref{eq:multi_cache_model} is such that all requests of each CP are routed to one cache that it connects.
\end{theorem}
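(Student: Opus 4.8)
The plan is to exploit the fact that, thanks to Lemma~\ref{lemma:hit_prob}, each content provider's aggregate hit rate is \emph{linear} in its routing vector once the cache allocation is fixed, and then to invoke the usual ``a linear objective over a polytope is maximized at a vertex'' argument. Concretely, I would first rewrite, using Lemma~\ref{lemma:hit_prob},
\[
h_{km}(C_{km},p_{km}) = p_{km}\,g_{km}(C_{km}), \qquad g_{km}(C_{km}) := \sum_{i=1}^{N_k}\lambda_{ik}\,o_{ik}(\lambda_{ik},C_{km}),
\]
where $g_{km}(C_{km})$ is the hit rate that cache $m$ would deliver if \emph{all} of CP~$k$'s traffic were sent to it; note the identity is valid for every $p_{km}\in[0,1]$, the case $p_{km}=0$ being consistent since both sides vanish. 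Hence $h_k = \sum_{m=1}^{M} p_{km}\,g_{km}(C_{km})$ is a linear function of $\mathbf{P}_k$ for fixed $\mathbf{C}_k$.

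Next, I would take an optimal solution $(\mathbf{C}^\star,\mathbf{P}^\star)$ of~\eqref{eq:multi_cache_model} and freeze $\mathbf{C}^\star$. The only constraints coupling different caches, $\sum_k C_{km}\le C_m$, involve $\mathbf{C}$ alone, so with $\mathbf{C}^\star$ fixed the residual problem in $\mathbf{P}$ decouples completely across content providers: for each $k$ one must pick $\mathbf{P}_k$ in the simplex $\Delta_k := \{\mathbf{P}_k : \sum_m p_{km}=1,\ 0\le p_{km}\le a_{km}\}$ to maximize $w_k U_k(h_k)$. Since $w_k>0$ and $U_k$ is nondecreasing, it suffices to maximize the linear functional $\mathbf{P}_k\mapsto \sum_m p_{km}\,g_{km}(C^\star_{km})$ over $\Delta_k$, which is attained at a vertex of $\Delta_k$; the vertices of $\Delta_k$ are exactly the unit vectors $\mathbf{e}_m$ with $a_{km}=1$ (variables with $a_{km}=0$ are pinned to $0$, and $\sum_m p_{km}=1$ collapses the remaining box to a standard simplex). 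Therefore the subproblem for CP~$k$ admits an optimizer $\mathbf{P}_k=\mathbf{e}_{m_k^\star}$ with $m_k^\star\in\arg\max\{g_{km}(C^\star_{km}):a_{km}=1\}$, i.e.\ all of CP~$k$'s requests go to a single connected cache.

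Finally I would glue the subproblems back together: replacing $\mathbf{P}^\star_k$ by such an $\mathbf{e}_{m_k^\star}$ for every $k$ preserves feasibility (nothing about $\mathbf{C}^\star$ changed) and does not decrease any term of the objective, so the modified point is still optimal and has the asserted single-cache routing structure. I do not anticipate a real obstacle here; the content of the argument lies entirely in Lemma~\ref{lemma:hit_prob} removing the nonlinearity in $p_{km}$, after which everything is elementary linear programming. The one point I would state carefully is that, if several caches tie for the largest $g_{km}(C^\star_{km})$, fractional optimal routings may also exist, so the precise claim is that the set of optimal solutions always \emph{contains} a single-cache routing (which is what matters for the decentralized algorithms developed later).
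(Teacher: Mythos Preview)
Your proposal is correct and follows essentially the same approach as the paper: fix the cache allocation, use Lemma~\ref{lemma:hit_prob} to see that each $h_k$ is linear in $\mathbf{P}_k$, and then use monotonicity of $U_k$ to reduce to maximizing a linear functional over the routing simplex, which is achieved by sending all traffic to the single connected cache with the largest $g_{km}(C^\star_{km})$. The paper phrases this as an exchange argument (shift mass from a cache with smaller $o_{ik}$ to one with larger $o_{ik}$) rather than in vertex-of-a-polytope language, and it does not explicitly flag the tie case you note at the end; otherwise the two arguments coincide.
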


\begin{proof}
See Appendix~\ref{sec:proof_thm1}.
\end{proof}

Theorem~\ref{thrm:no_splitting} states that the optimal solution to problem~\eqref{eq:multi_cache_model} is that requests of each CP are not splitted among caches, but rather, each CP directs all its requests to one cache it connects. This result is counter-intuitive at the first glance, since to maximize utility, each content provider will try to obtain storages from multiple caches. However, this in fact makes sense in that if this is the case, then there are multiple copies of the same files cached in these storages, which results in content duplication and hence cache resource wastage. Therefore, as a contribution of this work, here we in fact prove that cache-level content-oblivious probabilistic routing is suboptimal as compared to non-probabilistic (deterministic) request routing in a multi-cache environment.

Based on Theorem~\ref{thrm:no_splitting}, Problem~\eqref{eq:multi_cache_model} can be reformulated as follows:

\[ \text{maximize } \sum_{k=1}^K w_kU_{k}(h_{k}(\mathrm{\textbf{C}_k}, \mathrm{\textbf{P}_k})) \]

\noindent subject to

\begin{equation} \label{eq:multi_cache_model_reformulated}
\begin{aligned}
  &\sum_{k=1}^K C_{km} \leq C_m,  \quad  \forall m \\
  &\sum_{m=1}^M p_{km} = 1, \quad \forall k \\
  &p_{km} \in \{0, a_{km}\}, \quad \forall k, m \\
\end{aligned}
\end{equation}

Since there is a limited number of caches and CPs, problem~\eqref{eq:multi_cache_model_reformulated} can be solved by evaluation of a series of simpler problems which are determined by the special request routing configurations given by Theorem~\ref{thrm:no_splitting}. More specifically, we can convert problem~\eqref{eq:multi_cache_model_reformulated} into a series of problems $Pr_1,Pr_2,\ldots, Pr_L$, where $L$ is the number of request routing configurations, and each $Pr_i$ corresponds to a cache resource allocation problem with one specific request routing, i.e., the one that all requests of each CP are directed to one cache that it connects. The optimal solution to problem~\eqref{eq:multi_cache_model_reformulated} is simply the one with the largest objective function value.

\begin{theorem}
\label{thrm:concavity}
Given a request routing configuration where all requests of each CP are directed to one cache that it connects, problem~\eqref{eq:multi_cache_model_reformulated} can be decomposed into a series of convex optimization problems, each for one cache.
\end{theorem}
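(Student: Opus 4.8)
The plan is to exploit the fact that, once the routing configuration is frozen, the content providers become partitioned among the caches, so that both the objective and the coupling capacity constraints split cache by cache. Concretely, fix a configuration in which every CP $k$ sends all of its traffic to a single cache $m(k)$ with $a_{k,m(k)}=1$, and write $S_m=\{k:m(k)=m\}$ for the set of CPs assigned to cache $m$, so that $\{S_1,\dots,S_M\}$ partitions $\{1,\dots,K\}$. Under this configuration $p_{k,m(k)}=1$ and $p_{km}=0$ for $m\neq m(k)$. First I would observe that, since $h_{km}=0$ whenever $p_{km}=0$ (as noted just after Lemma~\ref{lemma:hit_prob}), allocating storage $C_{km}>0$ to a CP $k\notin S_m$ contributes nothing to any $h_k$ while consuming part of the budget $C_m$; because every $U_k$ is increasing, there is an optimal allocation with $C_{km}=0$ for all $k\notin S_m$. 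Hence we may restrict attention to the variables $\{C_{km}:k\in S_m\}$ for each $m$ separately.

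Next I would substitute $p_{k,m(k)}=1$ into the hit-rate expression to get, for $k\in S_m$, $h_k=\sum_{i=1}^{N_k}\lambda_{ik}\,o_{ik}(\lambda_{ik},C_{km})$, a function of the single variable $C_{km}$. The objective then reads $\sum_{k}w_kU_k(h_k)=\sum_{m=1}^{M}\bigl(\sum_{k\in S_m}w_kU_k(h_k(C_{km}))\bigr)$, while the capacity constraints $\sum_{k}C_{km}\le C_m$ reduce to $\sum_{k\in S_m}C_{km}\le C_m$, one per cache, with no variable shared between two distinct caches. This is precisely the claimed decomposition into $M$ independent problems: for each cache $m$, maximize $\sum_{k\in S_m}w_kU_k(h_k(C_{km}))$ subject to $\sum_{k\in S_m}C_{km}\le C_m$ and $C_{km}\ge 0$.

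It then remains to verify that each per-cache subproblem is a convex program. The feasible region is a scaled simplex, hence a compact convex polyhedron. For the objective I would argue concavity term by term: by Theorem~\ref{thrm:policy_concavity} (equivalently, by the standing assumption that the hit rate is concave and increasing in cache size) each $h_k(C_{km})$ is concave and increasing in $C_{km}$; each $U_k$ is concave and increasing by assumption; and the composition of a concave nondecreasing function with a concave function is concave, so $U_k\circ h_k$ is concave. Multiplying by $w_k>0$ and summing preserves concavity, so the objective of each subproblem is concave, and maximizing a concave function over a convex set is a convex optimization problem.

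The only real subtlety — and the step I would state most carefully — is the reduction that sets $C_{km}=0$ for CPs not routed to cache $m$: without it the capacity constraints still formally couple all CPs through each $C_m$ and the ``decomposition'' would be vacuous. The monotonicity-plus-zero-contribution argument above settles this, but one should note that it asserts only the \emph{existence} of an optimal solution of the decomposed form, which is exactly what is needed, since problem~\eqref{eq:multi_cache_model_reformulated} is solved by enumerating the finitely many routing configurations and taking the one with the largest objective value. Everything else is the routine composition-of-concave-functions verification, which I would not belabor.
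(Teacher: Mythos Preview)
Your proposal is correct and follows essentially the same route as the paper: partition the CPs by their assigned cache, note that the objective and constraints decouple cache by cache, and then argue concavity of each per-cache objective via the composition rule. If anything you are more careful than the paper---you explicitly justify setting $C_{km}=0$ for $k\notin S_m$ (the paper's Appendix~\ref{sec:proof_thm2} silently restricts the sum to $k\in CP(m)$) and you correctly invoke monotonicity of $U_k$ in the composition step, which the paper's Lemma~\ref{lemma:concavity} omits.
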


\begin{proof}
See Appendix~\ref{sec:proof_thm2}.
\end{proof}

Based on Theorem~\ref{thrm:concavity}, we know that each $Pr_i$ can be further divided into $M$ subproblems, where each subproblem corresponds to a single-cache convex resource allocation problem. Let $B_k$ be the number of caches that CP $k$ connects. We have $L=B_1B_2\ldots B_K$ request routing configurations. Meanwhile, solving each $Pr_i$ involves solving $M$ convex single-cache resource allocation problem, and hence the time complexity of the algorithm is $O(B_1B_2\ldots B_KM)$.

The above algorithm is efficient at solving problems with small number of caches and content providers. However, as it iterates over all possible request routings, the computational time grows exponentially with the problem size. In fact, as the following theorem states, the considered problem is proved NP-hard. 

\begin{theorem}
\label{thrm:np_hardness}
Problem~\eqref{eq:multi_cache_model_reformulated} is NP-hard.
\end{theorem}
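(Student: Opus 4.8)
The plan is to establish NP-hardness by a reduction from a known NP-hard problem. Since Problem~\eqref{eq:multi_cache_model_reformulated} asks, for each CP, to pick exactly one cache among those it connects (the variable $\mathrm{\textbf{P}}_k$ being a unit vector supported on the neighborhood of $k$), subject to shared capacity constraints $\sum_k C_{km}\le C_m$, the combinatorial core is an assignment-with-capacities problem. The most natural source is a partition-type or bin-packing-type problem. Concretely, I would reduce from \textsc{Partition} (or, for a cleaner many-cache instance, from \textsc{Bin Packing} / \textsc{Subset Sum}): given a set of positive integers, decide whether it can be split into two groups of equal sum.

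The key steps, in order, are as follows. First, fix the utility functions and weights to simple concrete choices (e.g.\ $w_k=1$ and $U_k$ linear, or a specific strictly concave form), since Theorem~\ref{thrm:np_hardness} only needs hardness for \emph{some} admissible instance of the formulation. Second, given a \textsc{Partition} instance with integers $s_1,\dots,s_n$ summing to $2S$, build an instance with $K=n$ content providers and $M=2$ caches, each cache connected to every CP, each cache of size $C_1=C_2=S$; design each CP $k$'s file catalog and popularity profile so that its hit-rate-as-a-function-of-allocated-slice-size behaves (at least at the relevant breakpoints) like a ``demand'' of size proportional to $s_k$ — e.g.\ CP $k$ has $s_k$ equally popular files, so that to get any positive marginal utility it effectively needs a slice of size $\Theta(s_k)$, and the per-cache convex subproblem of Theorem~\ref{thrm:concavity} is driven by how the $s_k$ split across the two caches. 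Third, argue that the global optimum of the utility objective is achieved exactly when the two caches receive CP-demand sums that are balanced, i.e.\ when the subset $\{k : p_{k1}=1\}$ has $\sum s_k = S$; concavity of each $U_k$ and of the hit-rate function (Theorem~\ref{thrm:policy_concavity}) makes the aggregate objective a Schur-concave-type function of the load split, so it is maximized at the balanced split and that maximum value is attained iff a perfect partition exists. Fourth, conclude that deciding whether the optimum meets this threshold solves \textsc{Partition}, and since the reduction is polynomial, Problem~\eqref{eq:multi_cache_model_reformulated} is NP-hard. (Alternatively, I would cite that the joint caching/routing problem is already known NP-hard~\cite{dehghan2017complexity, ioannidis2017jointly} and show our formulation contains such an instance as a special case, but an explicit reduction is cleaner.)

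The main obstacle I anticipate is step three: making the gadget ``tight.'' With continuous slice sizes and strictly concave utilities, the objective is smooth and its maximizer over the $2^n$ routing choices need not be a clean integer balanced split unless the per-CP hit-rate curves are engineered so that the only way to reach the threshold objective value is the balanced one — essentially I need the concave building blocks to be piecewise-linear with a kink exactly at slice size $s_k$ (giving hit rate $s_k$ when the slice is at least $s_k$ under that cache's load, and strictly less otherwise), and I need to verify that no ``unbalanced but clever'' slice allocation can compensate. This requires carefully checking the KKT/water-filling behavior of the convex subproblem in Theorem~\ref{thrm:concavity} for the two caches and confirming the gap between the balanced-split value and any unbalanced-split value is bounded away from zero by a polynomially-sized margin, so that the decision version is well-posed. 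The rest of the argument — polynomial-time construction, membership considerations, and translating ``perfect partition exists'' into ``optimum $\ge$ threshold'' — is routine.
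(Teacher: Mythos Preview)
Your plan is sound and would work, but it takes a different route from the paper. The paper's proof is a one-paragraph containment argument rather than an explicit gadget reduction: having already established (Theorem~\ref{thrm:no_splitting}) that at optimum each CP routes to a single cache, it identifies Problem~\eqref{eq:multi_cache_model_reformulated} with a generalization of the Separable Assignment Problem (SAP) of Fleischer et al.~\cite{fleischer2006tight}, mapping bins $\leftrightarrow$ caches, items $\leftrightarrow$ CPs, bin budgets $\leftrightarrow$ cache capacities, item weights $\leftrightarrow$ allocated slice sizes, and item profits $\leftrightarrow$ resulting utilities; since SAP already contains knapsack, NP-hardness follows immediately. Your explicit \textsc{Partition} reduction with two caches is more self-contained and arguably more illuminating, but it forces you to confront exactly the obstacle you flag in step three --- engineering per-CP hit-rate curves with a sharp ``demand'' at $s_k$ inside a model whose LRU/FIFO/Random hit probabilities are smooth. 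The paper sidesteps this by outsourcing the hardness to the SAP literature in a single citation (and, candidly, does not spell out how a fixed-weight/fixed-profit SAP item is realized within the continuous-$C_{km}$ framework, so the same soft spot is present but hidden). If you want to keep your route, the cleanest fix is to instantiate CP $k$ with $s_k$ equiprobable files so that under optimal static placement the hit rate is proportional to $\min(C_{km},s_k)$, giving the piecewise-linear kink you need; alternatively, reduce from SAP/knapsack directly as the paper does and save yourself the gap analysis.
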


\begin{proof}
See Appendix~\ref{sec:proof_NP_hardness}.
\end{proof}

To obtain practical solution mechanisms, we further analyze structural properties of the problem.
Fortunately, the original problem turns out to be a biconvex optimization problem, as proved in Theorem~\eqref{thrm:biconvexity}.

\begin{theorem}
  \label{thrm:biconvexity}
 Problem~\eqref{eq:multi_cache_model} is a biconvex optimization problem.
\end{theorem}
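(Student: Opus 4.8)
The plan is to prove biconvexity by establishing that the feasible set is biconvex (a product of two convex sets) and that the objective function is biconvex, i.e., convex in $\mathrm{\textbf{C}}$ for every fixed $\mathrm{\textbf{P}}$ and convex in $\mathrm{\textbf{P}}$ for every fixed $\mathrm{\textbf{C}}$. First I would inspect the constraint set in~\eqref{eq:multi_cache_model}: the cache-capacity constraints $\sum_k C_{km}\le C_m$ are linear in $\mathrm{\textbf{C}}$, the routing constraints $\sum_m p_{km}=1$ and $0\le p_{km}\le a_{km}$ are linear in $\mathrm{\textbf{P}}$, and crucially no constraint couples $\mathrm{\textbf{C}}$ and $\mathrm{\textbf{P}}$. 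Hence the feasible region is exactly a Cartesian product $\mathcal{C}\times\mathcal{P}$ of two polytopes, which is the defining property of a biconvex set in the sense of~\cite{gorski2007biconvex}.

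Next I would turn to the objective $\sum_k w_k U_k(h_k(\mathrm{\textbf{C}}_k,\mathrm{\textbf{P}}_k))$. Since each $w_k>0$ and each $U_k$ is concave and increasing, the usual composition rule says $w_k U_k(h_k)$ is concave in any variable in which $h_k$ is concave; so it suffices to show $h_k$ is concave in $\mathrm{\textbf{C}}_k$ for fixed $\mathrm{\textbf{P}}_k$ and concave in $\mathrm{\textbf{P}}_k$ for fixed $\mathrm{\textbf{C}}_k$ (and that maximizing a concave objective over a convex set is a convex program, matching the paper's ``biconvex'' terminology for a maximization problem). For the $\mathrm{\textbf{C}}_k$ direction, recall $h_k=\sum_m\sum_i \lambda_{ik}p_{km}o_{ik}(\lambda_{ik},C_{km})$; with $p_{km}$ fixed, this is a nonnegative-weighted sum over $m$ of the per-cache hit rates $\sum_i\lambda_{ik}o_{ik}(\lambda_{ik},C_{km})$, and by Theorem~\ref{thrm:policy_concavity} (hit rate concave in cache size for the policies considered) each such term is concave in $C_{km}$; a sum of concave functions of separate coordinates is concave in $\mathrm{\textbf{C}}_k$. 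For the $\mathrm{\textbf{P}}_k$ direction, with $\mathrm{\textbf{C}}_k$ fixed, invoking Lemma~\ref{lemma:hit_prob} the coefficients $o_{ik}(\lambda_{ik},C_{km})$ do not depend on $p_{km}$, so $h_k=\sum_m\big(\sum_i\lambda_{ik}o_{ik}(\lambda_{ik},C_{km})\big)p_{km}$ is linear — hence trivially concave — in $\mathrm{\textbf{P}}_k$.

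Finally I would assemble these pieces: for fixed $\mathrm{\textbf{P}}$, $h_k$ is concave in $\mathrm{\textbf{C}}_k$ and $U_k$ is concave increasing, so $w_kU_k(h_k)$ is concave in $\mathrm{\textbf{C}}_k$, thus in $\mathrm{\textbf{C}}$, and summing over $k$ keeps concavity; maximizing this over the polytope $\mathcal{C}$ is a convex program. Symmetrically, for fixed $\mathrm{\textbf{C}}$, $h_k$ is affine in $\mathrm{\textbf{P}}_k$, so $w_kU_k(h_k)$ is concave in $\mathrm{\textbf{P}}_k$, the sum is concave in $\mathrm{\textbf{P}}$, and maximizing over $\mathcal{P}$ is convex. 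Together with the product structure of the feasible set, this is precisely the definition of a biconvex optimization problem. The main obstacle — really the only nontrivial point — is handling the $p_{km}=0$ edge case cleanly, since $o_{ik}(\lambda_{ik}p_{km},C_{km})$ is only shown equal to $o_{ik}(\lambda_{ik},C_{km})$ when $p_{km}\neq0$ by Lemma~\ref{lemma:hit_prob}; but as the excerpt already notes, $h_{km}=0$ when $p_{km}=0$, so the linear-in-$\mathrm{\textbf{P}}_k$ representation of $h_k$ remains valid across the whole simplex and continuity is preserved, which closes the gap.
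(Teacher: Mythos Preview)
Your proposal is correct and follows essentially the same approach as the paper: establish that the feasible region is a Cartesian product of two polytopes, invoke the hit-rate concavity in cache size (the paper cites Theorem~\ref{thrm:concavity}, you go directly to Theorem~\ref{thrm:policy_concavity}, but these rest on the same fact), and observe that $h_{km}$ is linear in $p_{km}$ for fixed allocations. Your treatment is somewhat more thorough than the paper's---you make the composition with the concave increasing $U_k$ explicit and handle the $p_{km}=0$ boundary case, both of which the paper glosses over---but the underlying argument is the same.
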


\begin{proof}
  In Theorem~\ref{thrm:concavity} we show that the objective function of problem~\eqref{eq:multi_cache_model} is concave in $C_{km}$'s given a fixed $p_{km}$'s. It is also easy to see that the objective function is concave in $p_{km}$'s given a fixed $C_{km}$'s, since $h_{km}$ is linear in $p_{km}$. Therefore, the objective function is biconcave. Furthermore, as the space of the decision variables is the product of two independent convex sets, it is biconvex. As a result, problem~\eqref{eq:multi_cache_model} is a biconvex optimization problem.
\end{proof}

The following theorem states that any solution corresponding to the special routing configurations given by Theorem~\ref{thrm:no_splitting} is a {\it{partial optimum}}.

\begin{theorem}
  Let $f$ be the objective function of problem~\eqref{eq:multi_cache_model}. Denote by $X$ and $Y$ be the variable space of caching and routing, respectively. Also let $y^*$ be a routing configuration given by Theorem~\ref{thrm:no_splitting}, and $x^{*}(y^*)$ be the corresponding optimal caching. Then ($x^{*}(y^*), y^{*}$) is a partial optimum, i.e., $f(x^*(y^*), y^*) \ge f(x, y^*)$ $\forall x \in X$, and $f(x^*(y^*), y^*) \ge f(x^*(y^*), y)$ $\forall y \in Y$.
\end{theorem}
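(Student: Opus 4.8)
The plan is to check the two defining inequalities of a partial optimum separately. The first one, $f(x^*(y^*),y^*)\ge f(x,y^*)$ for all $x\in X$, is immediate: by construction $x^*(y^*)$ is, among all feasible cache allocations, one that maximizes $f(\cdot,y^*)$, so there is nothing to prove. All the work is in the second inequality, $f(x^*(y^*),y^*)\ge f(x^*(y^*),y)$ for all $y\in Y$, i.e.\ once the allocation is frozen at $x^*(y^*)$ no re-routing beats $y^*$.

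The structural facts I would invoke are: (i) under $y^*$ each CP $k$ sends all its traffic to a single cache, call it $m(k)$, so $p^*_{k,m(k)}=1$ and $p^*_{km}=0$ for $m\neq m(k)$; (ii) the optimal allocation $x^*(y^*)$ produced by the per-cache decomposition of Theorem~\ref{thrm:concavity} places no space of a cache onto a CP that does not route to it, i.e.\ $C^*_{km}=0$ whenever $m\neq m(k)$ (this is without loss of generality, since with $y^*$ fixed those entries do not enter the objective); (iii) a cache of size zero caches nothing, so $o_{ik}(\lambda_{ik},0)=0$, which is exactly what \eqref{eq:hitprob_lru}--\eqref{eq:hitprob_fifo_random} give at characteristic time $T=0$ (and $T=0$ is forced by \eqref{eq:cache_size} when $C=0$); and (iv) by Lemma~\ref{lemma:hit_prob}, with $x^*(y^*)$ frozen, $h_k$ as a function of the routing vector $\mathbf{P}_k$ is the linear form $h_k=\sum_m p_{km}\,g_{km}$ with $g_{km}:=\sum_{i} \lambda_{ik}\,o_{ik}(\lambda_{ik},C^*_{km})$, the $p_{km}=0$ terms being zero anyway.

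Combining these: by (ii)--(iii), $g_{km}=0$ for every $m\neq m(k)$, so for any feasible routing $y$ we get $h_k(x^*(y^*),y)=p_{k,m(k)}\,g_{k,m(k)}\le g_{k,m(k)}=h_k(x^*(y^*),y^*)$, where the inequality uses $p_{k,m(k)}\le 1$ and the last equality uses (i). Since each $U_k$ is increasing and each $w_k>0$, summing over $k$ yields $f(x^*(y^*),y)=\sum_k w_k U_k\big(h_k(x^*(y^*),y)\big)\le \sum_k w_k U_k\big(h_k(x^*(y^*),y^*)\big)=f(x^*(y^*),y^*)$, which is the second inequality. Feasibility of $(x^*(y^*),y)$ is clear, as $x^*(y^*)$ already meets the capacity constraints and $y$ ranges over the routing polytope.

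I expect the only delicate point to be item (ii): "$x^*(y^*)$ is the corresponding optimal caching'' must be read as the allocation returned by the per-cache decomposition (equivalently, any optimal caching that wastes no space on idle CP--cache pairs), not an arbitrary maximizer of $f(\cdot,y^*)$; if one parked a large slice of an otherwise-unused cache on a CP routing elsewhere, re-routing that CP onto the bigger idle slice would raise its hit rate and the second inequality could fail. Making this no-waste choice explicit, and noting it is costless because those entries are objective-irrelevant once $y^*$ is fixed, is the one place the argument needs care; the rest is bookkeeping with the linearity of $h_{km}$ in $p_{km}$ and the monotonicity of the $U_k$.
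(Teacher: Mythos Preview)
The paper states this theorem without proof, so there is no argument to compare against. Your proposal is correct and essentially complete. The first inequality is indeed definitional. For the second, your key observation---that the natural optimal caching $x^*(y^*)$ assigns zero space to any CP--cache pair with $p^*_{km}=0$, whence by $o_{ik}(\lambda_{ik},0)=0$ every such pair contributes nothing regardless of how routing is changed---is exactly what is needed; the remaining chain via the linearity of $h_k$ in the routing fractions (Lemma~\ref{lemma:hit_prob}) and the monotonicity of each $U_k$ is straightforward.

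Your caveat about item (ii) is a genuine and necessary point, not merely a technicality: the theorem as stated is only true for a no-waste optimal caching. If ``the corresponding optimal caching'' were read as an arbitrary maximizer of $f(\cdot,y^*)$, one could park the entirety of an otherwise-idle cache on a CP that routes elsewhere, and re-routing that CP would then strictly raise its hit rate, violating the second inequality. Interpreting $x^*(y^*)$ as the allocation produced by the per-cache decomposition of Theorem~\ref{thrm:concavity}---which only assigns $C_{km}$ to CPs $k\in CP(m)$ and implicitly sets the remaining entries to zero---resolves this, and you have correctly isolated it as the one place the argument needs care.
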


Biconvex optimization problems~\cite{gorski2007biconvex} have been extensively studied during the past few decades, and efficient algorithms (e.g., {\it{ACS}}~\cite{wendell1976minimization} for partial optimum, {\it{GOP}}~\cite{floudas1990global} for approximate global optimum) exist in literature. The good news from our numerical studies is that, even with local or partial optimum, the system performance can be significantly improved by the joint cache resource allocation and request routing optimization.


\section{Decentralized Mechanism}
\label{sec:decentralized_mechansims}


In this section, we present a decentralized mechanism to implement the optimal cache allocation and request routing. As compared to centralized solution, a decentralized mechanism is more desired in practice, i.e., when each CP does not want to reveal its utility function to the cache manager, or the cache manager cannot collect all traffic information of CPs. A decentralized mechanism also adapts to network changes naturally.
Our proposed mechanism is based on our analysis in the above section, where the system problem can be decomposed into subproblems for the caches and for the individual content providers.


\subsection{Decentralized Mechanism}
\label{sec:decentralized_algorithm}

We have shown in Section~\ref{sec: problem_formulation} the routing configurations of partial optimums. Furthermore, each of these routing configurations corresponds to a convex resource allocation problem. Let $CP(m)$ be the set of CPs that direct requests to cache $m$, and $s(k)$ be the cache to which CP $k$ forwards all its requests. Given a fixed request routing $\mathrm{\textbf{P}_k}$'s, we have the following convex resource allocation problem:

\[ \text{maximize } \sum_{k=1}^K w_kU_{k}(h_k(\mathrm{\textbf{C}_k})) \]

\noindent subject to

\begin{equation} \label{eq:mulcache_model_reformulated}
\begin{aligned}
  \sum_{k \in CP(m)} C_{km} \leq C_m,  \quad  \forall m \\
\end{aligned}
\end{equation}

\noindent where $U_{k}(h_k(\mathrm{\textbf{C}_k}))=U_{k}(h_{ks(k)}(C_{ks(k)},1))$, i.e., requests of CP $k$ are solely directed to cache $s(k)$.


The key to the decomposition of the above problem is to examine its dual. Let $\mathrm{\textbf{C}}=(\mathrm{\textbf{C}_1}, \mathrm{\textbf{C}_2}, \ldots, \mathrm{\textbf{C}_K})$ and $\boldsymbol{\lambda}=(\lambda_1, \lambda_2, \ldots, \lambda_M)$. $\boldsymbol{\lambda}$ is considered as the price vector of caches. Define Lagrangian

\begin{equation*}
\begin{aligned}
  L(\mathrm{\textbf{C}}, &\boldsymbol{\lambda})=\sum_{k=1}^K w_kU_{k}(h_k(\mathrm{\textbf{C}_k}))+\sum_{m=1}^M \lambda_m(C_m-\sum_{k \in CP(m)} C_{km}) \\
  &= \sum_{k=1}^K (w_kU_{k}(h_k(\mathrm{\textbf{C}_k}))-\sum_{m=1}^M \lambda_{m}C_{km})+\sum_{m=1}^M \lambda_mC_m \\
  &= \sum_{k=1}^K (w_kU_{k}(h_k(\mathrm{\textbf{C}_k}))-\lambda_{s(k)}C_{ks(k)})+\sum_{m=1}^M \lambda_mC_m.
\end{aligned}
\end{equation*}

\noindent The last equation holds since CP $k$ routes all its requests to cache $s(k)$ and it will not require resources from other caches.

As the first term is separable, we have $\mathop \text{maximize }\limits_{\mathrm{\textbf{C}_k}}\sum_{k=1}^K (w_kU_{k}(h_k(\mathrm{\textbf{C}_k}))-\lambda_{s(k)}C_{ks(k)})=\sum_{k=1}^K \mathop \text{maximize }\limits_{\mathrm{\textbf{C}_k}}(w_kU_k(h_k(\mathrm{\textbf{C}_k}))-\lambda_{s(k)}C_{ks(k)})$. According to \cite{palomar2006tutorial}, we can now readily formulate optimization problem for each CP $k$ as follows:

\noindent \textbf{CP $k$'s problem:}

\begin{equation}
\label{eq:CP_problem}
\begin{aligned}
  \mathop \text{maximize }\limits_{\mathrm{\textbf{C}_k}} (w_kU_{k}(h_k(\mathrm{\textbf{C}_k}))-\lambda_{s(k)}C_{ks(k)})
\end{aligned}
\end{equation}

\begin{theorem}
\label{thrm:concavity_CP_problem}
The cache resource allocation problem~\eqref{eq:CP_problem} has a unique optimal solution.
\end{theorem}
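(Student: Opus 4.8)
The plan is to show that problem~\eqref{eq:CP_problem} is a concave maximization over a convex feasible region and that its objective is in fact \emph{strictly} concave in the single relevant variable $C_{ks(k)}$, which immediately yields uniqueness. First I would observe that, since CP $k$ routes all of its requests to cache $s(k)$, the only free variable in $\mathrm{\textbf{C}_k}$ that affects the objective is $C_{ks(k)}$; all other components $C_{km}$ with $m\neq s(k)$ do not appear and may be set to zero. So problem~\eqref{eq:CP_problem} reduces to the one-dimensional problem of maximizing $g(C):=w_kU_k(h_{ks(k)}(C,1))-\lambda_{s(k)}C$ over $C\in[0,C_{s(k)}]$ (or over $C\ge 0$, with the cache-size cap handled in the coupling constraint).

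Next I would establish strict concavity of $g$. By Theorem~\ref{thrm:policy_concavity} (via Theorem~\ref{thrm:concavity}), the hit rate $h_{ks(k)}(\cdot,1)$ is concave and increasing in $C$; for the specific policies under consideration (LRU, FIFO, RANDOM) one in fact has that it is \emph{strictly} concave on the region where not all files are fully held, which is the operative range. Since $U_k$ is concave and increasing by assumption, the composition $U_k\circ h_{ks(k)}$ is concave; it is strictly concave wherever either $U_k$ is strictly concave or $h_{ks(k)}$ is strictly concave (and the latter is strictly increasing so that the composition inherits strictness). Subtracting the linear term $\lambda_{s(k)}C$ preserves concavity and does not affect strictness. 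Hence $g$ is strictly concave on a convex interval, so it attains its maximum at a unique point, and therefore problem~\eqref{eq:CP_problem} has a unique optimal solution.

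The main obstacle I anticipate is the strict-concavity claim for the hit-rate function: concavity alone (which is all that Theorem~\ref{thrm:policy_concavity} literally asserts) guarantees that the set of maximizers is convex but not that it is a singleton, because the objective could be flat on a subinterval. To close this gap I would either (i) invoke the explicit forms of $o_{ik}$ from \eqref{eq:hitprob_lru}, \eqref{eq:cache_size}, \eqref{eq:hitprob_fifo_random} to check that the second derivative of $h$ with respect to $C$ is strictly negative on the relevant range (the same computation underlying Appendix~\ref{sec:proof_concavity}), or (ii) argue that on any interval where $h$ is affine the Lagrangian price $\lambda_{s(k)}$ can equal the (constant) marginal utility only on a measure-zero set, so generically the optimum is still unique, and handle the boundary/degenerate cases separately. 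The cleanest route is (i): show the characteristic-time equation forces $h''(C)<0$ strictly as long as at least one file has hit probability strictly between $0$ and $1$, which is always the case for a finite cache and a nontrivial popularity distribution. I would also note as a minor point that feasibility is nonempty ($C=0$ works) and the interval is compact, so an optimum exists; combined with strict concavity this gives existence and uniqueness.
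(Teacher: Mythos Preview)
Your proposal follows the same basic line as the paper's proof---show that the objective is concave in $C_{ks(k)}$ and conclude that the maximizer is unique---but you are considerably more careful than the paper itself. The paper's proof is only three sentences: it observes that $h_{km}$ is concave in $C_{km}$, notes that $U_k$ is concave, declares the objective concave, and then simply asserts ``hence a unique maximizer, called the optimal solution, exists.'' It does not address strict concavity at all, nor does it explicitly reduce to the single variable $C_{ks(k)}$ as you do.

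In other words, the gap you flag---that concavity alone yields a convex set of maximizers, not a singleton---is a gap in the \emph{paper's} argument, not in yours. Your route (i), computing $h''(C)<0$ directly from the characteristic-time expressions \eqref{eq:hitprob_lru}, \eqref{eq:cache_size}, \eqref{eq:hitprob_fifo_random}, is the right way to make the statement airtight; the Appendix~\ref{sec:proof_concavity} computation gives only $\partial^2 h/\partial C^2\le 0$, so you would need to sharpen it slightly (e.g., note that the inequality in \eqref{eq:hitrate_second_derivative} is strict whenever the file rates are not all identical, since then the set where $\partial^2 o(\lambda_i,T)/\partial C^2$ changes sign is nontrivial). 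You also correctly supply the monotonicity of $U_k$ needed for the composition $U_k\circ h_{ks(k)}$ to inherit concavity, which the paper's proof omits. So your argument is the same in spirit but strictly more complete.
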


\begin{proof}
Since $h_{km}$ is concave in $C_{km}$, we know that $h_k$ is concave. Now because $U_k$ is also concave, the objective function of problem~\eqref{eq:CP_problem} is shown to be concave. Hence a unique maximizer, called the optimal solution, exists.
\end{proof}

Once each cache $m$ receives the required cache amount by content providers, its price can be adjusted as follows:

\begin{equation} \label{eq:Cache_adaptation}
\begin{aligned}
  \lambda_m^{t+1}=[\lambda_m^t+\gamma(\sum_{k \in CP(m)}C_{km}^t-C_m)]^+
\end{aligned}
\end{equation}

\noindent where $\gamma>0$ is a step size, $t$ denotes time, and $[x]^+=max\{x,0\}$.

In short, in the mechanism each content provider locally calculates its required cache amount from each cache based on their prices, and each cache adjusts its price based on the required resources from content providers. Meanwhile, as problem~\eqref{eq:CP_problem} is convex, there is no duality gap and hence the algorithm converges to the optimal solution.

It is worthy to note that the above procedure is for one specific request routing. To obtain the global optimal solution (i.e., when the problem size is small), all possible routings need to be explored. We emphasize that this process can be efficiently implemented in a parallel manner. More specifically, at each time $t$, each content provider locally calculates and maintains the information of (\emph{utility, required\_cache\_size, routing}) and broadcasts them to the caches connected. On the other hand, with these information provided by CPs, each cache perceives all possible global routings and then after calculation it   broadcasts the information of (\emph{aggregate\_utility, prices, global\_routing}) back to the CPs. The optimal prices and cache allocations will be the one with the largest (weighted) aggregate utility when all procedures converge.


\section{Bandwidth-Constrained Formulation and Delay-Oriented Optimization}
\label{sec:extensions}

\subsection{Bandwidth-Constrained Formulation}
Problem formulation~\eqref{eq:multi_cache_model} does not take into account bandwidth constraints. However, we argue that in practice, the volume of requests that each CP can forward to caches are limited, i.e., due to their bandwidth-based contracts. In this subsection, we consider such a scenario where there are bandwidth constraints between content providers and caches.

\subsubsection{Problem Formulation}
We start by giving bandwidth constraints. Let $V_{km}$ be the maximum volume of requests that CP $k$ can forward to cache $m$. We explicitly assume that $\sum_{m=1}^M V_{km}\ge \sum_{i=1}^{N_k} \lambda_{ik}, \forall k$, as all requests of each CP needs to be served by the cache network. The bandwidth constraints between content providers and caches can be expressed as the following inequation:

\[\sum_{i=1}^{N_k} \lambda_{ik}p_{km} \le V_{km}, \quad \forall k, m\]

\noindent which states that each CP cannot route requests to a cache at the volume exceeding its maximum. Incorporating this constraint into problem formulation~\eqref{eq:multi_cache_model}, we have the bandwidth-constrained formulation as follows:

\[ \text{maximize } \sum_{k=1}^K w_kU_{k}(h_{k}(\mathrm{\textbf{C}_k}, \mathrm{\textbf{P}_k})) \]

\noindent subject to

\begin{equation} \label{eq:mulcache_model_bandwidth}
\begin{aligned}
  \sum_{k=1}^K C_{km} \leq C_m,  \quad  \forall m \quad \text{(c1)}\\
  \sum_{m=1}^M p_{km} = 1, \quad \forall k \quad \text{(c2)}\\
  0 \leq p_{km} \leq a_{km}, \quad \forall k, m \quad \text{(c3)}\\
  \sum_{i=1}^{N_k} \lambda_{ik}p_{km} \le V_{km}, \quad \forall k, m \quad \text{(c4)}\\
\end{aligned}
\end{equation}

\noindent where contraints (c1)$\sim$(c3) are exactly the same constraints as that in the original problem formualtion.

For the above optimization problem, we have the following routing structure for its optimal solution.

\begin{theorem}
\label{thrm:no_splitting_bandwidth}
The optimal solution to problem~\eqref{eq:mulcache_model_bandwidth} is the one such that each CP directs its requests to at most one cache at the volume less than the maximum volume, and it either does not direct or directs requests at the maximum volume to the other caches.
\end{theorem}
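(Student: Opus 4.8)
The plan is to mirror the consolidation argument behind Theorem~\ref{thrm:no_splitting}, exploiting the single structural fact that makes everything tractable: by Lemma~\ref{lemma:hit_prob}, for a fixed slice size the per-cache hit rate is \emph{linear} in the routing fraction. Write $\Lambda_k := \sum_{i=1}^{N_k}\lambda_{ik}$ and $g_{km}(C_{km}) := \sum_{i=1}^{N_k}\lambda_{ik}\,o_{ik}(\lambda_{ik},C_{km})$, so that $h_{km}=p_{km}\,g_{km}(C_{km})$, $h_k=\sum_m p_{km}\,g_{km}(C_{km})$, and constraint (c4) becomes $\Lambda_k p_{km}\le V_{km}$, i.e.\ $p_{km}\le \bar p_{km} := \min\{a_{km},\,V_{km}/\Lambda_k\}$. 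First I would take any optimal $(\mathbf{C}^\ast,\mathbf{P}^\ast)$ and \emph{freeze} $\mathbf{C}=\mathbf{C}^\ast$. Then (c1) is inert, (c2)--(c4) decouple across CPs, and $\sum_k w_k U_k(h_k)$ splits into a sum whose $k$-th term depends only on $\mathbf{P}_k$; since $w_k>0$ and $U_k$ is increasing, one may assume $\mathbf{P}^\ast_k$ maximizes $h_k=\sum_m p_{km}\,g_{km}(C^\ast_{km})$ over $\{\mathbf{P}_k : \sum_m p_{km}=1,\ 0\le p_{km}\le \bar p_{km}\}$ --- otherwise swapping in a better $\mathbf{P}_k$ keeps every constraint satisfied and raises the objective. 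This is a one-dimensional fractional-knapsack LP with constant ``values'' $g_{km}(C^\ast_{km})$.

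The core step is then a standard exchange argument on this LP. Suppose that, for some CP $k$, $\mathbf{P}^\ast_k$ puts mass strictly inside the box on two distinct caches, $0<p^\ast_{km_1}<\bar p_{km_1}$ and $0<p^\ast_{km_2}<\bar p_{km_2}$, with say $g_{km_1}(C^\ast_{km_1})\ge g_{km_2}(C^\ast_{km_2})$. Shifting mass $\delta$ from $m_2$ to $m_1$ preserves $\sum_m p_{km}=1$, is feasible for every $\delta$ up to $\min\{\bar p_{km_1}-p^\ast_{km_1},\ p^\ast_{km_2}\}>0$, and changes $h_k$ by $\delta\,[g_{km_1}(C^\ast_{km_1})-g_{km_2}(C^\ast_{km_2})]\ge 0$, so the objective does not drop; pushing $\delta$ to its largest feasible value drives $m_1$ to its cap $\bar p_{km_1}$ or $m_2$ to $0$, strictly reducing the number of caches on which CP $k$ routes strictly inside the box. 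Iterating over all such pairs and all CPs terminates after finitely many steps and produces an optimal solution in which every CP routes to at most one cache with a fraction in $(0,\bar p_{km})$, every other cache receiving $0$ or exactly $\bar p_{km}$.

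Finally I would translate fractions back into volumes: a connected cache with $p_{km}=\bar p_{km}$ receives volume $\Lambda_k\bar p_{km}=\min\{\Lambda_k,V_{km}\}\le V_{km}$, with equality exactly when $V_{km}\le\Lambda_k$; a fraction in $(0,\bar p_{km})$ is a volume strictly below $V_{km}$; and $p_{km}=0$ is ``no traffic''. Hence the structured optimum is precisely of the asserted form, and it collapses to Theorem~\ref{thrm:no_splitting} when the caps are removed. The hard part of a fully rigorous write-up will be the bookkeeping around the phrase ``the optimal solution'': strictly speaking the exchange argument shows there \emph{exists} an optimal solution of this form, and in the degenerate case $g_{km_1}(C^\ast_{km_1})=g_{km_2}(C^\ast_{km_2})$ the shift leaves $h_k$ unchanged, so uniqueness of the structured solution can fail --- though existence is all that is needed to reformulate~\eqref{eq:mulcache_model_bandwidth} by enumerating such routing configurations. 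As in Theorem~\ref{thrm:no_splitting}, note that only monotonicity of $U_k$, not its concavity, is used.
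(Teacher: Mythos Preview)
Your proof is correct and follows essentially the same consolidation argument as the paper: freeze the cache allocation, observe that $h_k$ is linear in the routing fractions, and shift mass toward caches with larger allocated slices (equivalently, larger $g_{km}$) until at most one fraction is strictly interior. The paper's proof is a terser, greedy-flavored description of exactly this exchange, and your remark that strictly only \emph{existence} (not uniqueness) of such a structured optimum follows is a valid sharpening of the paper's phrasing.
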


\begin{proof}
From Appendix~\ref{sec:proof_thm1}, we know that by moving requests of each CP from one cache to another which allocates a larger cache slice, the hit rate can be increased. Furthermore, since there is bandwidth limitation between each CP and each cache, when the bandwidth limitation to a cache is reached, each CP will try to route its remaining requests to another cache which allocates to it the next largest cache slice so as to maxmize its hit rate, until the bandwidth limitation is reached. This process goes on and terminates when all requests of the CP are routed, with a routing configuration given by the theorem.
\end{proof}

Similarly, with small problem sizes, we can devise efficient algorithm to solve problem~\eqref{eq:mulcache_model_bandwidth} by converting it into a series of subproblems $Pr_1,Pr_2,...,Pr_L$ where each $Pr_i$ corresponds to a cache resource
allocation problem with one specific request routing, i.e., the request routing of each CP is as such that described by Theorem~\ref{thrm:no_splitting_bandwidth}. Note that each $Pr_i$ is convex as we have proved in Appendix~\ref{sec:proof_thm2}. The optimal solution to problem~\eqref{eq:mulcache_model_bandwidth} is the one with the largest objective function value.

Likewise, we can prove the problem has a biconvexity structure and hence obtain its efficient solution algorithms.
However, not all solutions corresponding to the routing configurations given by Theorem~\ref{thrm:no_splitting_bandwidth} are partial optimums.
Instead, they can be obtained using alogrithm {\it{ACS (Alternate Convex Search)}}~\cite{wendell1976minimization} by first optimizing cache allocation variables, with starting points whose routing configurations are given by Theorem~\ref{thrm:no_splitting_bandwidth}. 


\subsubsection{Decentralized Mechanism}
Based on Theorem~\ref{thrm:no_splitting_bandwidth} and using the same Lagrangian-based dual decomposition technique as in Section~\ref{sec:decentralized_algorithm}, we can formulate CP $k$'s problem with a given request routing $\mathrm{\textbf{P}_k}$'s as

\noindent \textbf{CP $k$'s problem:}

\begin{equation} \label{eq:CP_problem_bandwidth}
\begin{aligned}
  \mathop \text{maximize }\limits_{\mathrm{\textbf{C}_k}} (w_kU_{k}(h_k(\mathrm{\textbf{C}_k}))-\sum_{m \in H(k)}\lambda_mC_{km})
\end{aligned}
\end{equation}

\noindent where

\begin{itemize}
\item $U_{k}(h_k(\mathrm{\textbf{C}_k}))=U_k( \sum_{m \in I(k)}h_{km}(C_{km}, \frac{V_{km}}{\sum_{i=1}^{N_k}\lambda_{ik}})+h_{ks}(C_{ks}, 1-\frac{\sum_{m \in I(k)}V_{km}}{\sum_{i=1}^{N_k}\lambda_{ik}}))$;
\item $I(k)$ is defined as the set of caches to which CP $k$ directs its requests at the maximum volume;
\item $s$ is the cache to which CP $k$ directs its residual requests. 
\end{itemize}

\noindent Note that ($I(k)$, $s$) is determined by $\mathrm{\textbf{P}_k}$.

Using the same reasoning as in the proof of Theorem~\ref{thrm:concavity_CP_problem}, we can prove the following theorem.

\begin{theorem}
\label{thrm:concavity_2}
The cache resource allocation problem~\eqref{eq:CP_problem_bandwidth} has a unique optimal solution.
\end{theorem}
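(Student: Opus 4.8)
The plan is to mirror the argument behind Theorem~\ref{thrm:concavity_CP_problem}: establish that the objective of problem~\eqref{eq:CP_problem_bandwidth} is (strictly) concave over a convex feasible set, so that a maximizer exists and is unique. First I would identify the feasible set. In~\eqref{eq:CP_problem_bandwidth} the routing $\mathbf{P}_k$ is fixed, which in turn fixes the configuration $(I(k),s)$ and the routing fractions $V_{km}/\sum_{i=1}^{N_k}\lambda_{ik}$ for $m\in I(k)$ and $1-\sum_{m\in I(k)}V_{km}/\sum_{i=1}^{N_k}\lambda_{ik}$ for cache $s$; consequently constraints (c2)--(c4) of~\eqref{eq:mulcache_model_bandwidth} have all been consumed and impose nothing on $\mathbf{C}_k$, leaving the feasible set as $\{\mathbf{C}_k : C_{km}\ge 0,\ m\in H(k)\}$, which is convex.

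Next I would verify concavity of the objective. By Lemma~\ref{lemma:hit_prob}, for any fixed nonzero fraction $p$ one has $h_{km}(C_{km},p)=p\sum_{i=1}^{N_k}\lambda_{ik}o_{ik}(\lambda_{ik},C_{km})$, i.e.\ the CP-$k$ hit-rate profile scaled by the nonnegative constant $p$; by Theorem~\ref{thrm:policy_concavity} that profile is concave and increasing in cache size for LRU/FIFO/Random, hence each $h_{km}(\cdot,p)$ is concave in its own coordinate $C_{km}$. The argument of $U_k$ in~\eqref{eq:CP_problem_bandwidth} is the sum $\sum_{m\in I(k)}h_{km}(C_{km},V_{km}/\sum_{i=1}^{N_k}\lambda_{ik})+h_{ks}(C_{ks},1-\sum_{m\in I(k)}V_{km}/\sum_{i=1}^{N_k}\lambda_{ik})$ of such concave functions in disjoint coordinates, so $h_k(\mathbf{C}_k)$ is concave and componentwise increasing on the feasible set. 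Composing with the concave nondecreasing $U_k$ preserves concavity, and subtracting the linear term $\sum_{m\in H(k)}\lambda_m C_{km}$ keeps it concave. This is precisely the step in the proof of Theorem~\ref{thrm:concavity_CP_problem}; the only genuinely new element is that $h_k$ is now a sum of several single-cache terms rather than one, which additivity of concavity over separate coordinates handles transparently.

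Finally, for uniqueness I would upgrade to strict concavity. The proof of Theorem~\ref{thrm:policy_concavity} in Appendix~\ref{sec:proof_concavity} in fact yields strict concavity of the single-cache hit-rate profile, so each active term $h_{km}(\cdot,p)$ with $p>0$ is strictly concave in $C_{km}$ and hence $h_k$ is strictly concave in $\mathbf{C}_k$; since $U_k$ is (strictly) increasing, $U_k(h_k(\cdot))$ is strictly concave, and the linear penalty does not affect strictness. A strictly concave function on a convex set has at most one maximizer. Existence follows by coercivity: $U_k(h_k(\cdot))$ is bounded above (the hit rate cannot exceed $\sum_{i=1}^{N_k}\lambda_{ik}$) while $-\lambda_m C_{km}\to-\infty$ as $C_{km}\to\infty$ whenever $\lambda_m>0$, so one may restrict to a compact box; for coordinates with $\lambda_m=0$ the optimum is the (finite) saturation point of the corresponding hit rate.

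The step I expect to be the main obstacle is not any single inequality but the degenerate-case bookkeeping that makes the strict-concavity claim airtight: caches $m\in H(k)$ whose optimal allocation is $0$; the case $I(k)=\emptyset$, in which the bandwidth structure collapses back to the unconstrained formulation of Theorem~\ref{thrm:concavity_CP_problem}; and the case in which the residual fraction $1-\sum_{m\in I(k)}V_{km}/\sum_{i=1}^{N_k}\lambda_{ik}$ equals $0$, so the $h_{ks}$ term vanishes and $C_{ks}$ must be dropped from the variable set for strictness to hold along every effective direction. These are routine once the active variable set is pinned down correctly, but they are exactly where a careless appeal to ``the same reasoning'' could fail.
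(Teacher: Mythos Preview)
Your proposal is correct and follows essentially the same approach as the paper: the paper merely states that ``the same reasoning as in the proof of Theorem~\ref{thrm:concavity_CP_problem}'' applies, i.e., concavity of each $h_{km}$ in $C_{km}$ propagates through the sum and the concave increasing $U_k$ to make the objective concave. Your write-up is in fact more careful than the paper's own argument, since you explicitly address strict concavity for uniqueness, coercivity for existence, and the degenerate cases (zero residual fraction, $\lambda_m=0$), all of which the paper silently glosses over.
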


The size of cache slice that CP $k$ requires from cache $m$ is thus:

\begin{equation} \label{eq:CP_cache_silice_bandwidth}
\begin{aligned}
  \text{arg  }\{\mathop \text{maximize } \limits_{C_{km}, m \in I(k)\cup \{s\} } w_kU_{k}(h_k(\mathrm{\textbf{C}_k)}-\sum_{m \in I(k)\cup \{s\}}\lambda_mC_{km}\}
\end{aligned}
\end{equation}

\noindent Note that $C_{km}=0$ if $m \notin I(k)\cup \{s\}$ as CP $k$ will not require cache resources from these caches. The algorithm for each cache to update its price remains unchanged as that given by Eq.~\eqref{eq:Cache_adaptation}.


\subsection{Latency Optimization}
This subsection considers user-centric problem formulation. In particular, we consider the scenario where the objective is to optimize user-perceived content delivery latency.

\subsubsection{Problem Formulation}
Let $d_{km}$ be the network delay of fetching content of CP $k$ from cache $m$ if the requested
file is cached, and $d^{0}_{km}$ be the delay of fetching it from content custodians (remote servers), i.e., when the request is missed and cache $m$ forwards it to the back-end server, downloads the file and forwards it back to the user. For each CP $k$, denote by $t_{km}$ its average latency of fetching content through cache $m$, and $t_k$ the overall average content delivery latency. We have

\[t_{km}(C_{km}, p_{km})=\frac{d_{km}h_{km}+d^0_{km}(\sum_{i=1}^{N_k}\lambda_{ik}p_{km}-h_{km})}{\sum_{i=1}^{N_k}\lambda_{ik}p_{km}}\]

\noindent and

\[t_k=\sum_{m=1}^{M}t_{km}p_{km}=\sum_{m=1}^{M}\frac{d_{km}h_{km}+d^0_{km}(\sum_{i=1}^{N_k}\lambda_{ik}p_{km}-h_{km})}{\sum_{i=1}^{N_k}\lambda_{ik}}\]

Furthermore, associated with each CP $k$ is a utility $U_k(t_k)$ that is a concave, decreasing function of $t_k$. We have the latency-optimization formulation as follows:

\[ \text{maximize } \sum_{k=1}^K w_kU_{k}(t_{k}(\mathrm{\textbf{C}_k}, \mathrm{\textbf{P}_k})) \]

\noindent subject to

\begin{equation} \label{eq:latency_model}
\begin{aligned}
  &\sum_{k=1}^K C_{km} \leq C_m,  \quad  \forall m \\
  &\sum_{m=1}^M p_{km} = 1, \quad \forall k \\
  &0 \leq p_{km} \leq a_{km}, \quad \forall k, m \\
\end{aligned}
\end{equation}

Since $t_{km}$ is linear in $h_{km}$, and $U_k(t_k)$ is a concave decreasing function of $t_k$, we can prove that problem~\eqref{eq:latency_model} has similar properties as problem~\eqref{eq:multi_cache_model}, i.e., Theorem~\ref{thrm:no_splitting}, Theorem~\ref{thrm:concavity} and Theorem~\ref{thrm:biconvexity} hold for problem~\eqref{eq:latency_model}. Therefore, the same centralized algorithm with the same complexity can be applied to obtain solutions.

\subsubsection{Decentralized Algorithm}

Using Lagrangian-based dual decomposition technique as in Section~\ref{sec:decentralized_algorithm}, we can formulate CP $k$'s problem with a given request routing $\mathrm{\textbf{P}_k}$'s as:

\noindent \textbf{CP $k$'s problem:}

\begin{equation} \label{eq:CP_problem_latency}
\begin{aligned}
  \mathop \text{maximize }\limits_{\mathrm{\textbf{C}_k}} (w_kU_{k}(t_k(\mathrm{\textbf{C}_k}))-\lambda_{s(k)}C_{ks(k)})
\end{aligned}
\end{equation}

\noindent where we define $U_{k}(t_k(\mathrm{\textbf{C}_k}))=U_k(t_{ks(k)}(C_{ks(k)}, 1))$, i.e., CP $k$ routes all its requests to cache $s(k)$ and the required cache amount is such that it leads to the smallest content delivery latency.

It can be proved that problem~\eqref{eq:CP_problem_latency} is convex and as a result, the same decentralized algorithm as in Section~\ref{sec:decentralized_algorithm} can be applied to obtain the optimal request routing and cache allocations.


%



\section{Numerical Studies and Evaluation}
\label{sec:evaluation}
In this section, we present numerical results to show: 1) the efficacy of our joint cache resource allocation and request routing mechanism and 2) the convergence of our decentralized algorithms to optimal solution.

\subsection{Evaluation Setup}

We consider a cache network which comprises of 3 caches and 2 content providers. CP 1 connects to Cache 1 and Cache 2, while CP 2 connects to Cache 2 and Cache 3. Thus Cache 2 is shared by both content providers. The three caches adopt LRU policies and are of size $C_1=500$, $C_2=1200$, and $C_3=500$, respectively. The two content providers serve $N_1 = 10^4$ and $N_2 = 2\times 10^4$ content files. File popularities for the two providers follow Zipf distributions with parameters $\alpha_1 = 0.6$ and $\alpha_2 = 0.8$, respectively. Requests for the files from the two content providers arrive as Poisson processes with aggregate rates $R_1 = 10$ and $R_2=15$. We set $w_1 = w_2 = 1$ assuming that the two content providers are equally important to the cache manager. The utilities of the two providers are expressed as $U_1(h) = U_2(h)=h$ so that the goal is to maximize the cache utilization efficiency (aggregate cache hit rates).

\subsection{Results for Basic Model}

\subsubsection{Efficacy of the mechanism}
In order to validate its efficacy, we compare the overall utility observed by our mechanism to that by simple static routing, i.e., when both content providers route their requests to the connected caches at equal probabilities. Note that even under simple static routing, cache resource allocation is obtained by solving the corresponding convex optimization problem. Figure~\ref{fig:utility_comparison} shows how the utility changes as the capacity of Cache 2 varies. It can be seen that under all cache capacities our mechanism outperforms. Moreover, the larger the shared cache, the more utility gain.

\begin{figure}
  \centering
  \includegraphics[width=0.28\textwidth]{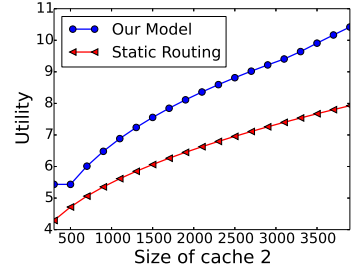}\\
  \caption{Efficacy of our model as compared to static routing.}\label{fig:utility_comparison}
\end{figure}

Figure~\ref{fig:traffic_distribution} shows traffic distributions of the two content providers, from which we observe obvious hand-overs. More specifically, it can be seen that when the size of shared cache is small ($300 \le C_2 \le 500$), both content providers use its dedicated cache only; when the shared cache becomes larger ($500 < C_2 \le 3100$), CP 2 routes all its traffic to Cache 2; and when it continues to grow ($C_2 > 3100$), both providers route all their traffic to the shared cache.

\begin{figure}
    \centering
    \subfigure[CP 1]
    {
        \includegraphics[width=0.225\textwidth]{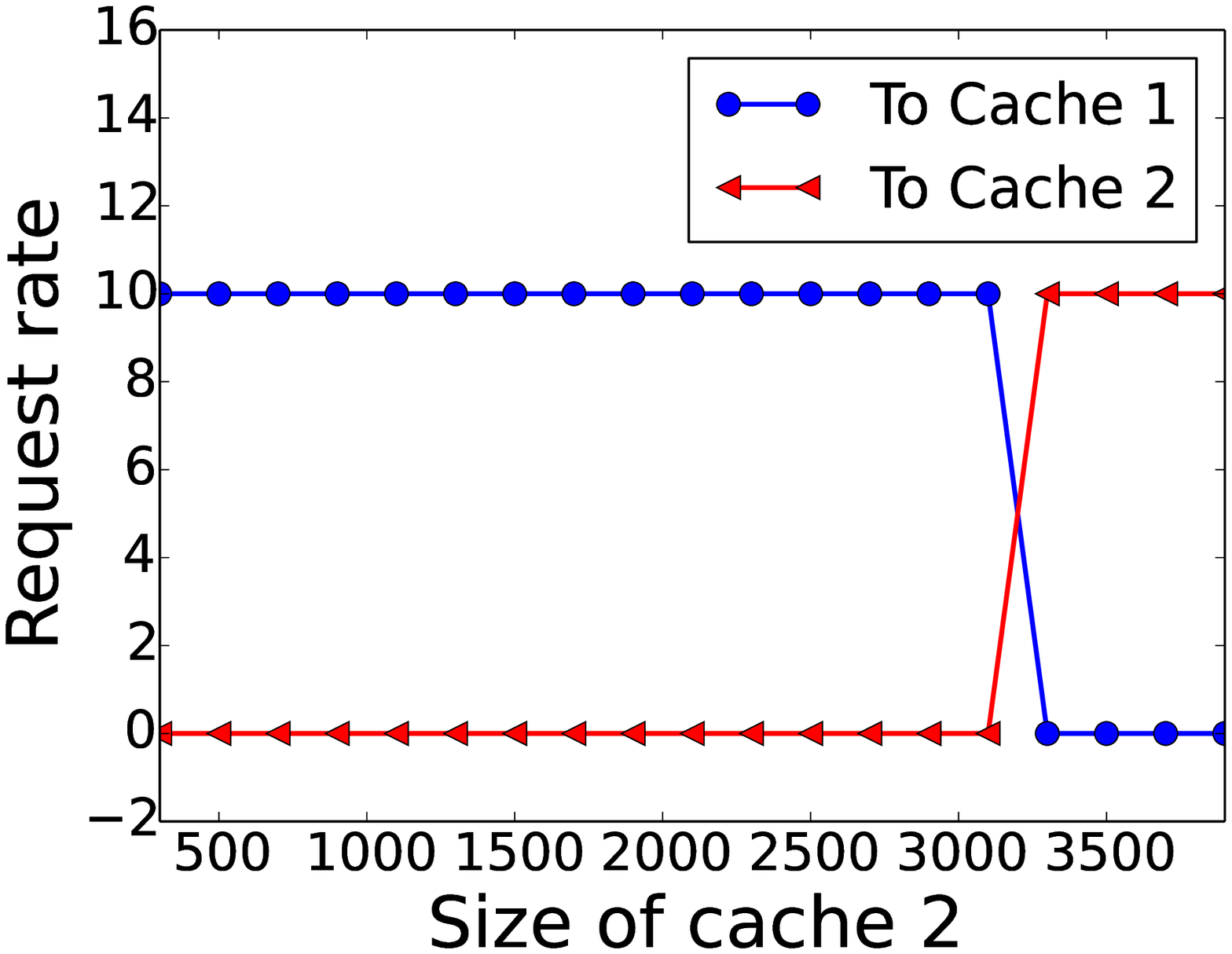}
        \label{fig:first_sub}
    }
    \subfigure[CP 2]
    {
        \includegraphics[width=0.225\textwidth]{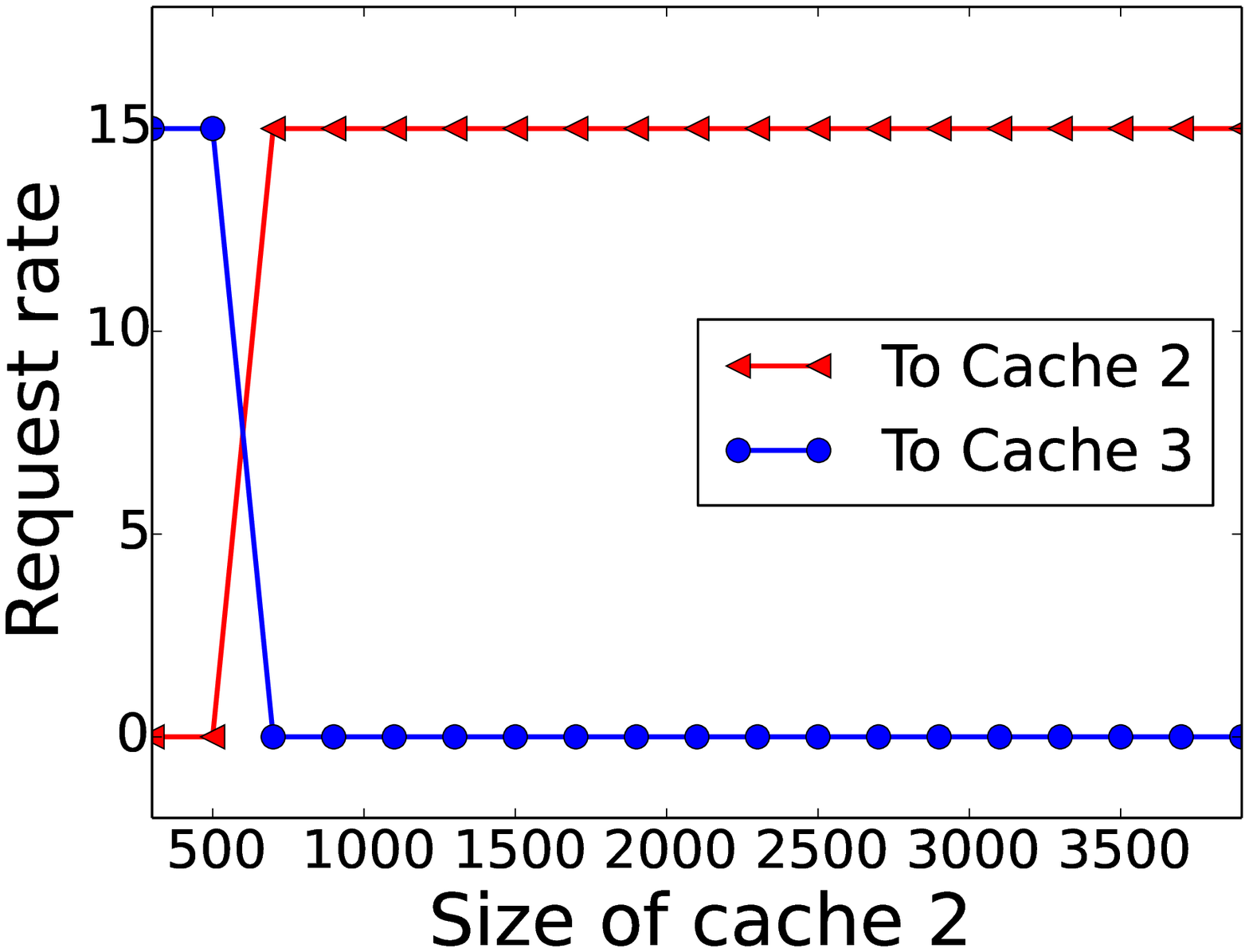}
        \label{fig:second_sub}
    }
    \caption{Traffic distribution of the two content providers.}
    \label{fig:traffic_distribution}
\end{figure}

Figure~\ref{fig:partition} shows how the shared source is allocated to the two providers.
As expected, it can be seen that all the cache resource is allocated to CP 2 when only CP 2 directs requests to it. The cache resource is shared when both providers route their traffic to the cache. Note that the amount of resource allocated to them is determined by the optimization model.

\begin{figure}
  \centering
  \includegraphics[width=0.28\textwidth]{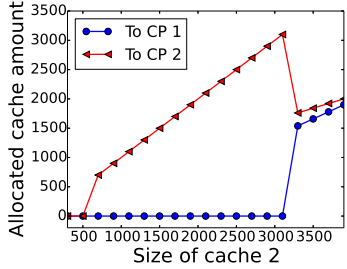}\\
  \caption{Resource allocation of Cache 2 to the two content providers.}\label{fig:partition}
\end{figure}

From Figure~\ref{fig:hit_rate} we observe that the hit rate of a content provider will not keep monotonically increasing as the shared cache resource increases. In particular, we observe that the hit rate of CP 2 decreases when the capacity of Cache 2 varies from 3000 to 3100. This phenomena is due to the fact that when the size of Cache 2 is less than 3100, only CP 2 routes its traffic to Cache 2 and hence the resource is solely allocated to it; when the size of Cache 2 reaches 3100, CP 1 begins to route its traffic to it and hence the resource is shared among both providers, which certainly results in a decrease of the hit rate for CP 2. Nevertheless, the overall hit rate keeps increasing, as expected.

\begin{figure}
  \centering
  \includegraphics[width=0.28\textwidth]{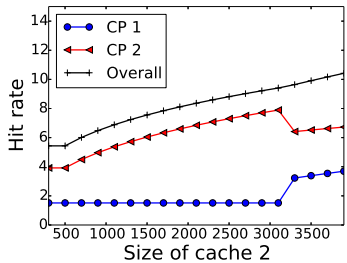}\\
  \caption{Observed hit rates as a function of the shared cache size.}\label{fig:hit_rate}
\end{figure}

\subsubsection{Impact of different parameters}
We next look at the effect of various parameters on the competing process for the shared cache resources by the two content providers. To achieve this, we set $C_2 = 5000$ (large enough) so that both providers will have the opportunity to direct their requests to Cache 2. We fix the parameter of CP 2, and study the effect of changing the aggregate request rate $R_1$ of CP 1, the weight parameter $w_1$, and the skewness parameter of the Zipfian file popularity distribution. Fig.~\ref{fig:parameter_impact} shows the effect of these different parameters on the observed hit rates and allocated cache amount of Cache 2 to the two providers.

\noindent (1) {\bf{Request rate:}} as $R_1$ increases, more cache resource is allocated to CP 1, until Cache 2 is solely occupied by CP 1. Consequently, the hit rate of CP 1 increases. Note that the hit rate of CP 1 grows exactly in linear with its request rate when $R_1 > 20$ since from then on the allocated cache resource to CP 1 does not change anymore. The results indicate that if the goal is to maximize the overall hit rate, then content providers with a larger request rate will have priority over the others in the competing process. 

\noindent (2) {\bf{Skewness parameter:}} when $\alpha_1$ increases, less cache resource is allocated to CP1. This is because with a large $\alpha_1$, a small fraction of files generates most of the traffic. Also it is surprisingly to observe that the hit rate of CP 1 increases and so does the overall hit rate. From this point of view, we conclude that a larger skewness of traffic distribution not only benefit content providers generating the traffic, but also others in system.

\noindent (3) {\bf{Weight:}} Clearly, the weight of content providers significantly influences the competing process for the shared cache resources. It is observed that: i) the hit rate of CP 1 as well as cache resource allocated to it grows with the increase of its weight $w_1$, until CP 1 fully occupies Cache 2; ii) although the overall hit rate also grows as we show in the figure, we argue that this is not the case in general. In fact, since there are only two content providers, the figures can also be interpreted as if we change the weight of CP 2 while fixing that of CP 1. Then the hit rate of the two content providers as well as that of the system will go into the reverse direction. Indeed, as the objective function of problem~\eqref{eq:multi_cache_model} shows, the overall hit rate not only depends on the weights of content providers, but also depends on their traffic characteristics.

\begin{figure*}[t]
    \centering
    \subfigure
    {
        \includegraphics[width=0.30\textwidth]{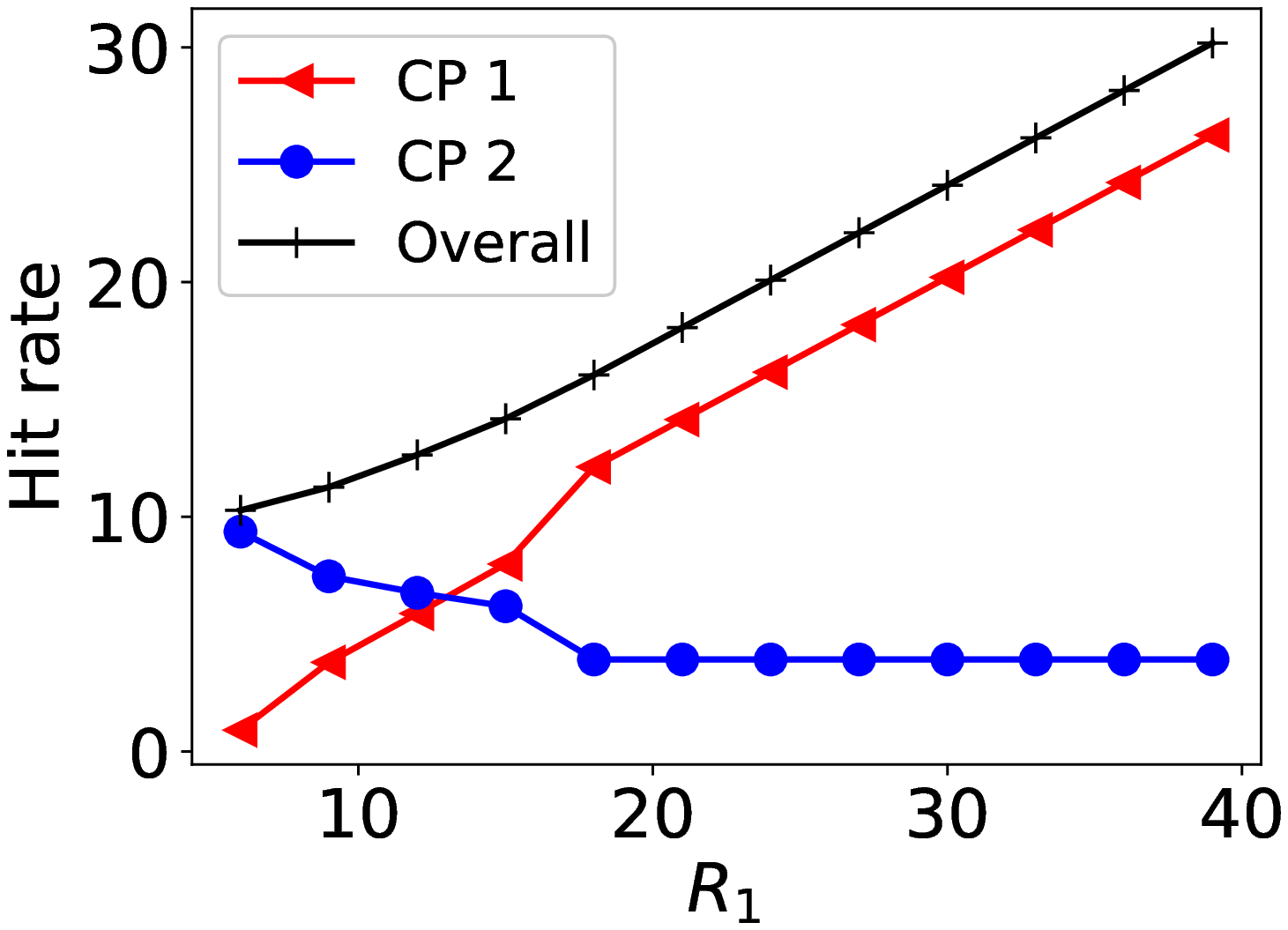}
    }
    \subfigure
    {
        \includegraphics[width=0.30\textwidth]{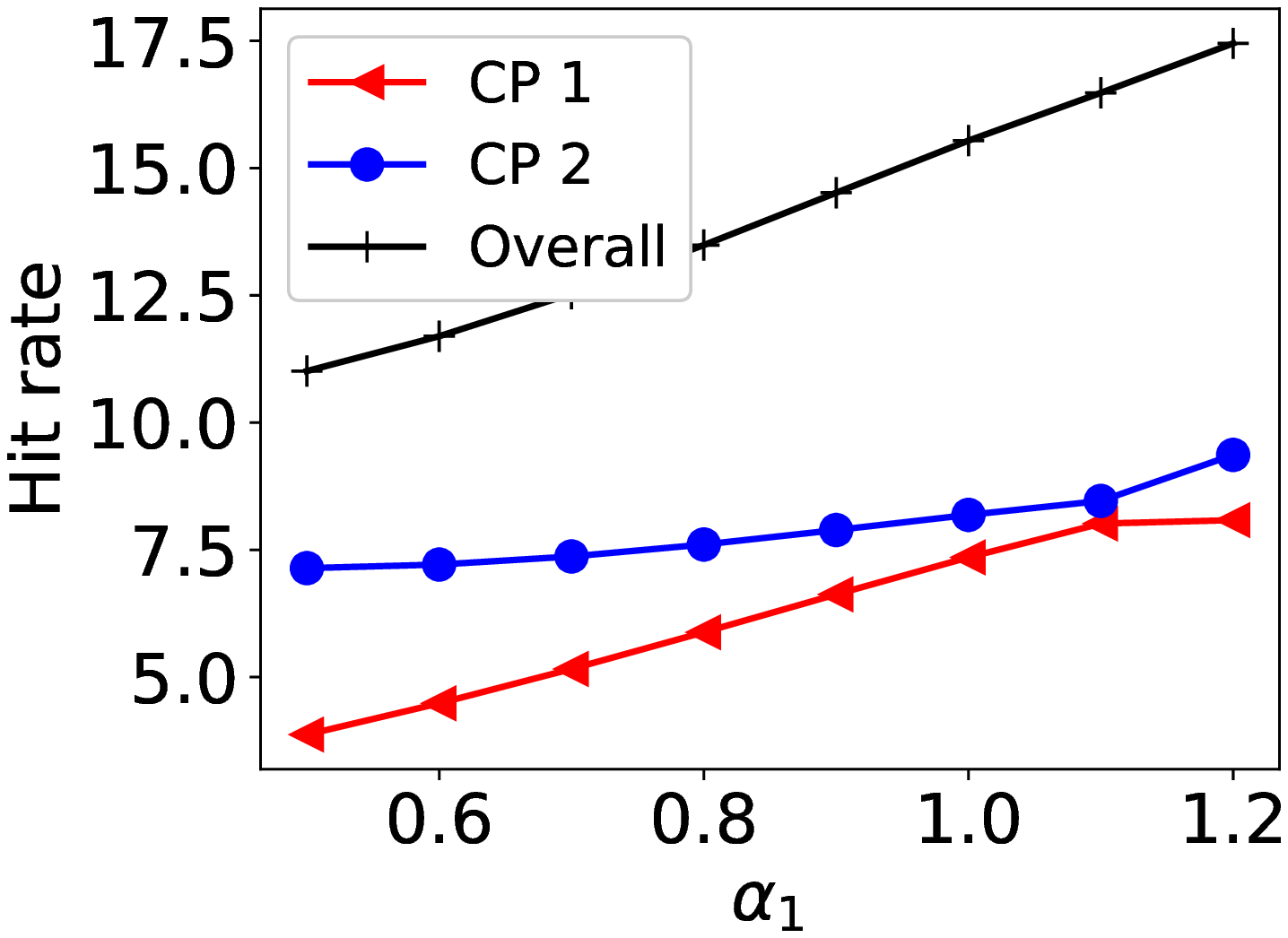}
    }
    \subfigure
    {
        \includegraphics[width=0.30\textwidth]{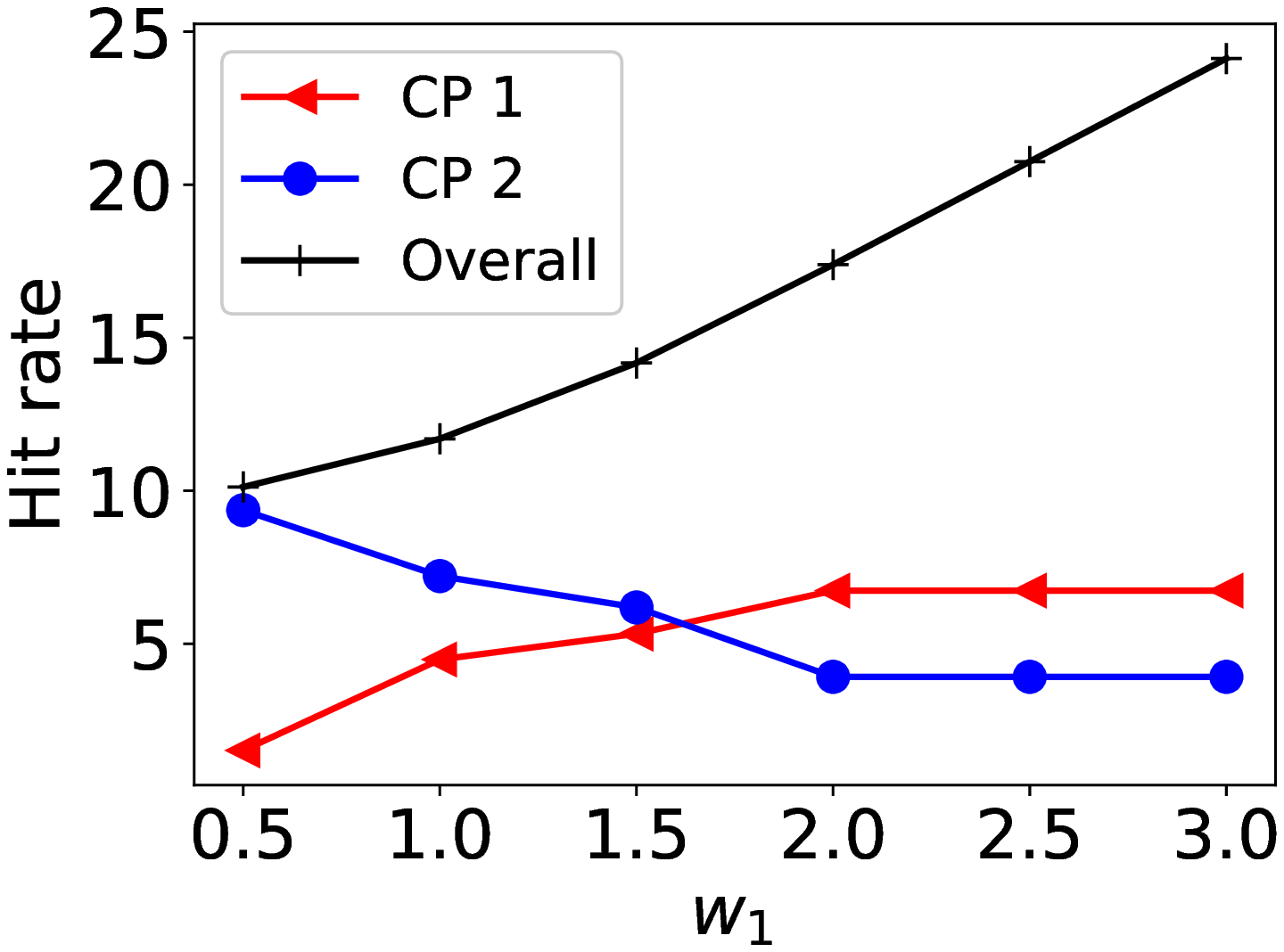}
    }\\
    \subfigure
    {
        \includegraphics[width=0.30\textwidth]{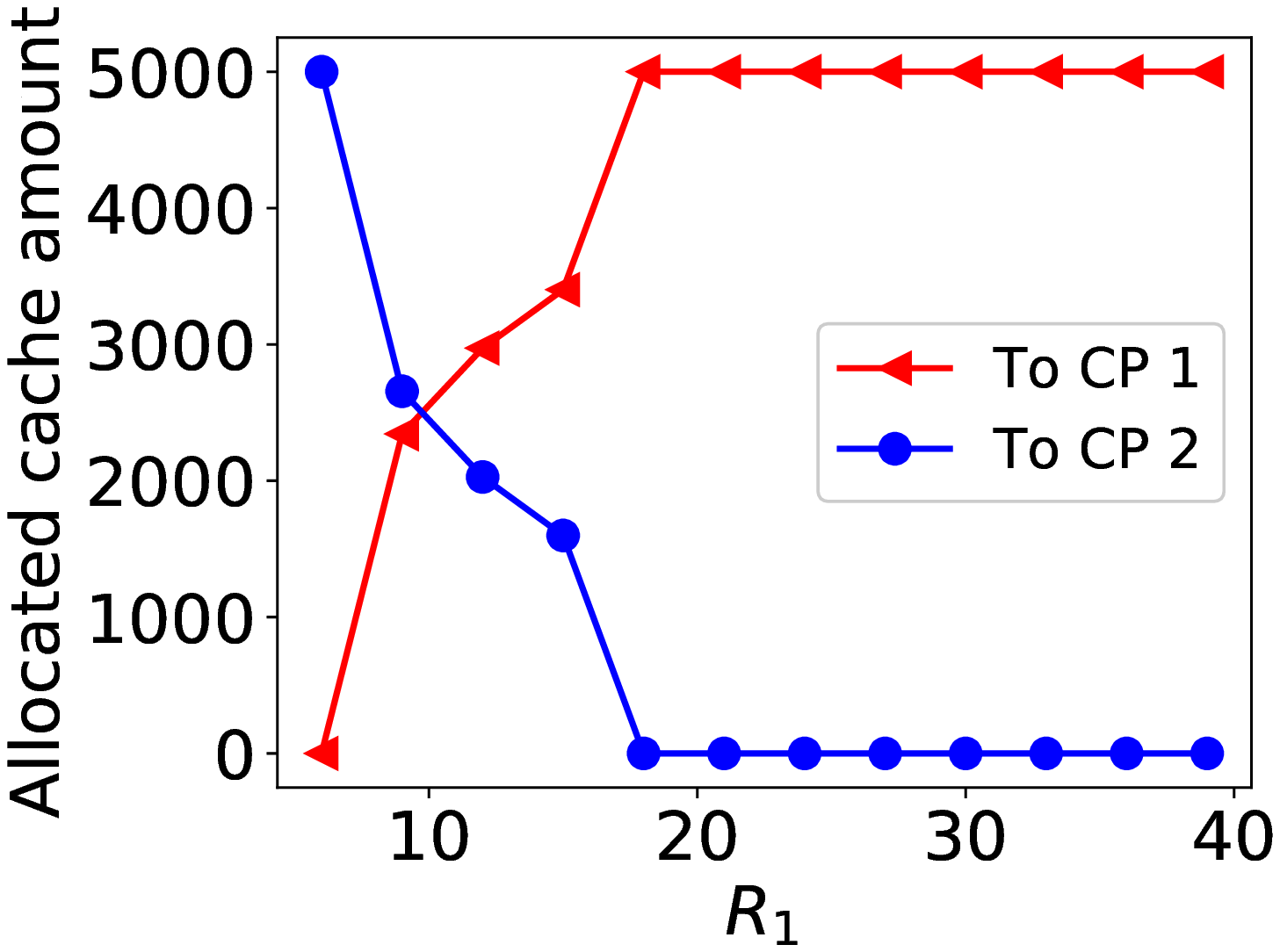}
    }
        \subfigure
    {
        \includegraphics[width=0.30\textwidth]{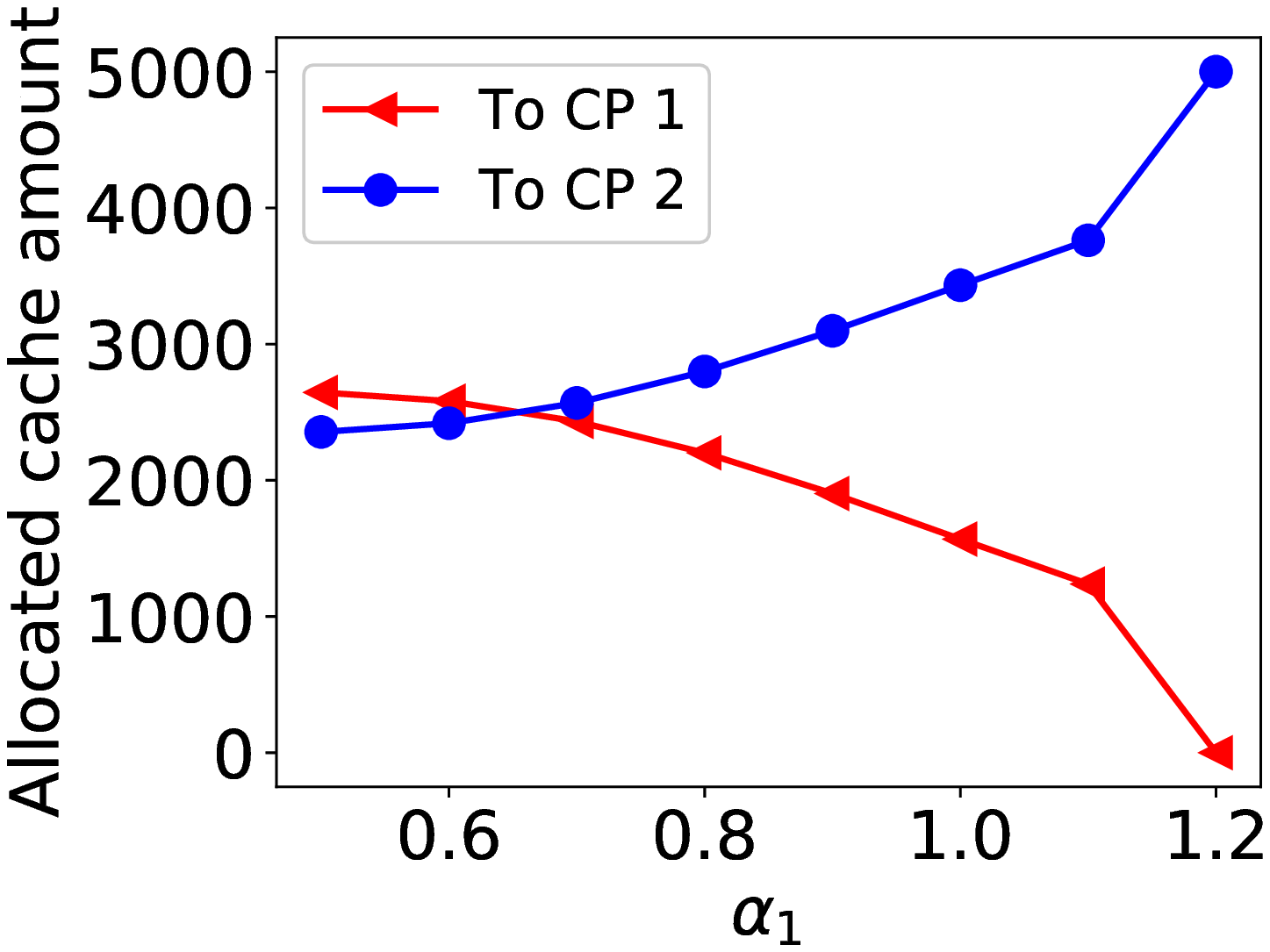}
    }
    \subfigure
    {
        \includegraphics[width=0.30\textwidth]{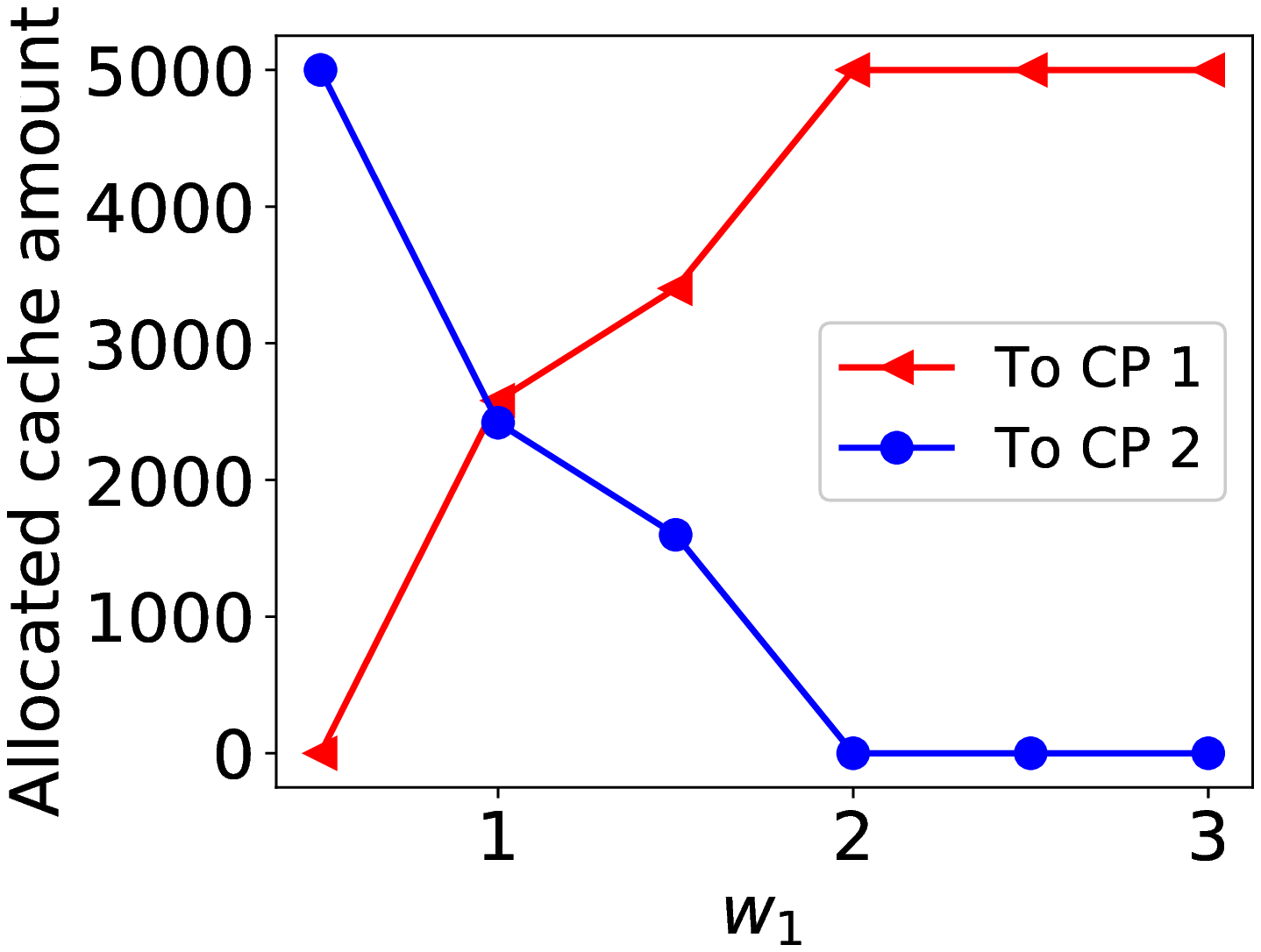}
    }
    \caption{Effect of the parameters on hit rates and cache resources allocated to content providers.}
    \label{fig:parameter_impact}
\end{figure*}

\subsubsection{Convergence of decentralized algorithm}
To investigate the performance of our decentralized algorithm, we fix the size of Cache 2 as $C_2=1900$ and choose the initial prices for the three caches as $\lambda_1^0=\lambda_2^0=\lambda_3^0=0$. The step size is set as $r=10^{-6}$. Figure~\ref{fig:dynamics} shows system dynamics as time goes on (here ``CP $k$ - Cache $m$'' means CP $k$ routes all its traffic to Cache $m$). It can be seen that our algorithm converges under all four request routings, and the optimal solution is the one with the request routing ``CP 1 - Cache 1, CP 2 - Cache 2", that is, CP 1 directs all its requests to Cache 1 and CP 2 directs all its requests to Cache 2. Looking back at Figure~\ref{fig:traffic_distribution} and Figure~\ref{fig:partition}, we can see that this result is in accordance with the centralized solution. Therefore, our algorithm converges to the optimal solution.

\begin{figure*}[t]
    \centering
    \subfigure
    {
        \includegraphics[width=0.15\textwidth]{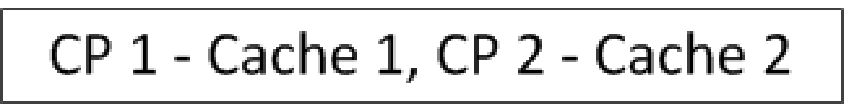}
    }\\
    \subfigure
    {
        \includegraphics[width=0.22\textwidth]{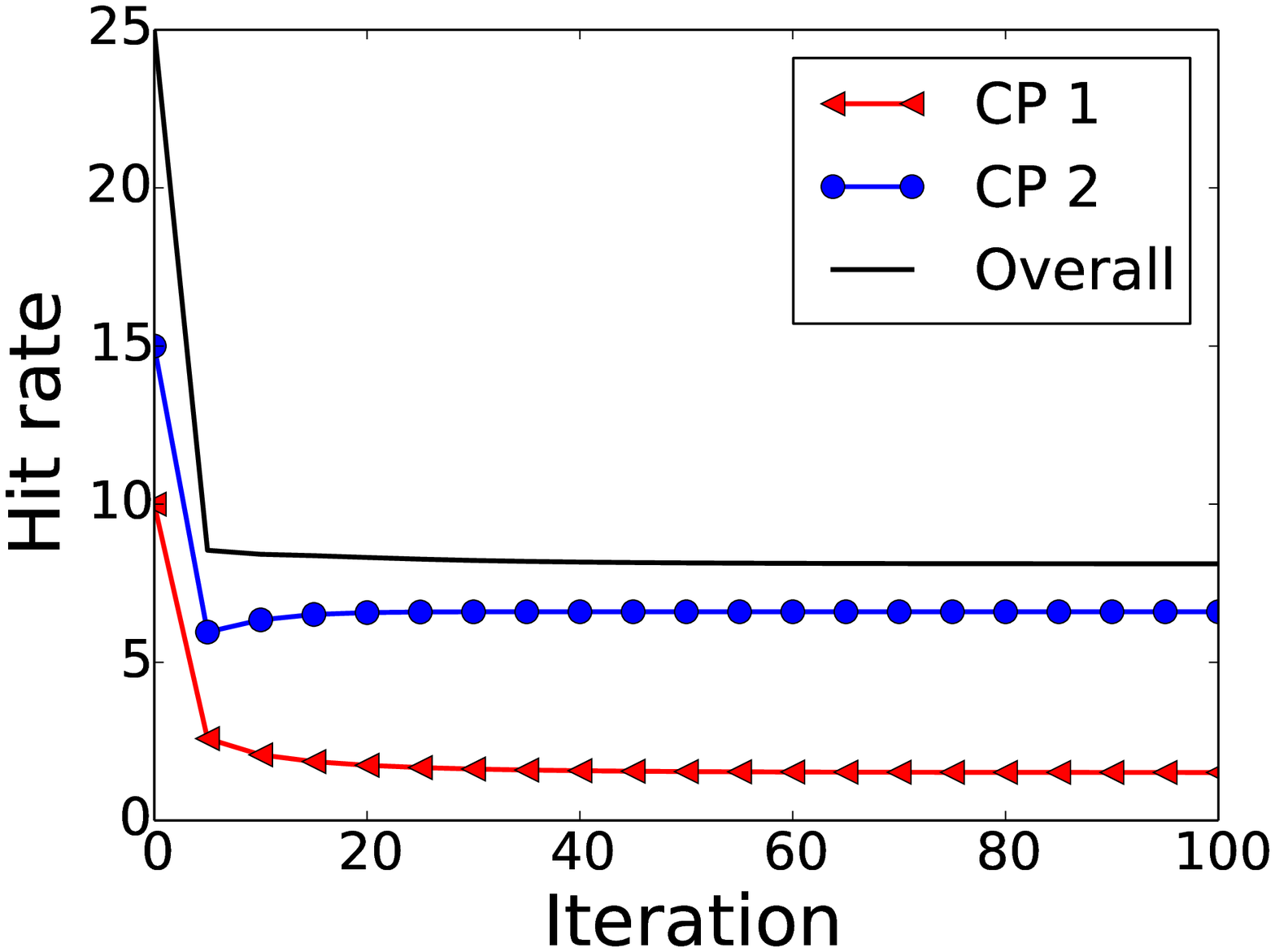}
    }
    \subfigure
    {
        \includegraphics[width=0.22\textwidth]{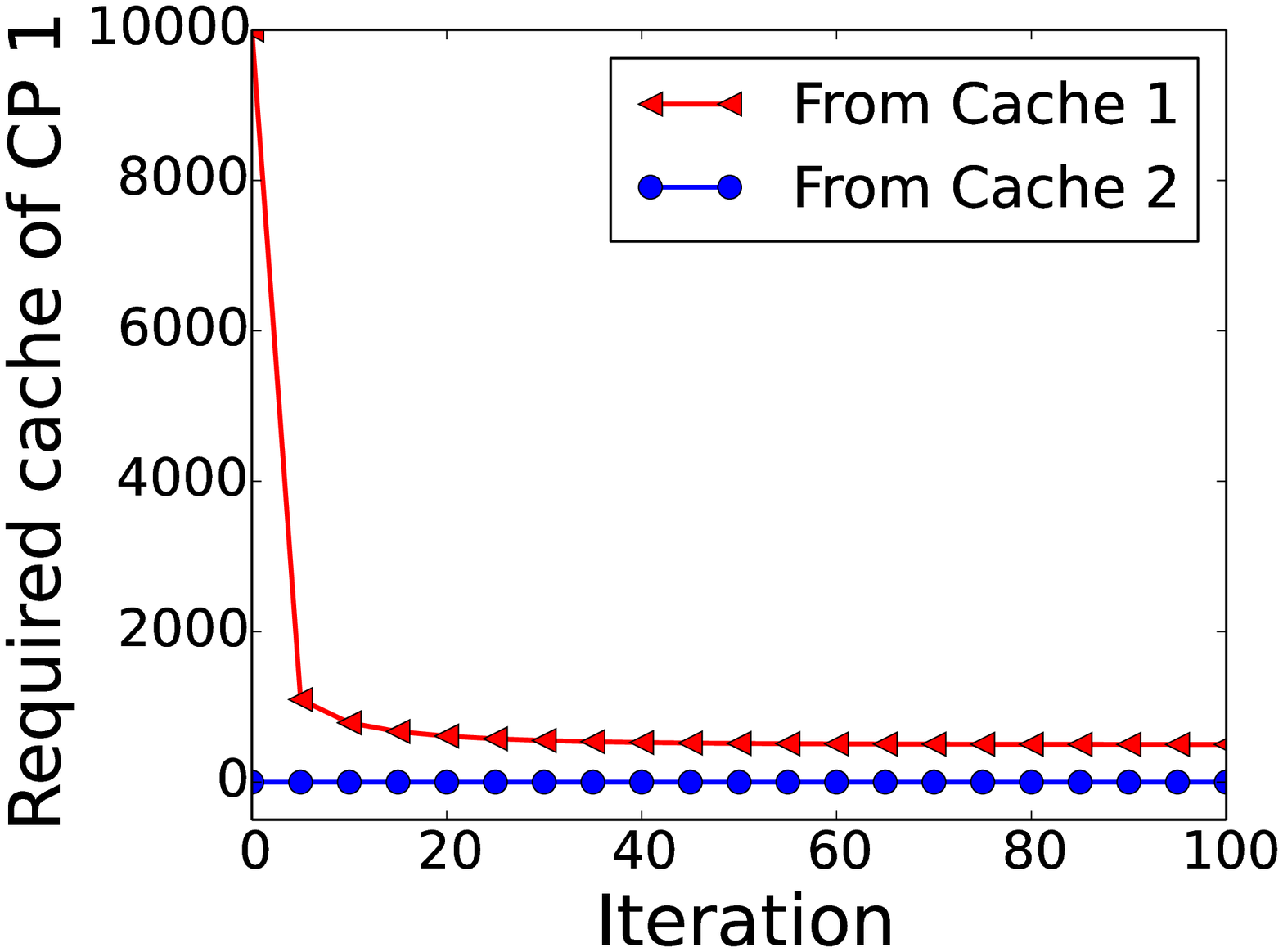}
    }
    \subfigure
    {
        \includegraphics[width=0.22\textwidth]{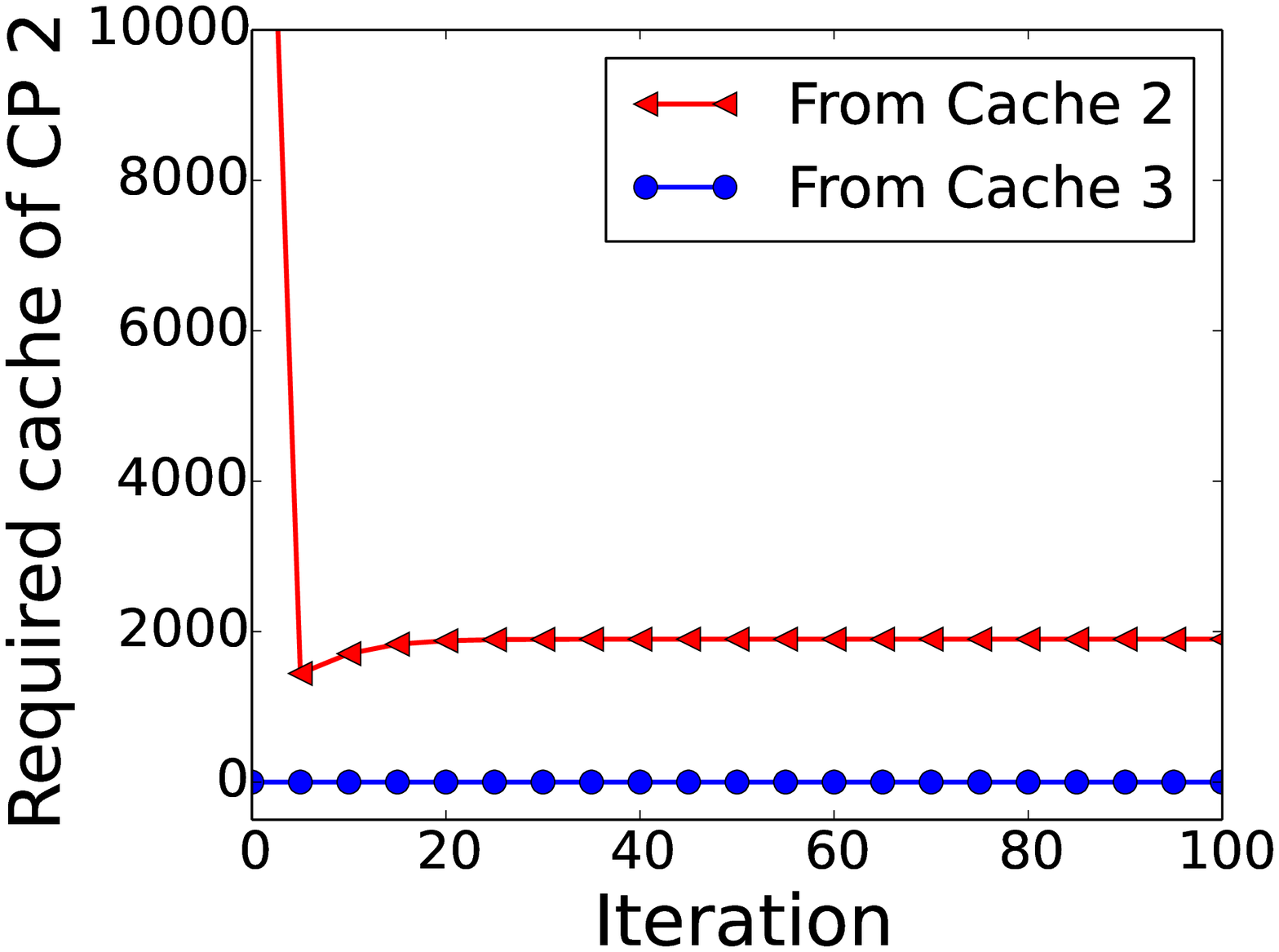}
    }
        \subfigure
    {
        \includegraphics[width=0.22\textwidth]{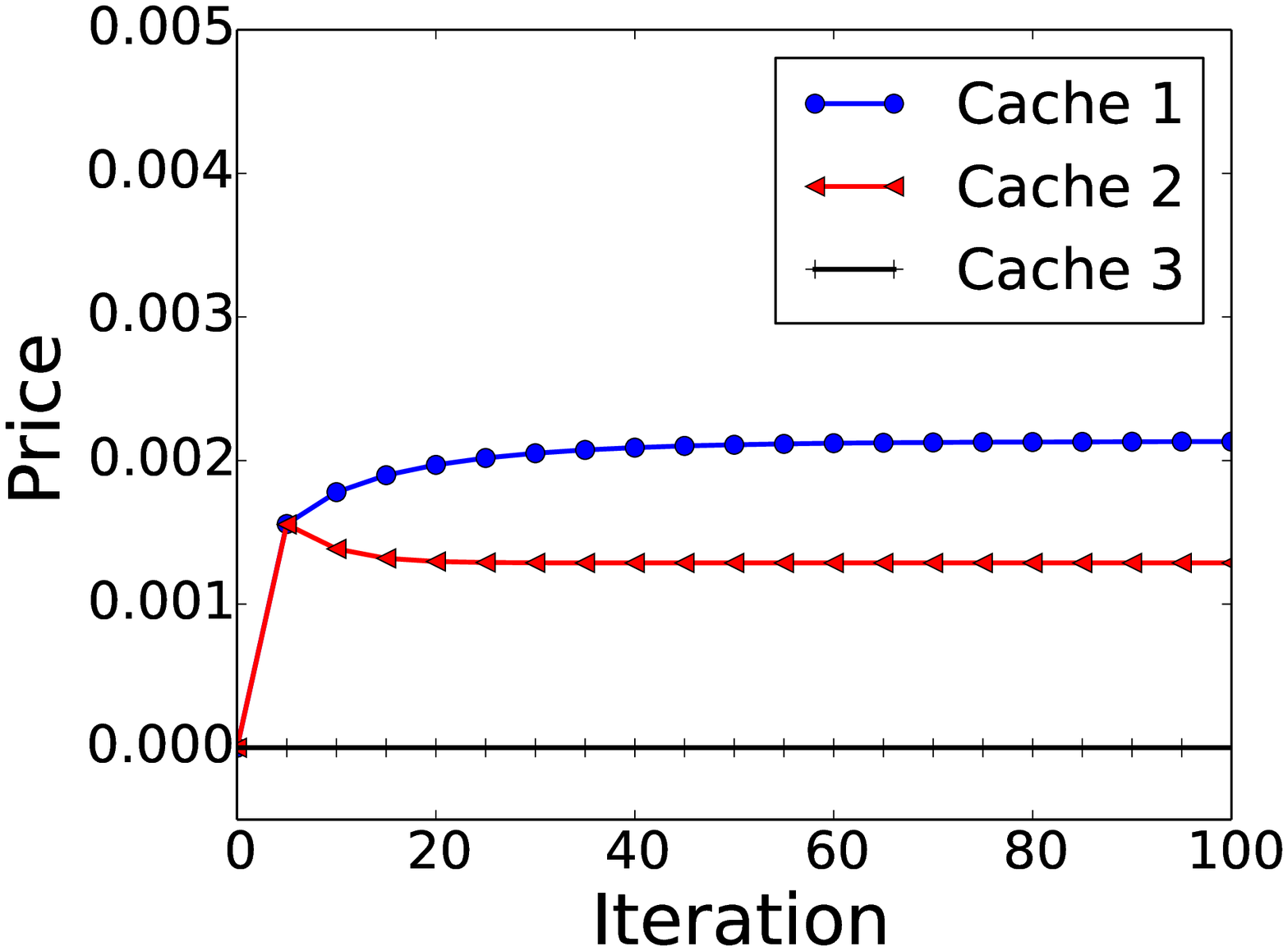}
    }\\
    \subfigure
    {
        \includegraphics[width=0.15\textwidth]{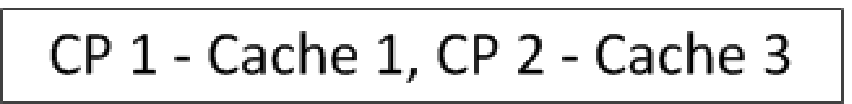}
    }\\
    \subfigure
    {
        \includegraphics[width=0.22\textwidth]{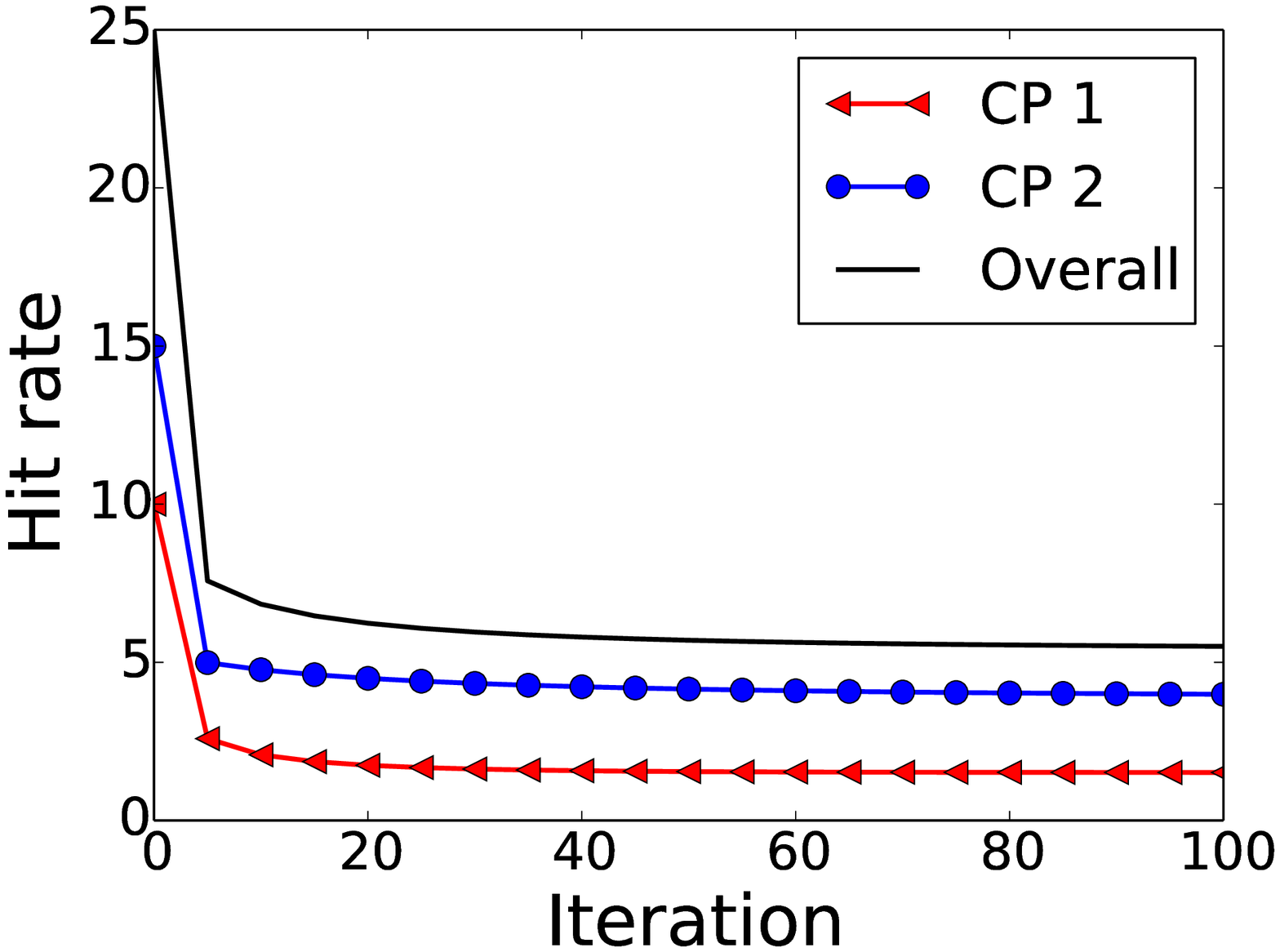}
    }
    \subfigure
    {
        \includegraphics[width=0.22\textwidth]{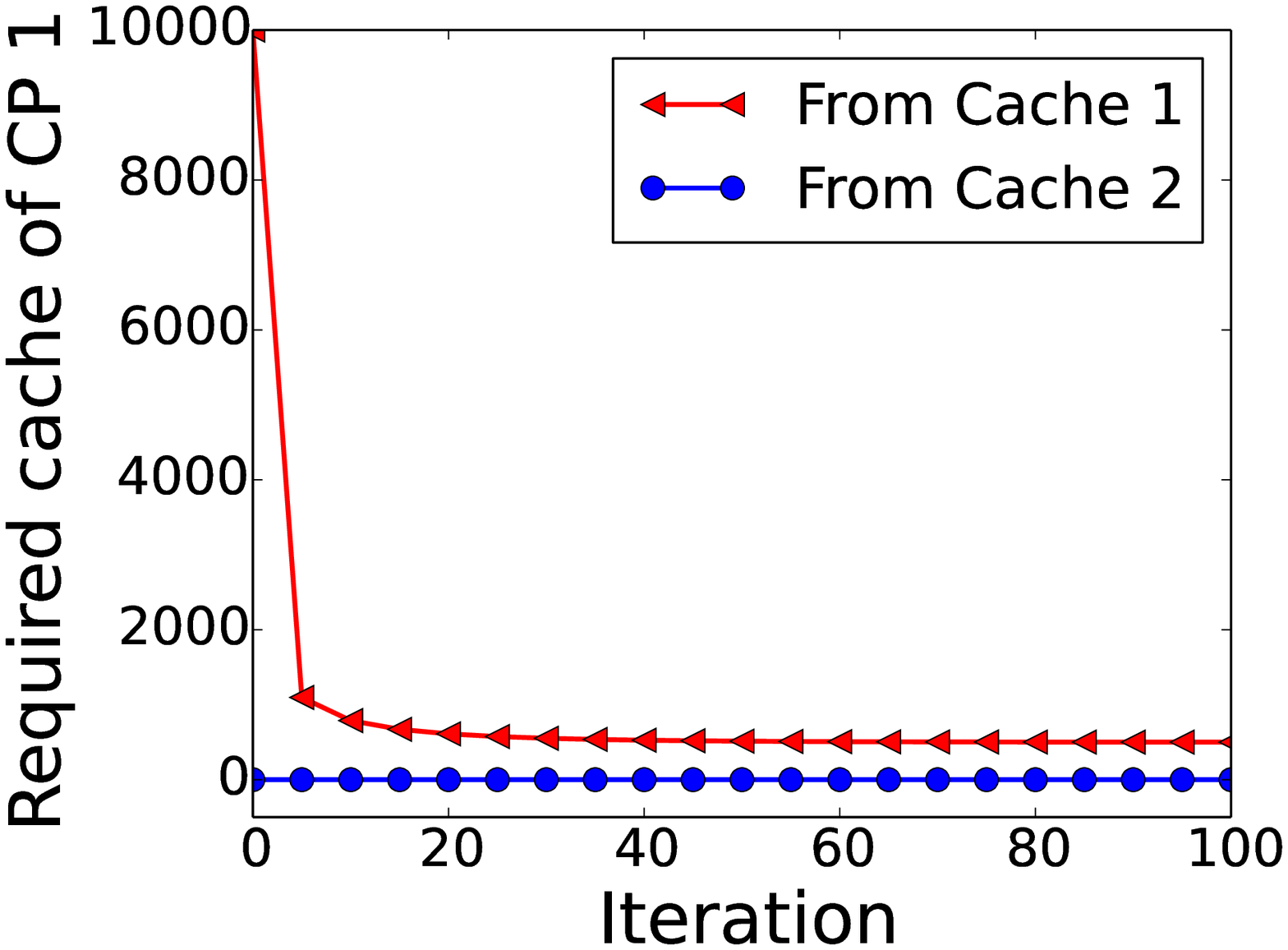}
    }
    \subfigure
    {
        \includegraphics[width=0.22\textwidth]{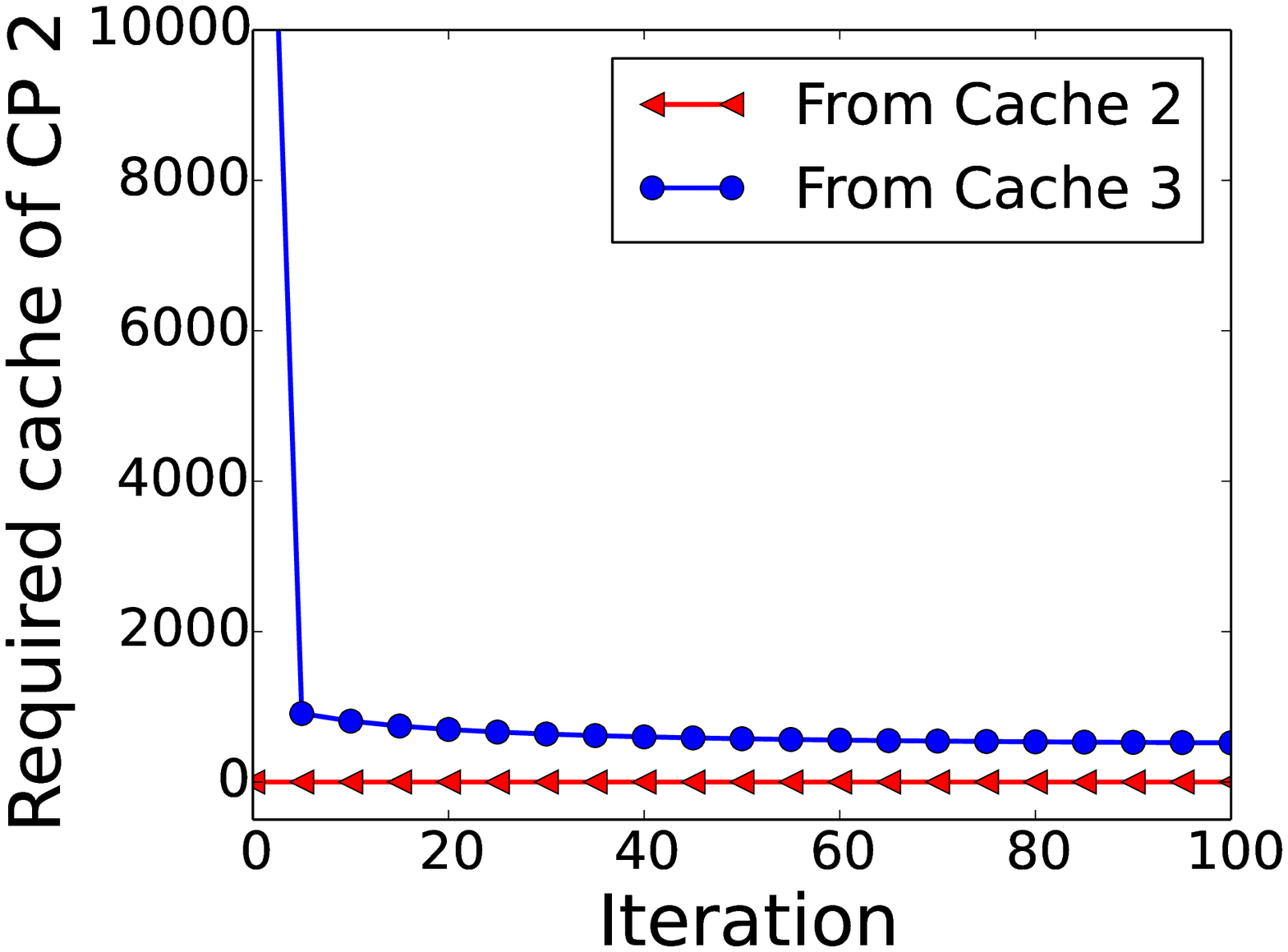}
    }
        \subfigure
    {
        \includegraphics[width=0.22\textwidth]{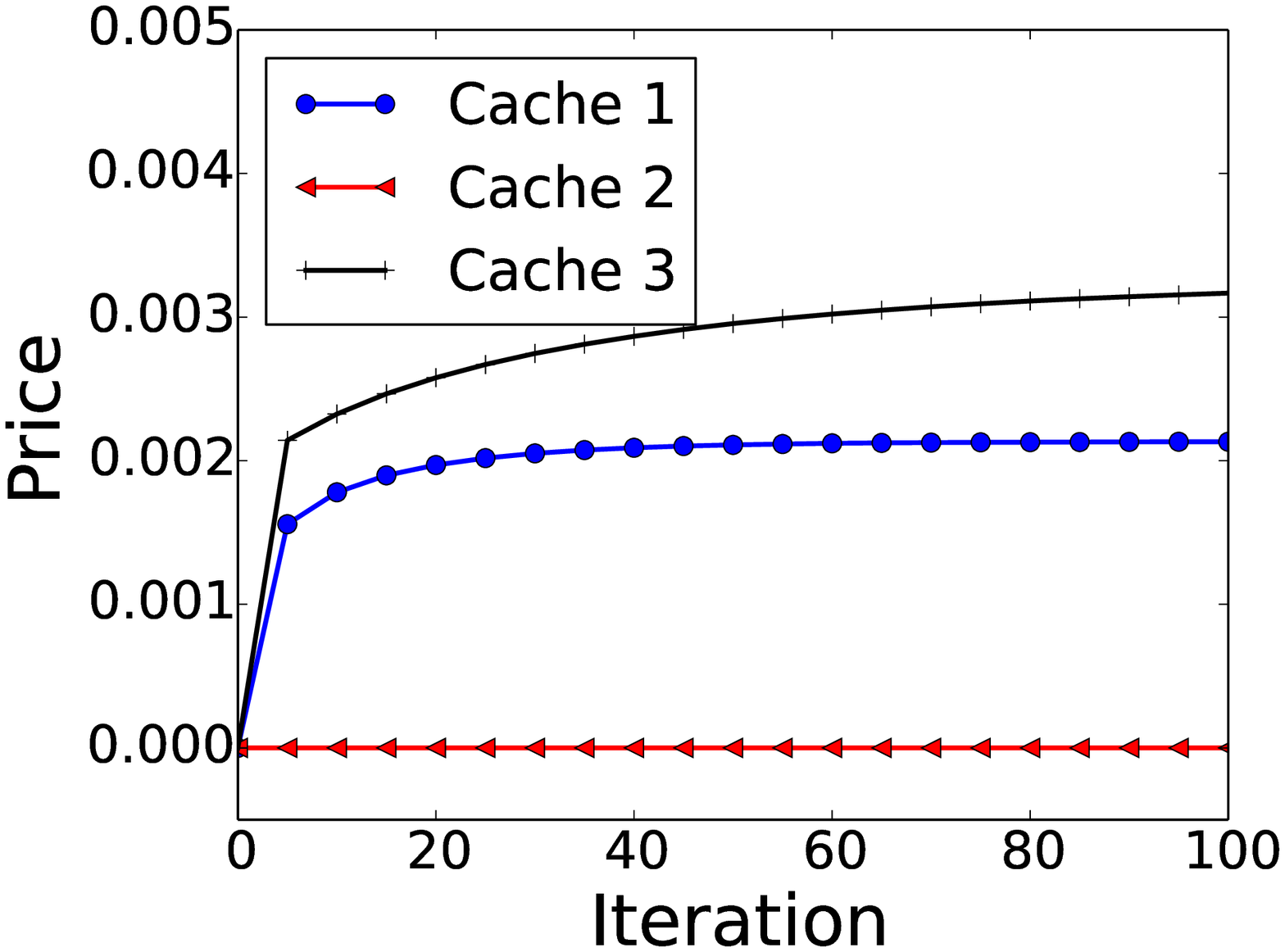}
    }\\
    \subfigure
    {
        \includegraphics[width=0.15\textwidth]{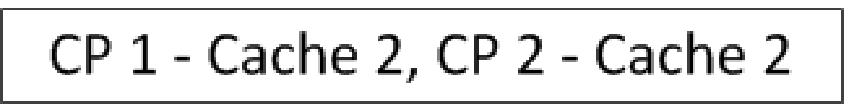}
    }\\
    \subfigure
    {
        \includegraphics[width=0.22\textwidth]{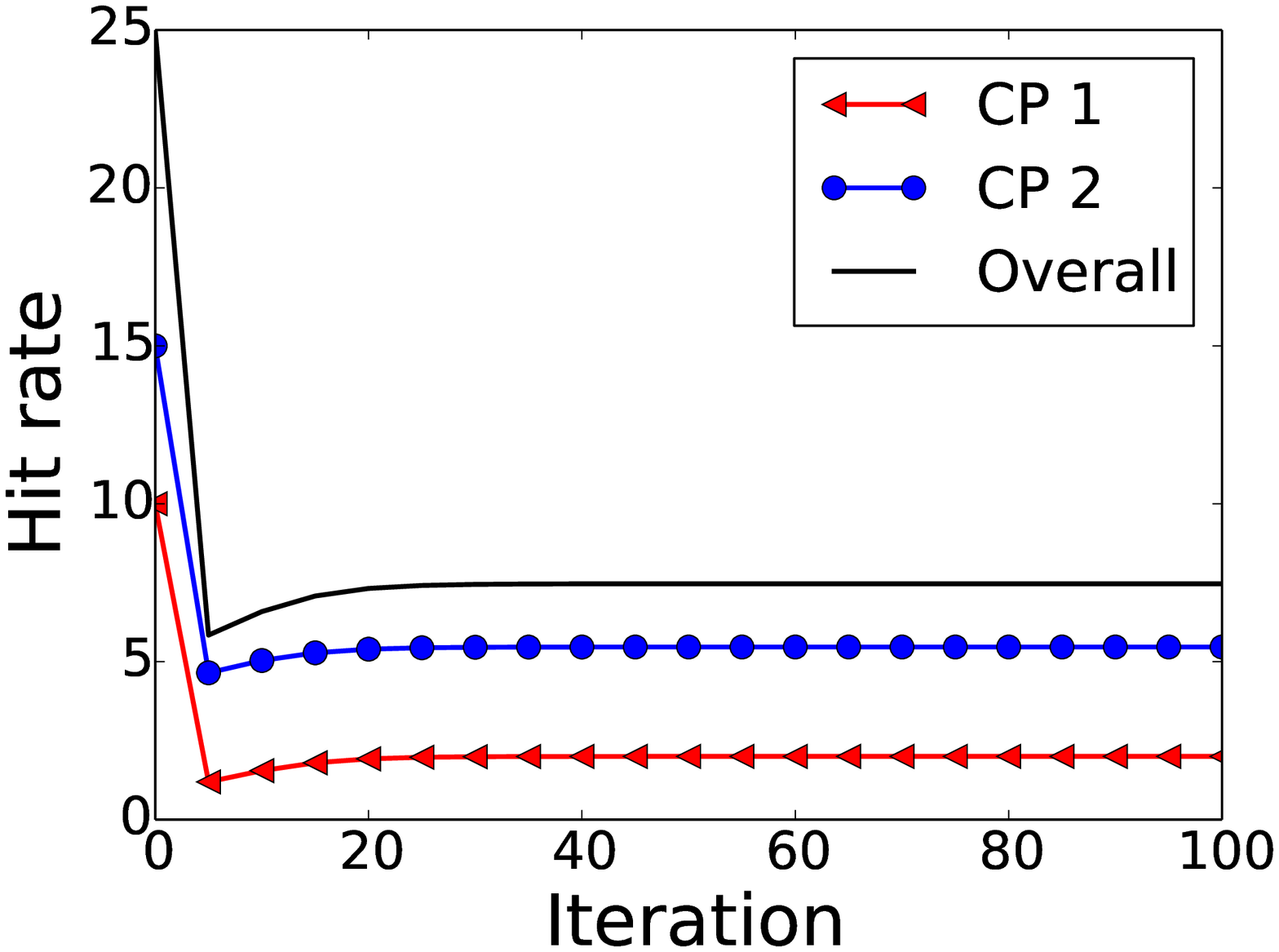}
    }
    \subfigure
    {
        \includegraphics[width=0.22\textwidth]{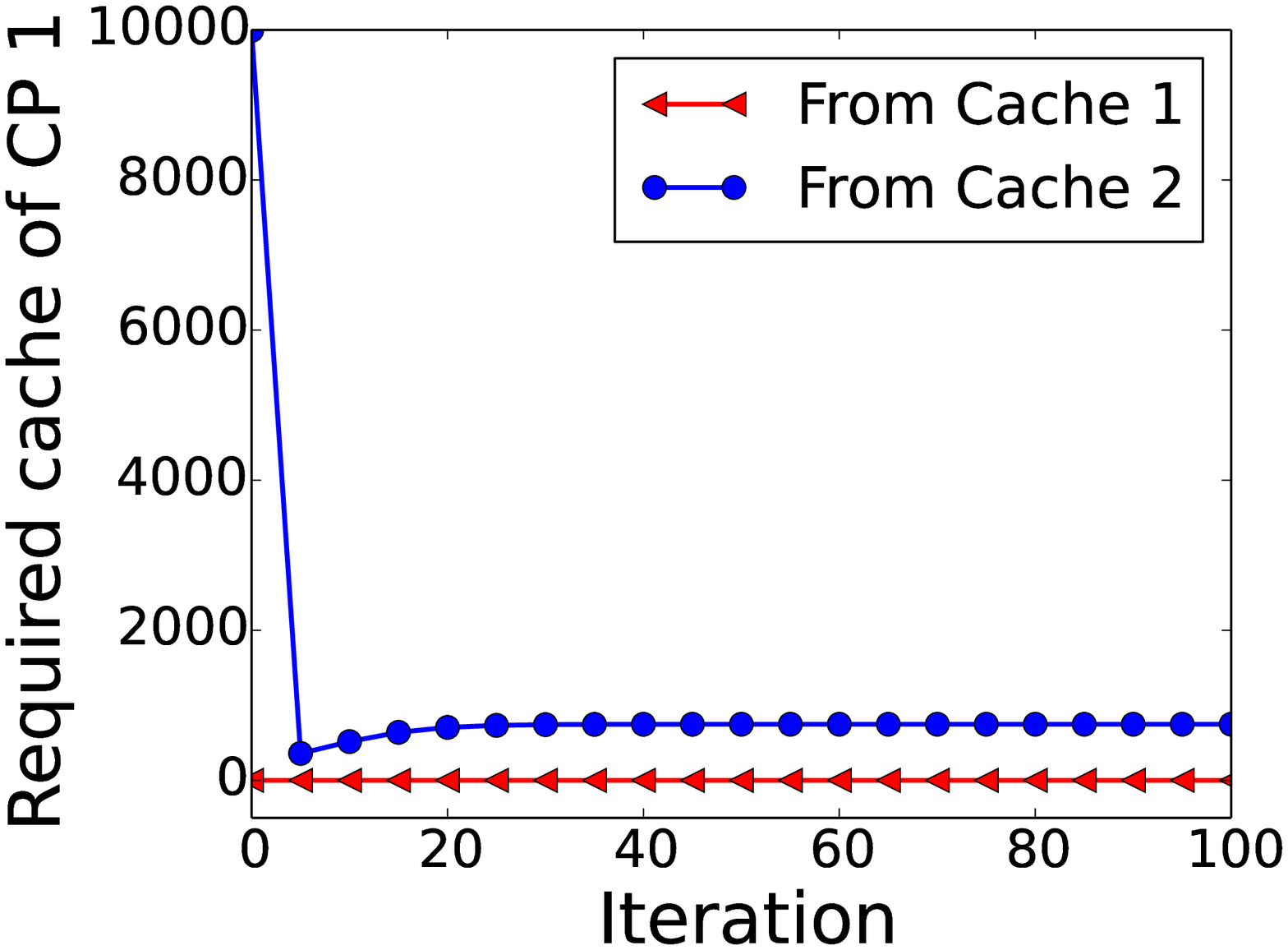}
    }
    \subfigure
    {
        \includegraphics[width=0.22\textwidth]{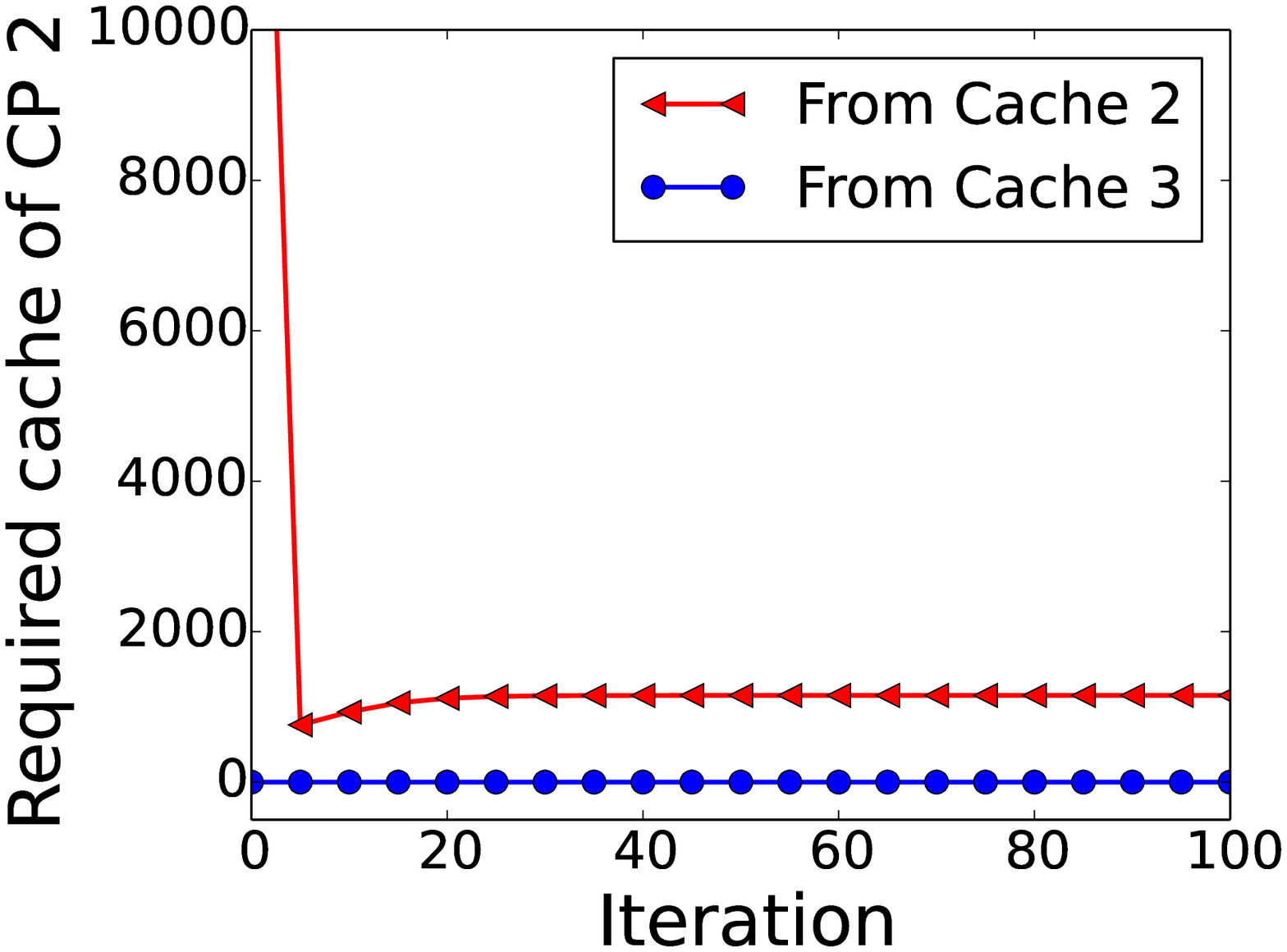}
    }
        \subfigure
    {
        \includegraphics[width=0.22\textwidth]{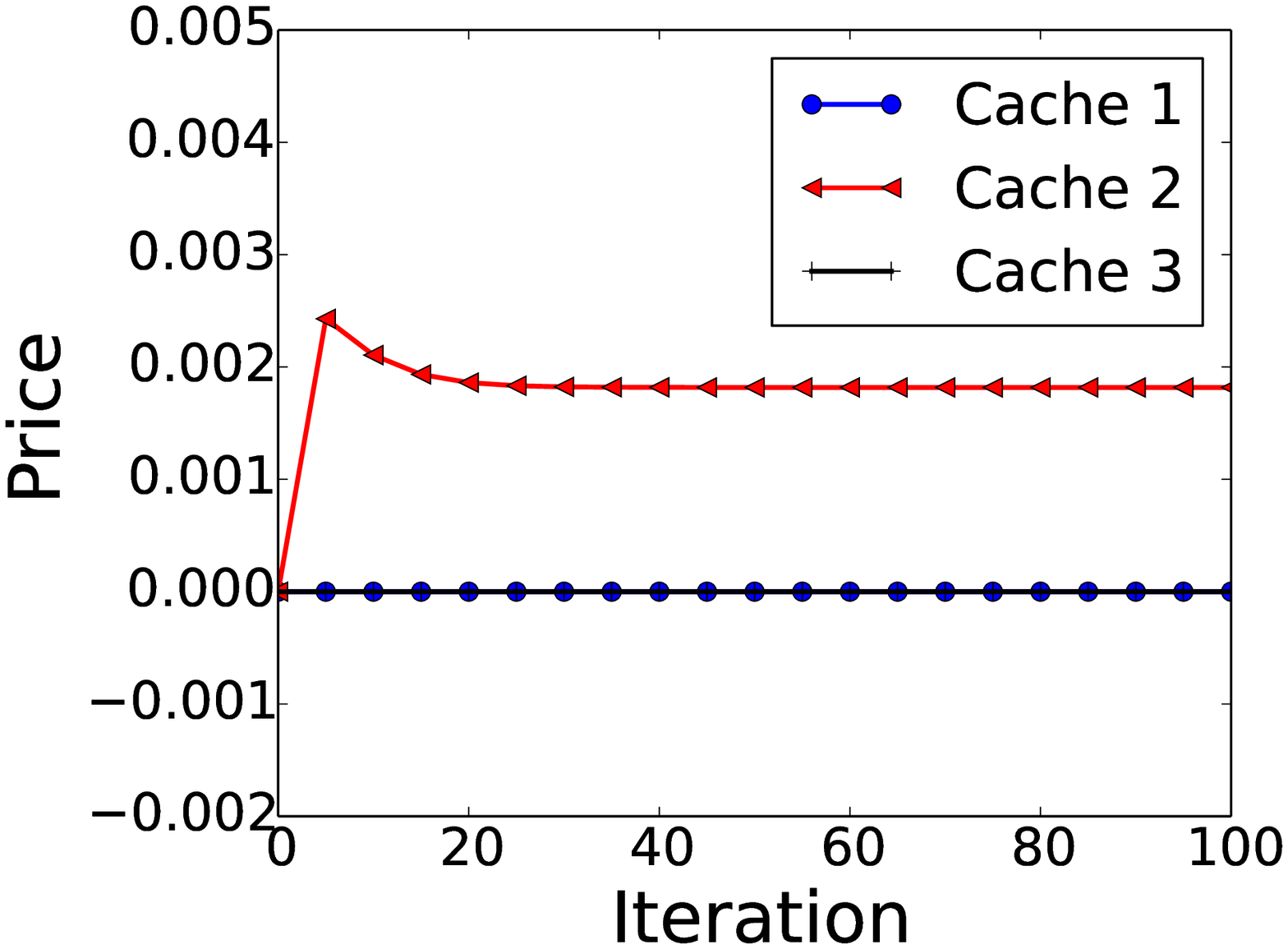}
    }\\
    \subfigure
    {
        \includegraphics[width=0.15\textwidth]{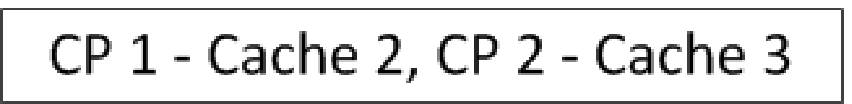}
    }\\
    \subfigure
    {
        \includegraphics[width=0.22\textwidth]{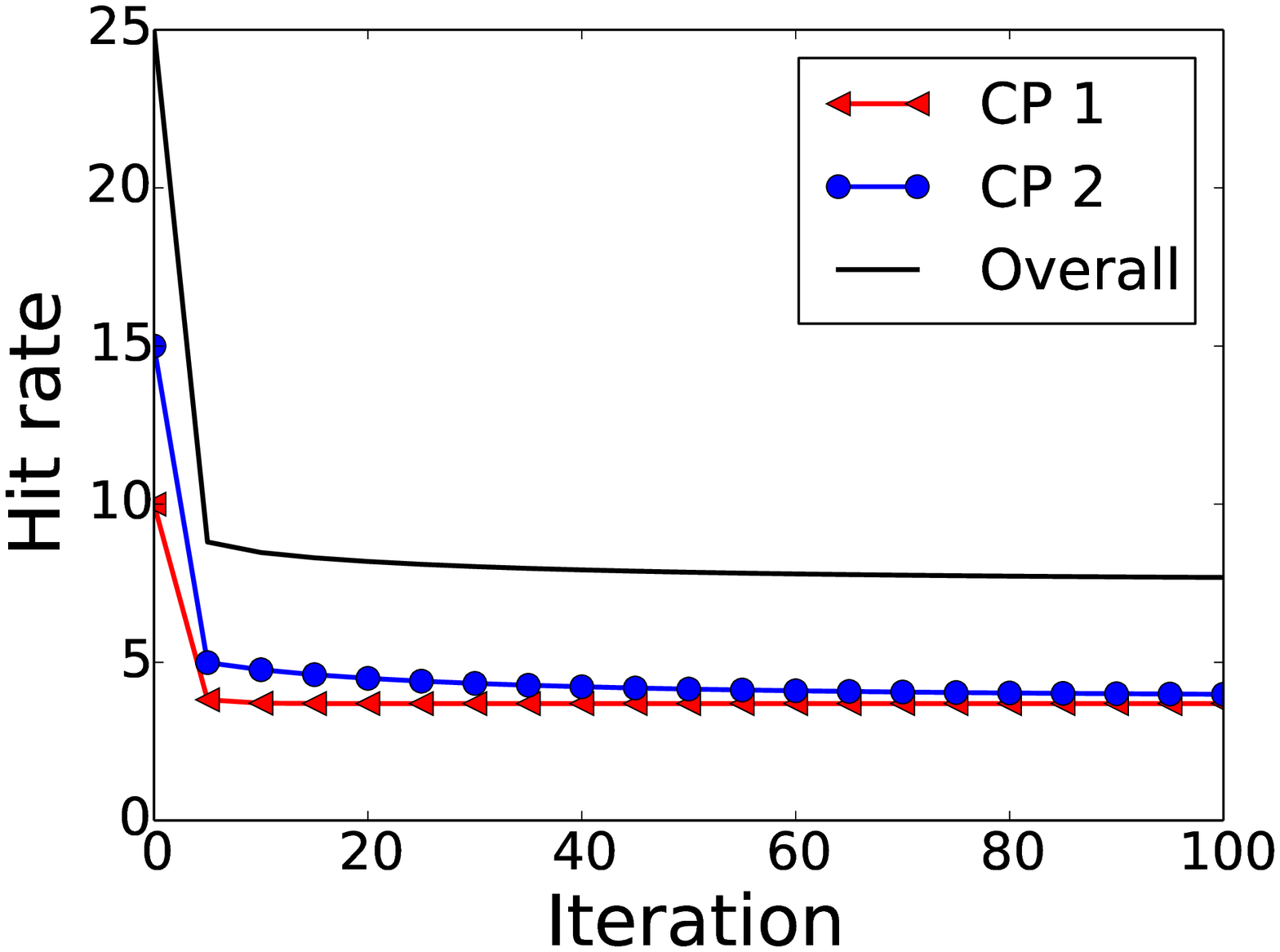}
    }
    \subfigure
    {
        \includegraphics[width=0.22\textwidth]{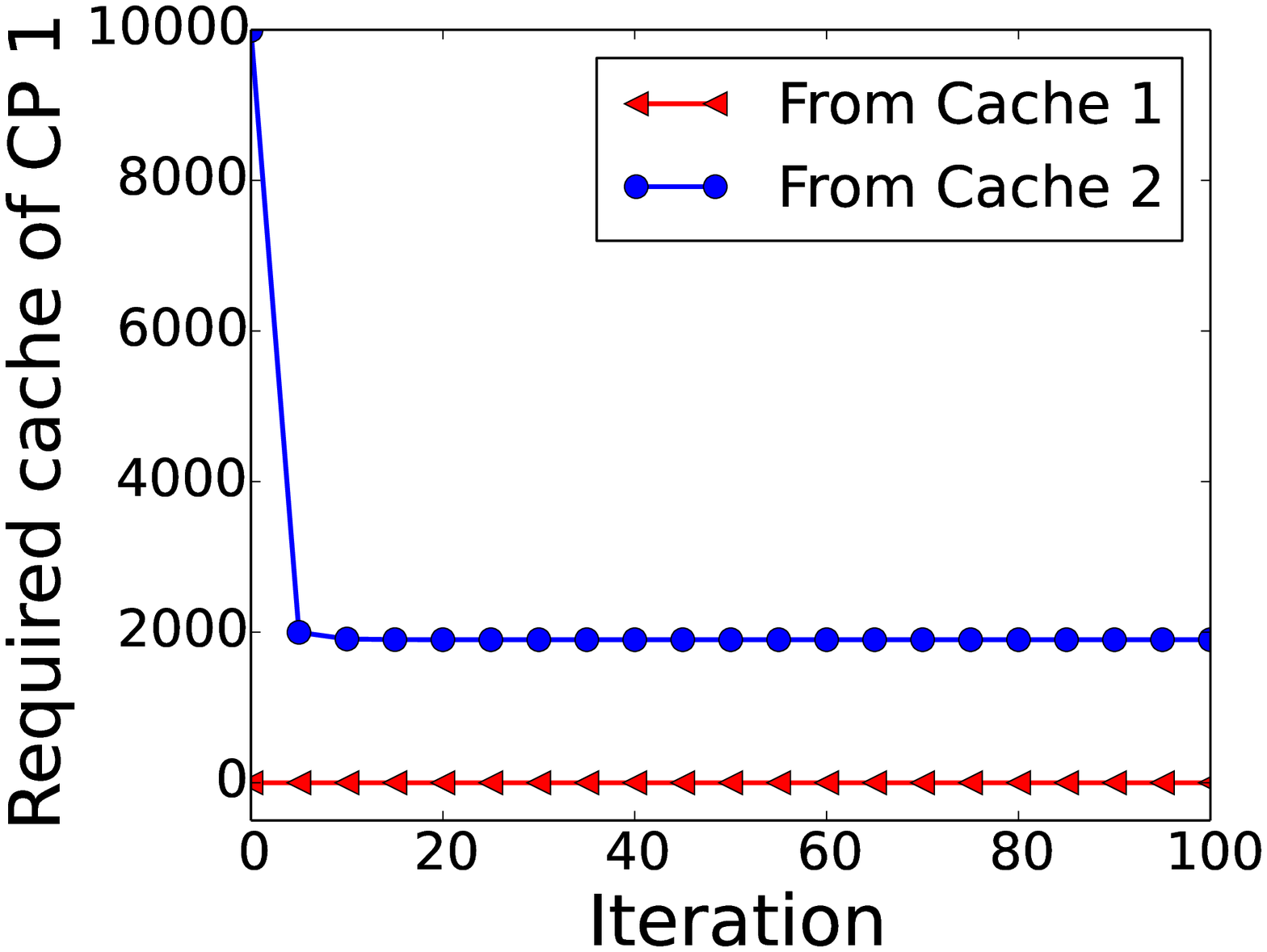}
    }
    \subfigure
    {
        \includegraphics[width=0.22\textwidth]{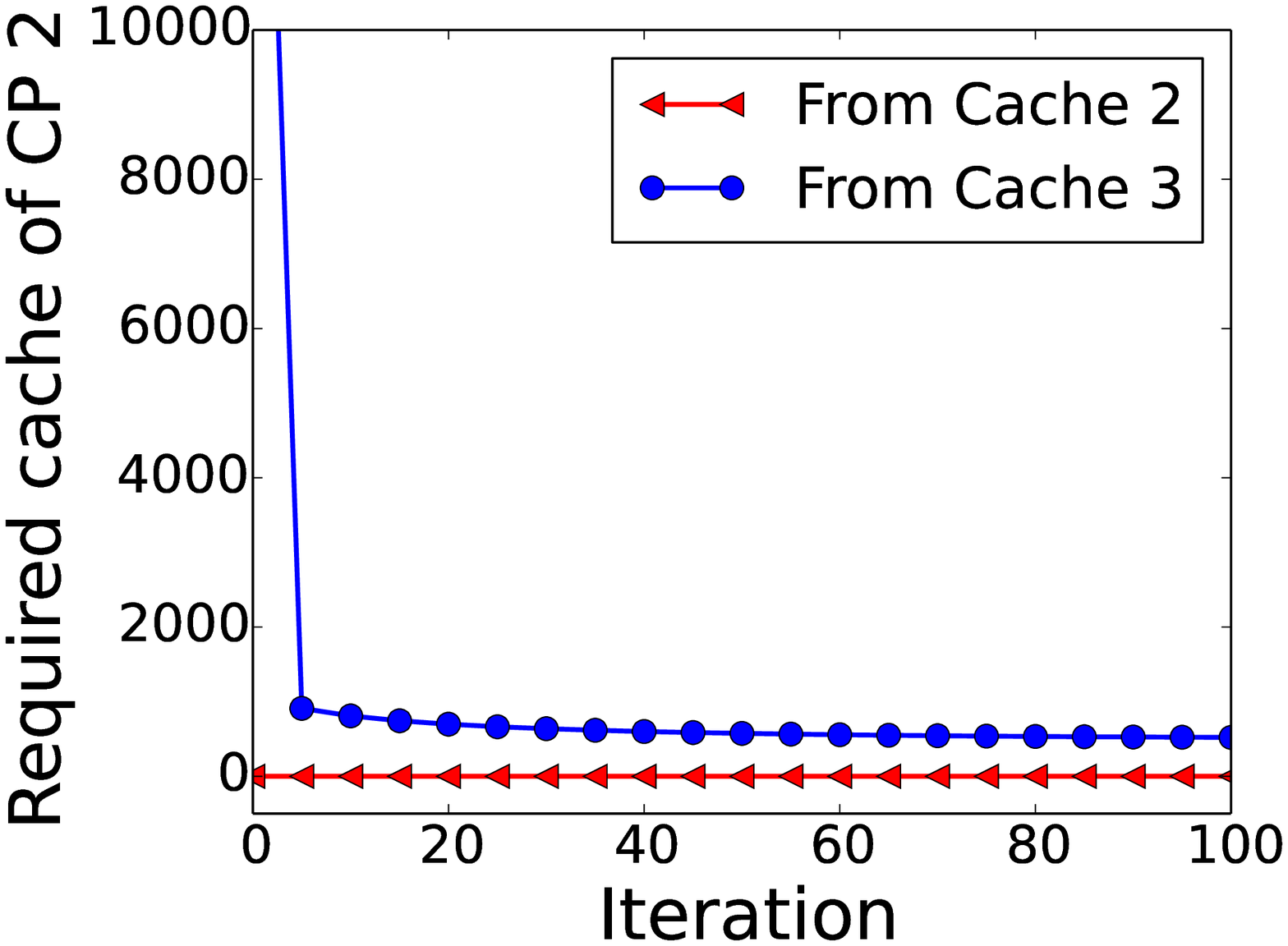}
    }
        \subfigure
    {
        \includegraphics[width=0.22\textwidth]{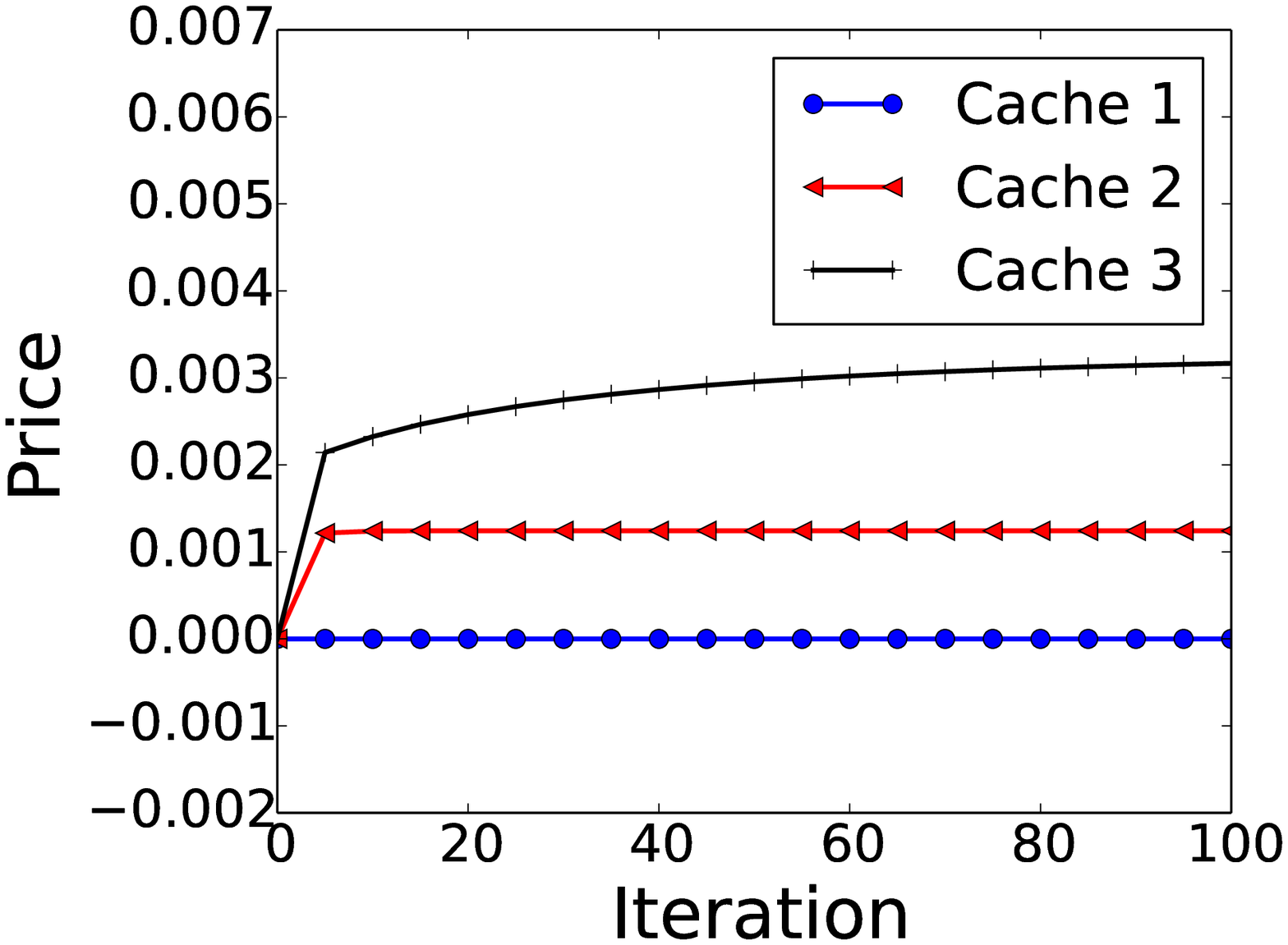}
    }
    \caption{Dynamics of decentralized algorithm.}
    \label{fig:dynamics}
\end{figure*}

\subsection{Bandwidth-constrained Scenario Evaluation}

For bandwidth-constrained scenario, we set the maximum volume between CP 1 and the two caches as $V_{11}=6$, $V_{12}=8$, and that of CP 2 as $V_{22}=10$ and $V_{23}=8$. Figure~\ref{fig:utility_comparison_bandwidthlimit} shows the observed utility as the capacity of Cache 2 varies. It can be seen that even with bandwidth limitation, our mechanism outperforms static routing in this case, i.e., when both content providers route their requests to the connected caches at equal probabilities and so the traffic between them does not exceed any volume limitation.

Figure~\ref{fig:traffic_distribution_bandwidthlimit} shows the optimal request routings for the two content providers, and it is interesting to observe that when Cache 2 is small ($C_2<500$), CP 1 routes its requests to Cache 1 at the maximum volume and CP 2 routes its requests to Cache 3 at the maximum volume. When Cache 2 becomes larger ($500<C_2<1500$), CP 2 begins to route its requests to Cache 2 at the maximum volume. When it continues to grow ($C_2>1500$), both providers route their requests to the shared cache at the maximum volume.

\begin{figure}
  \centering
  \includegraphics[width=0.28\textwidth]{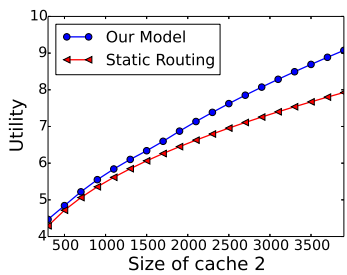}\\
  \caption{Efficacy of our model with bandwidth limitation as compared to static routing.}\label{fig:utility_comparison_bandwidthlimit}
\end{figure}

\begin{figure}
    \centering
    \subfigure[CP 1]
    {
        \includegraphics[width=0.225\textwidth]{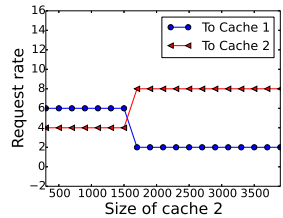}
        \label{fig:first_sub}
    }
    \subfigure[CP 2]
    {
        \includegraphics[width=0.225\textwidth]{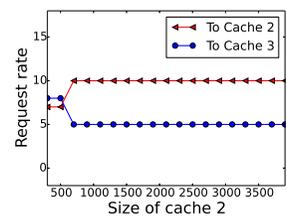}
        \label{fig:second_sub}
    }
    \caption{Traffic distribution of the two content providers with bandwidth limitation.}
    \label{fig:traffic_distribution_bandwidthlimit}
\end{figure}

%
%

Figure~\ref{fig:prices_bandwidth_limit} shows how the cache prices change under all four routings. Again we observe that our decentralized algorithm converges to the optimal solution.

\begin{figure*}[t]
    \centering
    \subfigure[CP 1-Cache 1, CP 2-Cache 2]
    {
        \includegraphics[width=0.22\textwidth]{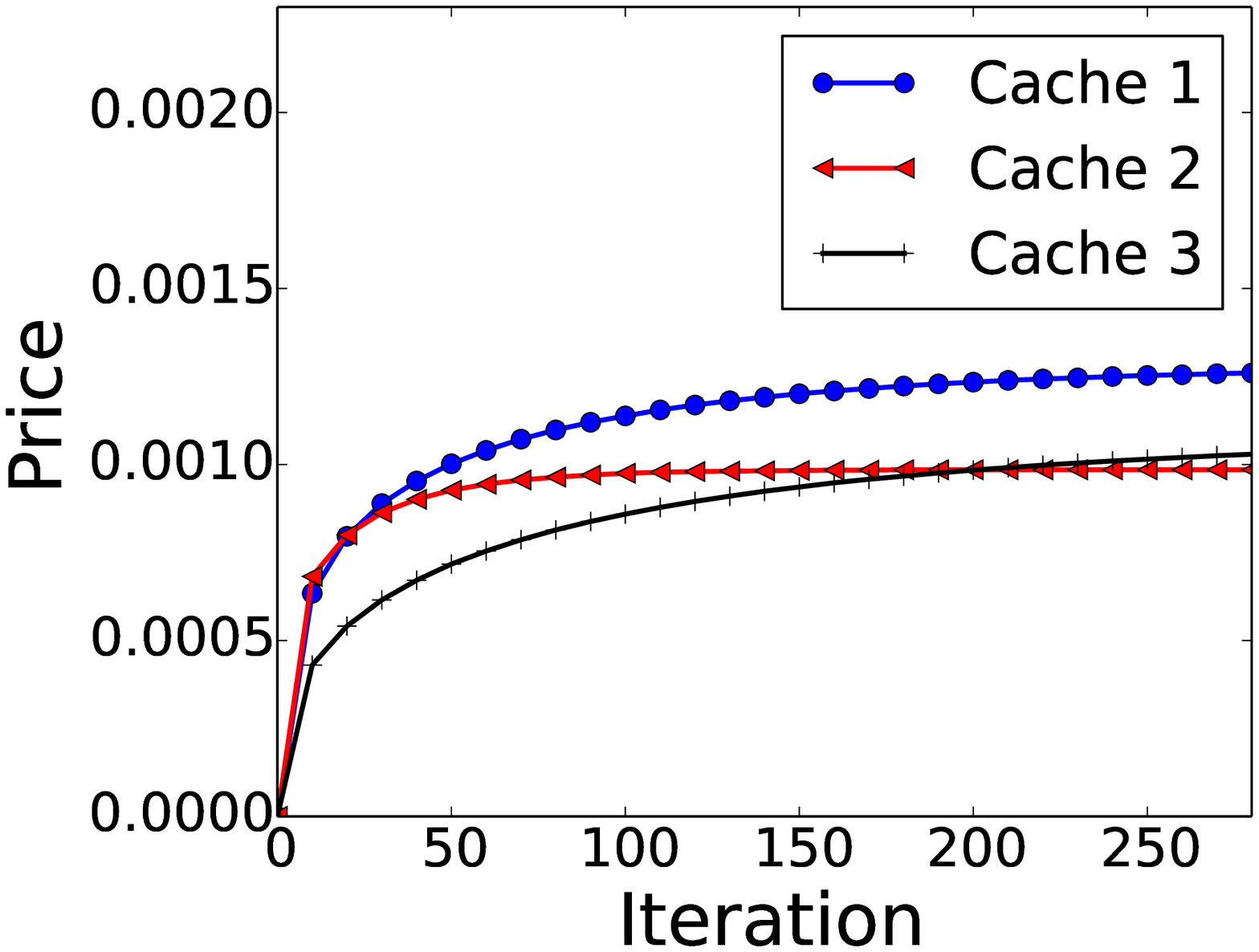}
    }
    \subfigure[CP 1-Cache 1, CP 2-Cache 3]
    {
        \includegraphics[width=0.22\textwidth]{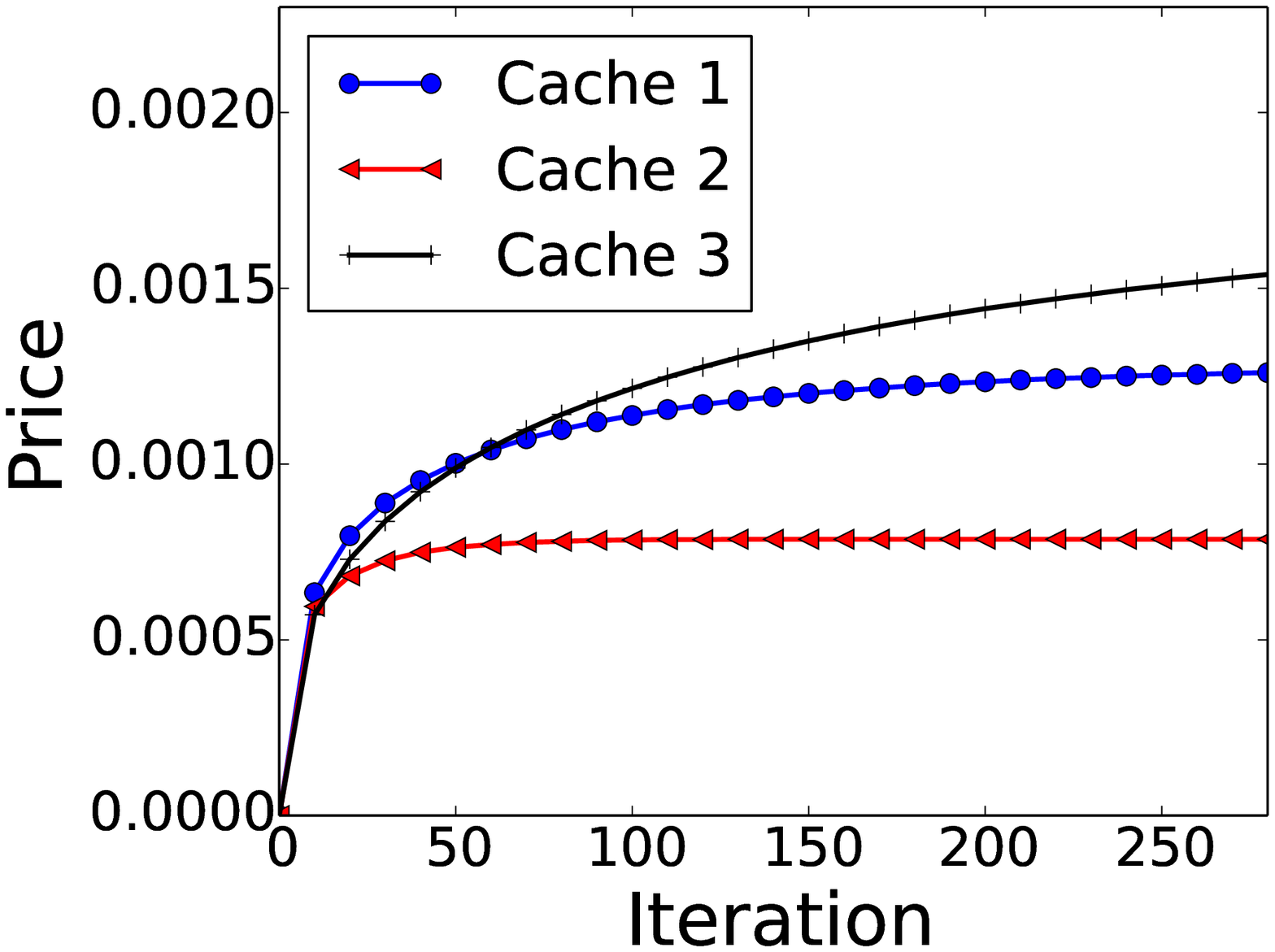}
    }
    \subfigure[CP 1-Cache 2, CP 2-Cache 2]
    {
        \includegraphics[width=0.22\textwidth]{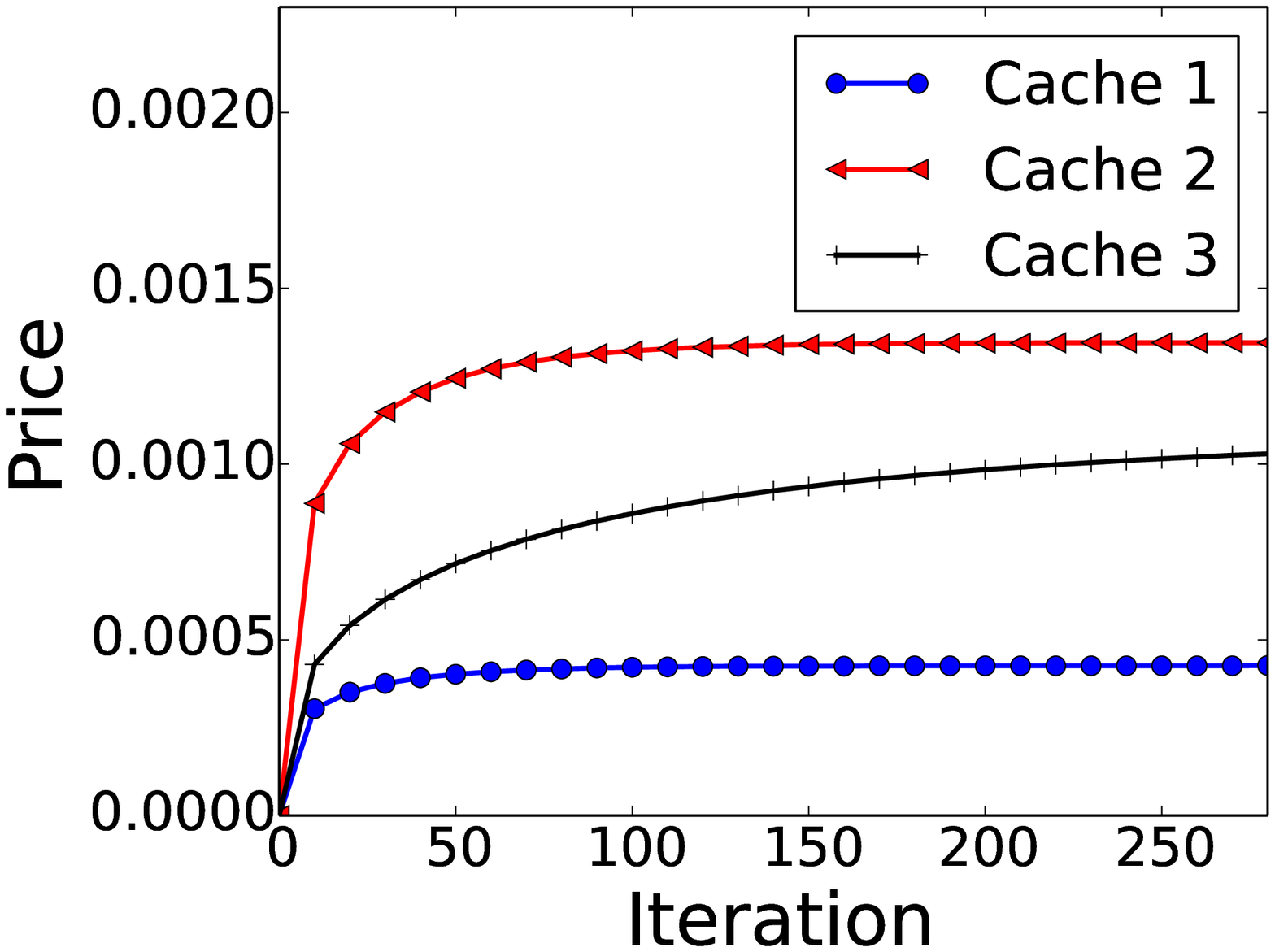}
    }
    \subfigure[CP 1-Cache 2, CP 2-Cache 3]
    {
        \includegraphics[width=0.22\textwidth]{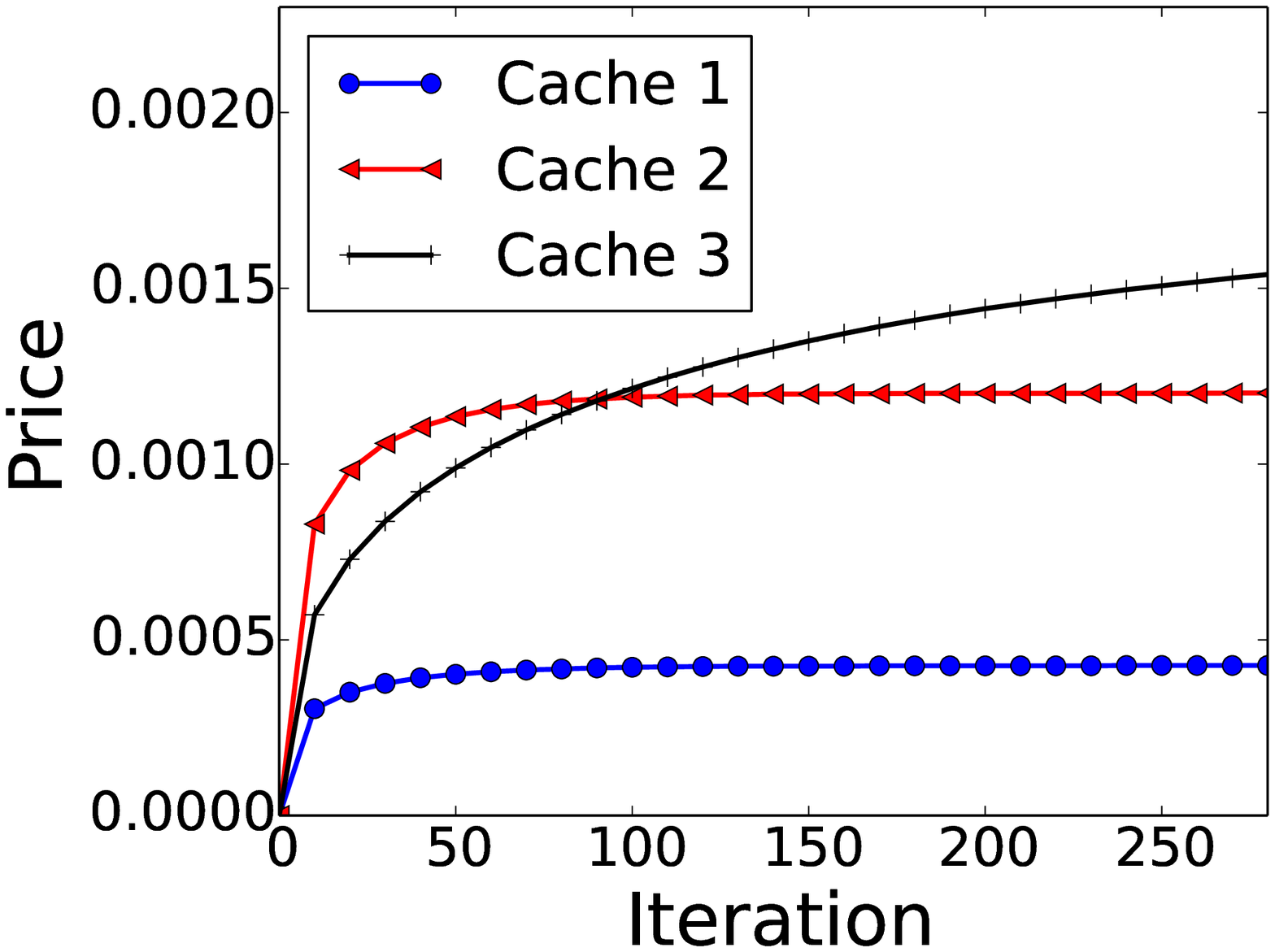}
    }
    \caption{Cache prices with bandwidth limitation under all four routings.}
    \label{fig:prices_bandwidth_limit}
\end{figure*}

We also perform numerical studies for the delay optimization scenario, and obtain similar results. To avoid redundancy and keep concise, we omit the corresponding figures. Nevertheless, we derive a consistent conclusion that our mechanism is efficient and the decentralized algorithms converge to the optimal solutions.

\subsection{Performance on More Complex Networks}
 
Here we report the performance of our mechanism on some more complex cache networks. In these networks, each CP connects to $2\scriptsize{\sim}5$ caches. Content population of CPs and their corresponding Zipf parameters are set as $1\times 10^3\scriptsize{\sim}5\times 10^3$ and $0.6\scriptsize{\sim} 0.8$, respectively. The request rate of CPs are chosen as $10\scriptsize{\sim}15$, and the size of caches as $200\scriptsize{\sim} 500$. All parameters follow a uniform distribution.

For each network setting with different number of CPs and caches, we conduct numerical experiment for 20 times, and give the average observed hit rate in Table 1. The results clearly illustrate that under all three different settings, our mechanism improves system performance by 30\% at least. Note that the joint optimization model is solved by SLSQP (sequential least squares programming) --- a nonlinear programming solver that returns local optimum for the general NLP problem. A higher performance gain is thus expected if global algorithm, e.g., {\it{GOP}}~\cite{floudas1990global}, is applied.  

\begin{table}[!hbp]
  \newcommand{\tabincell}[2]{\begin{tabular}{@{}#1@{}}#2\end{tabular}}
  \begin{tabular}{|c|c|c|c|c|}
    \hline
    \# Caches & \# CPs & \tabincell{c}{Aggre. Hitrate \\ by Stat. Routing} & \tabincell {c}{Aggre. Hitrate \\ by Joint Opt.} & \tabincell {c} {Perf. \\ Improvement}\\
    \hline
    3 & 5 & 24.8 & 32.7 & 31.8\% \\
    \hline
    5 & 10 & 48.6 & 63.4 & 30.4\% \\
    \hline
    8 & 20 & 86.8 & 114.5 & 31.5\% \\
    \hline
  \end{tabular}
  \caption{Performance on more complex networks}
\end{table}

Last, we emphasize that to evaluate our mechanism comprehensively it is better for us to  use real-life large-scale network topologies, such as the abilene network topology (9 routers, 26 links), the dtelekom network topology (68 routers, 546 links), the geant network topology (22 routers, 66 links), to name a few. The problem of using these real-life networks is that the number of connections between content providers and caches/routers are large and grows exponentially. In that case, it makes sense to restrict content provider---cache connections within a small subset, i.e. each content provider can only route requests to caches within the neighborhood of its end-users. In that case, it suffices to decompose the problem to a set of smaller sub-problems, each for a local network that is within the same order of the network that we considered in the numerical studies.


\section{Discussion}
\label{sec:discussion}

In this section, we explore some implications of our framework and present some future research directions.
  
\noindent {\bf{(1) Non equal-size disjoint content.}} Following the common practice, in this work we assume files are of equal sizes. However, files can be of variable sizes in real networks. In that case, we can divide each file into a number of small fixed-size chunks, and still treat these chunks as if they were independent disjoint files in large-scale caching systems, i.e., when the number of users accessing them are large enough. Nevertheless, this assumption needs to be carefully validated and its impact precisely measured. Another question is on the violation of the assumption of non-overlapping content, i.e., a video file can be served by multiple content providers. How to deal with these common files is another question that needs to be addressed before we apply our framework in such cases.
 
\noindent {\bf{(2) Adaptive online algorithms.}} Our framework provides a means to compute optimal cache partitioning and request routing. However, it is impractical to solve the optimization problems offline and then implement the optimal policy, since in real networks, system parameters can change over time, i.e., network traffic is generally unstable. As a result, adaptive algorithms are needed. Based on the fact that network topologies/connections are more stable than traffic, here we propose a two-layer adaptive algorithm. In the upper layer, the cache manager periodically measures the connectivity of CPs and caches, the corresponding latency of content delivery, and the statistics of requests from CPs. With these measurements, the cache manager decides to recalculate request routings if there are too much change in these statistics; in the lower-layer, the cache manager collects limited traffic information for each CP and tunes the cache partitioning for them under the given routing so as to adapt to traffic changes. Of course, it remains to develop/implement such dynamic online algorithms and evaluate them in real network environment, and we leave them for our future work. 

 \noindent {\bf{(3) Fairness.}} We can use different utility functions in the framework to implement different fairness among content providers. For example, the family of $\beta$-fair utility function expressed as $U_k(h_k)=\frac{h_{k}^{1-\beta}}{1-\beta}$ can be used to implement some interesting fairness.
When $\beta \to 1$, we have $U_k(h_k)=\log h_k$, and the goal is to implement proportional fairness;
when $\beta \to \infty$, it yields the objective $\text{max} \mathop \text{min}\limits_{k} h_k$, which corresponds to max-min fairness.
With such notions of fairness associated with utility functions, we actually provide a general and unified framework for implementing fair network resource allocation and request routing for different content providers.


\section{Conclusion}
\label{sec:conclusion}

In this paper, we study the problem of cache resource allocation in a Multi-CP Multi-Cache environment. We propose a joint cache partitioning and content-oblivious request routing scheme, and formulate an optimization problem in which the objective is to maximize network utilities through proper cache partitioning and request routing. We give the optimal request routing strategy for each content provider, establish the biconvexity property of the formulated problem, and further develop distributed (online) algorithms. We also extend our model to the bandwidth-constrained and delay optimization scenarios to show that it provides a general framework for cache resource management in a Multi-CP Multi-Cache network. From an economic perspective, we believe that our framework also helps to build a viable market model for in-network caching services.


\section*{Acknowledgements}
\label{sec:acknowledgements}

The work is supported by the National Natural Science Foundation of China (Grant No. 61502393), the Natural Science Basic Research Plan in Shaanxi Province of China (Grant No. 2017JM6066), the Fundamental Research Funds for the Central Universities (Grant No. 3102017zy031), and the National Science Foundation under grants CNS-1413998, CNS-1617437, CNS-1618339 and CNS-1617729. The work by John C.S. Lui is supported in part by the GRF 14200117.

\bibliographystyle{abbrv}
\bibliography{references}

\appendices

\section{Proof of Theorem~\ref{thrm:policy_concavity}}
\label{sec:proof_concavity}

We rely on the characteristic time approximation for LRU, FIFO and Random policies to prove that hit rate under these policies is a concave function of the cache size.

Based on equation~\eqref{eq:hitprob_lru} and~\eqref{eq:hitprob_fifo_random}, we can see that

\[\frac{\partial o(\lambda_i, T)}{\partial T} \ge 0 \quad \text{and} \quad \frac{\partial^2 o(\lambda_i, T)}{\partial T^2} \le 0\]

For a cache of size $C$, we have

\[C=\sum_i o(\lambda_i, T)\]

Taking derivatives with respect to $T$ from the two sides of the above equation, we get that

\[\frac{\partial C}{\partial T} \ge 0 \quad \text{and} \quad \frac{\partial^2 C}{\partial T^2} \le 0\]

The above inequalities imply that $C$ is a non-decreasing concave function of $T$ which then means $T$ is a non-decreasing convex function of $C$, i.e.,

\[\frac{\partial T}{\partial C} \ge 0 \quad \text{and} \quad \frac{\partial^2 T}{\partial C^2} \ge 0\]

Using the equation $C=\sum_i o(\lambda_i, T)$, and taking derivatives with respect to $C$ we obtain that

\[0 = \sum_{i}\frac{\partial^2 o(\lambda_i, T)}{\partial C^2}\]

We need to show

\[\sum_{i} \lambda_{i}\frac{\partial^2 o(\lambda_i, T)}{\partial C^2} \le 0\]

\begin{lemma}
  \label{lemma:concavity_proof_1}
Under LRU, FIFO and Random policies, for two files $f_i$ and $f_j$ such that $\lambda_i \ge \lambda_j$ ,$\frac{\partial^2 o(\lambda_j, T)}{\partial C^2}<0$ implies that $\frac{\partial^2 o(\lambda_i, T)}{\partial C^2}<0$.
\end{lemma}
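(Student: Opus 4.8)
The plan is to collapse the statement to a single scalar inequality in the quantities $T$, $\partial T/\partial C$ and $\partial^2 T/\partial C^2$ — which are the same for every file, since they are fixed once the cache and the whole request stream are fixed — and then read off the conclusion from a monotonicity property in $\lambda$. First I would apply the chain rule. Writing $T' := \partial T/\partial C$ and $T'' := \partial^2 T/\partial C^2$, differentiating $o(\lambda,T(C))$ twice in $C$ gives
\[
\frac{\partial^2 o(\lambda, T)}{\partial C^2} = \frac{\partial^2 o(\lambda, T)}{\partial T^2}\,(T')^2 + \frac{\partial o(\lambda, T)}{\partial T}\, T''.
\]
From the proof of Theorem~\ref{thrm:policy_concavity} we already have $\partial o/\partial T \ge 0$, $\partial^2 o/\partial T^2 \le 0$, $T' \ge 0$ and $T'' \ge 0$, so the second summand is nonnegative. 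Hence the hypothesis $\partial^2 o(\lambda_j,T)/\partial C^2 < 0$ forces $\partial^2 o(\lambda_j,T)/\partial T^2\,(T')^2 < 0$, and in particular $T' > 0$; from here on we may divide by $(T')^2$.

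Second, I would rewrite the sign condition. Dividing the identity above by $(T')^2 > 0$, the inequality $\partial^2 o(\lambda,T)/\partial C^2 < 0$ is equivalent to
\[
\frac{T''}{(T')^2} < -\frac{\partial^2 o(\lambda,T)/\partial T^2}{\partial o(\lambda,T)/\partial T} =: g(\lambda, T),
\]
and $g(\lambda,T)$ is well defined and positive because for LRU, FIFO and Random we have $\partial o/\partial T > 0$ and $\partial^2 o/\partial T^2 < 0$ (strictly, for finite $\lambda,T$). The key computation is the explicit form of $g$: from~\eqref{eq:hitprob_lru} one gets $g(\lambda,T) = \lambda$ for LRU, and from~\eqref{eq:hitprob_fifo_random} one gets $g(\lambda,T) = 2\lambda/(1+\lambda T) = 2/(T + 1/\lambda)$ for FIFO and Random. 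In both cases $g(\cdot,T)$ is strictly increasing in $\lambda$ for each fixed $T \ge 0$.

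Finally, the lemma is immediate: $\lambda_i \ge \lambda_j$ gives $g(\lambda_i,T) \ge g(\lambda_j,T)$ by monotonicity, and the hypothesis is $T''/(T')^2 < g(\lambda_j,T)$, so $T''/(T')^2 < g(\lambda_i,T)$, which by the equivalence above is exactly $\partial^2 o(\lambda_i,T)/\partial C^2 < 0$. The only genuine computations are the two elementary derivative evaluations producing $g$; the step I expect to require the most care is the reduction establishing $T' > 0$ under the hypothesis, since this is what legitimizes passing to the monotone quantity $g(\lambda,T)$ and, implicitly, rules out the degenerate case where the second derivative fails to be negative purely because $T$ is locally flat in $C$.
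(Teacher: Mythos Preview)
Your proposal is correct and follows essentially the same route as the paper: compute $\partial^2 o(\lambda,T)/\partial C^2$ via the chain rule, reduce the sign condition to a scalar inequality comparing $T''$ (or $T''/(T')^2$) against a threshold that depends on $\lambda$ only through an increasing function ($\lambda$ for LRU, $2\lambda/(1+\lambda T)$ for FIFO/Random), and then invoke monotonicity. The only cosmetic difference is that the paper factors out the positive prefactor $\lambda e^{-\lambda T}$ (resp.\ $\lambda/(1+\lambda T)^2$) and compares $T''$ with $\lambda(T')^2$ directly, so it never needs to divide by $(T')^2$; your extra step establishing $T'>0$ from the hypothesis is a clean way to justify that division, but it is not a new idea beyond what the paper does.
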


\begin{proof}
  Starting with the equation~\eqref{eq:hitprob_lru} and~\eqref{eq:hitprob_fifo_random} and taking derivatives with respect to $C$ we obtain
  \begin{equation}
    \label{eq:concavity_proof_eq1}
\frac{\partial^2 o(\lambda_i, T)}{\partial C^2}=-\lambda_i^2 e^{-\lambda_iT}(\frac{\partial T}{\partial C})^2+\lambda_i e^{-\lambda_iT}\frac{\partial ^2 T}{\partial C^2},  
  \end{equation}
  
\noindent for LRU policy, and 
\begin{equation}
  \label{eq:concavity_proof_eq2}
\frac{\partial^2 o(\lambda_i, T)}{\partial C^2}=-\frac{2\lambda_i^2}{(1+\lambda_i T)^3}(\frac{\partial T}{\partial C})^2+ \frac{\lambda_i}{(1+\lambda_i T)^2}\frac{\partial ^2 T}{\partial C^2},  
\end{equation}

\noindent for FIFO and Random policies.

For LRU policy~\eqref{eq:concavity_proof_eq1} implies that

\[\frac{\partial^2 o(\lambda_i, T)}{\partial C^2}<0 \quad \text{if} \quad \frac{\partial^2 T}{\partial C^2}<\lambda_i(\frac{\partial T}{\partial C})^2\]

\noindent On the other hand, 

 \[\frac{\partial^2 T}{\partial C^2}<\lambda_j(\frac{\partial T}{\partial C})^2 \quad \text{if} \quad \frac{\partial^2 T}{\partial C^2}<\lambda_i(\frac{\partial T}{\partial C})^2,\]

\noindent since $\lambda_j \le \lambda_i$ and hence

\[\frac{\partial^2 o(\lambda_i, T)}{\partial C^2}<0 \quad \text{if} \quad \frac{\partial^2 o(\lambda_j, T)}{\partial C^2}<0.\]

Similarly, for FIFO and Random policies,~\eqref{eq:concavity_proof_eq2} implies that

\[\frac{\partial^2 o(\lambda_j, T)}{\partial C^2}<0 \quad \text{if} \quad \frac{\partial^2 T}{\partial C^2}<\frac{2\lambda_j}{1+\lambda_j T}(\frac{\partial T}{\partial C})^2,\]

\noindent since $\lambda_j \leq \lambda_i$ and $\frac{2\lambda}{1+\lambda T}$ is an increasing function of $\lambda$. Therefore

\[\frac{\partial^2 o(\lambda_i, T)}{\partial C^2}<0 \quad \text{if} \quad \frac{\partial^2 o(\lambda_j, T)}{\partial C^2}<0.\]

\end{proof}

\begin{lemma}
  \label{lemma:concavity_proof_2}
  Under LRU, FIFO and Random policies, for two files $f_i$ and $f_j$ such that $\lambda_i \geq \lambda_j$ ,$ \frac{\partial^2 o(\lambda_i, T)}{\partial C^2}>0$ implies that $\frac{\partial^2 o(\lambda_j, T)}{\partial C^2}>0$.
\end{lemma}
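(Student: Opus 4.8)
The plan is to mirror the proof of Lemma~\ref{lemma:concavity_proof_1} almost verbatim, using the same closed-form expressions~\eqref{eq:concavity_proof_eq1} and~\eqref{eq:concavity_proof_eq2} for $\partial^2 o(\lambda_i,T)/\partial C^2$, but now tracking the condition under which this quantity is \emph{positive} rather than negative. In each case, after dividing through by a manifestly positive prefactor, the sign of $\partial^2 o(\lambda_i,T)/\partial C^2$ is governed by comparing $\partial^2 T/\partial C^2$ against a term of the form $g(\lambda_i)\,(\partial T/\partial C)^2$, where $g$ is increasing in $\lambda$; positivity of the second derivative corresponds to $\partial^2 T/\partial C^2$ exceeding this threshold.

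First, for LRU I would divide~\eqref{eq:concavity_proof_eq1} by $\lambda_i e^{-\lambda_i T}>0$, obtaining that $\partial^2 o(\lambda_i,T)/\partial C^2>0$ if and only if $\partial^2 T/\partial C^2 > \lambda_i (\partial T/\partial C)^2$. Since $\lambda_j\le\lambda_i$ and $(\partial T/\partial C)^2\ge 0$, the threshold on the right can only decrease when $\lambda_i$ is replaced by $\lambda_j$, so the inequality for $f_i$ immediately yields $\partial^2 T/\partial C^2 > \lambda_j (\partial T/\partial C)^2$; applying the same equivalence to $f_j$ gives $\partial^2 o(\lambda_j,T)/\partial C^2>0$. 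For FIFO and Random I would instead divide~\eqref{eq:concavity_proof_eq2} by $\lambda_i/(1+\lambda_i T)^2>0$, so that $\partial^2 o(\lambda_i,T)/\partial C^2>0$ if and only if $\partial^2 T/\partial C^2 > \frac{2\lambda_i}{1+\lambda_i T}(\partial T/\partial C)^2$. Because $\lambda\mapsto\frac{2\lambda}{1+\lambda T}$ is increasing in $\lambda$ for fixed $T\ge 0$, we have $\frac{2\lambda_j}{1+\lambda_j T}\le\frac{2\lambda_i}{1+\lambda_i T}$, and the same passage from $f_i$ to $f_j$ goes through.

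I do not expect any real obstacle: this statement is the positivity-counterpart of Lemma~\ref{lemma:concavity_proof_1}, and the only point requiring care is the direction of the monotonicity argument. When asking when the second derivative is \emph{positive}, the threshold must be made \emph{smaller} to preserve the strict inequality, which is exactly why the implication runs from the larger rate $\lambda_i$ to the smaller rate $\lambda_j$ (in contrast to Lemma~\ref{lemma:concavity_proof_1}, where it ran the other way). The argument tacitly uses $\partial T/\partial C\ge 0$ and $\partial^2 T/\partial C^2\ge 0$ established earlier in this appendix, but for the implication itself only the non-negativity of $(\partial T/\partial C)^2$ and the $\lambda$-monotonicity of $g$ are actually needed.
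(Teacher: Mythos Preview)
Your proposal is correct and is exactly what the paper intends: it simply states that Lemma~\ref{lemma:concavity_proof_2} ``can be proved with similar steps as in proof of Lemma~\ref{lemma:concavity_proof_1},'' and you have carried out those steps faithfully, reducing the sign of $\partial^2 o/\partial C^2$ to a comparison between $\partial^2 T/\partial C^2$ and $g(\lambda)(\partial T/\partial C)^2$ with $g$ increasing, and then using $\lambda_j\le\lambda_i$ to push the strict inequality from $f_i$ down to $f_j$. Your write-up is in fact more careful than the paper's own sketch of Lemma~\ref{lemma:concavity_proof_1}, where the roles of $i$ and $j$ are slightly garbled in the displayed implications.
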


Lemma~\ref{lemma:concavity_proof_2} can be proved with similar steps as in proof of Lemma~\ref{lemma:concavity_proof_1}.

Now, starting with the equation

\[C=\sum_{i}o(\lambda_i, T),\]

\noindent and taking derivative with respect to $C$ from both sides of the equation, we obtain

\[0=\frac{\partial^2 o(\lambda_i, T) }{\partial C^2}\]

Cache hit rate is expressed as

\[h=\sum_i \lambda_i o(\lambda_i, T)\]

\noindent and taking the derivative with respect to cache size from both sides of the equation we obtain

\begin{equation}
  \label{eq:hitrate_first_derivative}
  \frac{\partial h}{\partial C}=\sum_i \lambda_i\frac{\partial o(\lambda_i, T)}{\partial C}\ge 0
\end{equation}

\noindent Taking a second derivative yields

\[\frac{\partial ^2 h}{\partial C^2}=\sum_{i}\lambda_i \frac{\partial ^2 o(\lambda_i, T)}{\partial C^2}.\]

\noindent Since $\sum_{i} \frac{\partial ^2 o(\lambda_i, T)}{\partial C^2}=0$, Lemma~\ref{lemma:concavity_proof_1} and~\ref{lemma:concavity_proof_2} imply that

\begin{equation}
  \label{eq:hitrate_second_derivative}
\frac{\partial ^2 h}{\partial C^2}=\sum_{i}\lambda_i \frac{\partial ^2 o(\lambda_i, T)}{\partial C^2} \leq 0  
\end{equation}

From~\eqref{eq:hitrate_first_derivative} and~\eqref{eq:hitrate_second_derivative}, we conclude that hit rate is an increasing concave function of the cache size.

\section{Proof of Theorem~\ref{thrm:no_splitting}}
\label{sec:proof_thm1}
We have $h_k=\sum_{m=1}^{M}\sum_{i=1}^{N_k} {\lambda_{ik}p_{km}}o_{ik}(\lambda_{ik},C_{km})$. Since $o_{ik}(\lambda_{ik},C_{km})$ is increasing in $C_{km}$ and $\lambda_{ik}$ is constant, it can be seen that $h_k$ is increased by moving requests from one cache with a small $o_{ik}$ to the cache with a larger one, i.e., by moving requests from cache $m$ to cache $s$ if $o_{ik}(\lambda_{ik},C_{km}) < o_{ik}(\lambda_{ik},C_{ks})$. The largest value of $h_k$ can be obtained by routing all requests of CP $k$ to the cache with the largest $o_{ik}$, or equivalently, to the cache from which CP $k$ is allocated the largest slice.

\section{Proof of Theorem~\ref{thrm:concavity}}
\label{sec:proof_thm2}
Given a routing configuration, let $CP(m)$ be the set of CPs that route requests to cache $m$. When all requests of each CP are directed to one cache it connects, we can reformulate problem~\eqref{eq:multi_cache_model_reformulated} as:

\[ \text{maximize } \sum_{m=1}^M \sum_{k \in CP(m)}w_kU_{k}(h_{km}(C_{km},1)) \]

\begin{equation}\label{eq:reformulate_problem}
\begin{aligned}
  \sum_{k \in CP(m)} C_{km} \leq C_m,  \quad  \forall m \\
\end{aligned}
\end{equation}

Since the objective function of the above problem is separable, and the variables are not coupled, problem~\eqref{eq:reformulate_problem} can be further decomposed into $M$ subproblems, one for each cache. For cache $m$, the subproblem becomes:

\[ \text{maximize } \sum_{k \in CP(m)}w_kU_{k}(h_{km}(C_{km},1)) \]

\begin{equation}\label{eq:reformulate_subproblem}
\begin{aligned}
  \sum_{k \in CP(m)} C_{km} \leq C_m \\
\end{aligned}
\end{equation}

\begin{lemma}
\label{lemma:concavity}
The single-cache resource allocation problem~\eqref{eq:reformulate_subproblem} has a unique optimal solution.
\end{lemma}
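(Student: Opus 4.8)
The plan is to treat existence and uniqueness separately, both flowing from the concavity already established in Theorem~\ref{thrm:concavity} together with compactness of the feasible set. For \textbf{existence}, I would observe that the feasible region $\{(C_{km})_{k\in CP(m)}:\ C_{km}\ge 0,\ \sum_{k\in CP(m)}C_{km}\le C_m\}$ is a nonempty, compact, convex polytope (a simplex), and that the objective $\sum_{k\in CP(m)}w_kU_k(h_{km}(C_{km},1))$ is continuous: each $o_{ik}$, hence each $h_{km}(\cdot,1)$ via the characteristic-time equation \eqref{eq:cache_size}, is continuous, and each $U_k$ is continuous. The Weierstrass extreme value theorem then guarantees that a maximizer is attained.

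For \textbf{concavity}, the key is composition. By Theorem~\ref{thrm:policy_concavity}, $h_{km}(C_{km},1)$ is concave and nondecreasing in $C_{km}$, and since $U_k$ is concave and nondecreasing, the composition $U_k(h_{km}(C_{km},1))$ is concave in the scalar $C_{km}$ (a nondecreasing concave function of a concave function is concave). Multiplying by $w_k>0$ and summing over $k\in CP(m)$, where the summands depend on disjoint blocks of coordinates, the objective is jointly concave in $(C_{km})_{k\in CP(m)}$. Together with convexity of the feasible set this makes \eqref{eq:reformulate_subproblem} a concave program, so the optimal set is nonempty and convex; every local maximizer is global.

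The main obstacle is \textbf{uniqueness}, since plain concavity only yields a convex optimal set, not a singleton. I would upgrade concavity to strict concavity of each summand $g_k(C):=U_k(h_{km}(C,1))$ along its own coordinate. Differentiating twice, $g_k''=U_k''(h_{km}')^2+U_k'h_{km}''$; the second term is $\le 0$ by Theorem~\ref{thrm:policy_concavity}, and the first is strictly negative whenever $U_k$ is strictly concave and $h_{km}$ is strictly increasing (which holds, as $\partial h_{km}/\partial C>0$ from the characteristic-time model for finite $C_m$ and positive request rates). In the borderline case where $U_k$ is merely affine (e.g.\ $U_k(h)=h$), I would instead use that $h_{km}$ is itself strictly concave: from \eqref{eq:concavity_proof_eq1}--\eqref{eq:concavity_proof_eq2} combined with $\sum_i\partial^2 o_{ik}/\partial C^2=0$, one obtains $\partial^2 h_{km}/\partial C^2=\sum_i\lambda_{ik}\,\partial^2 o_{ik}/\partial C^2<0$ as soon as the popularities $(\lambda_{ik})_i$ are not identical across files, which is the case for the Zipf-type distributions considered here.

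Either way, every $g_k$ is strictly concave on $[0,C_m]$, so the separable objective is strictly concave on the feasible polytope, and a strictly concave function has at most one maximizer on a convex set; combined with existence this gives the unique optimal solution. If one prefers to avoid the case analysis, the cleanest route is simply to add strict concavity of $U_k$ to the standing assumptions, after which the composition argument alone closes the proof.
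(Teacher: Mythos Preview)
Your approach is essentially the paper's: concavity of each summand via the composition rule (nondecreasing concave $U_k$ applied to concave $h_{km}$), convexity of the feasible simplex, and hence a concave program with an optimal solution. The paper's own proof stops there and simply asserts uniqueness from concavity plus convexity of the feasible set; your additional work isolating \emph{strict} concavity of each $g_k$---via either strict concavity of $U_k$ or strict concavity of $h_{km}$ when the $\lambda_{ik}$ are not all equal---is more careful than the paper and is what actually closes the uniqueness claim.
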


\begin{proof}
Since $h_{km}$ is concave in $C_{km}$ and $U_{k}$ is also concave, the objective function
in \eqref{eq:reformulate_subproblem} is concave. Furthermore, as the feasible solution set is convex, a unique maximizer, called the optimal solution, exists.

\end{proof}

\section{Proof of Theorem~\ref{thrm:np_hardness}}
\label{sec:proof_NP_hardness}
We have proved that the optimal solution is such that each CP directs all requests to one cache. Now if we regard bins as caches, items as CPs, the budget of bins as the capacity of caches, the weight of items as the allocated cache amount to CPs, and the profit of each item as the resulting utility of each CP, then we can see that the Separable Assignment Problem (SAP~\cite{fleischer2006tight}) is a special case of Problem~\eqref{eq:multi_cache_model_reformulated} (note that the weights and profits in the considered problem are not fixed, while in SAP they are fixed). SAP is a general class of maximum assignment problem with packing constraints, and it has been proved to be NP-complete since the knapsack problem is its special case~\cite{fleischer2006tight}.


\end{document}